\algnewcommand{\OR}{\textbf{or~}}
\algnewcommand{\AND}{\textbf{and~}}
\algnewcommand{\Param}[1]{%
	\State \textbf{Parameter:}
	\Statex \hspace*{\algorithmicindent}\parbox[t]{.8\linewidth}{\raggedright #1}
}
\algnewcommand{\GlobalState}[1]{%
	\State \textbf{Global state:}
	\Statex \hspace*{\algorithmicindent}\parbox[t]{.8\linewidth}{\raggedright #1}
}
\algnewcommand{\Output}[1]{%
	\State \textbf{Output:}
	\Statex \hspace*{\algorithmicindent}\parbox[t]{.8\linewidth}{\raggedright #1}
}
\algnewcommand{\Inputs}[1]{%
	\State \textbf{Input:}
	\Statex \hspace*{\algorithmicindent}\parbox[t]{.8\linewidth}{\raggedright #1}
}
\algnewcommand{\Initialize}[1]{%
	\State \textbf{Initialize:}
	\Statex \hspace*{\algorithmicindent}\parbox[t]{.8\linewidth}{\raggedright #1}
}
\begin{document}
\title{Mining Top-k Trajectory-Patterns from Anonymized Data
	}
%
%

\author{Anuj S. Saxena\inst{1} \and
Siddharth Dawar\inst{2} \and
Vikram Goyal\inst{2} \and
Debajyoti Bera\inst{2}
}

%

\institute{
National Defense Academy, Khadakwasla, Pune, India \\
\email{anujs@iiitd.ac.in}\\
 \and
Indraprastha Institute of Information Technology, Delhi, India\\
\email{\{siddharthd,vikram,dbera\}@iiitd.ac.in}}
\maketitle              
\begin{abstract} 
The ubiquity of GPS enabled devices result into the generation of an enormous amount of user movement data consisting of a sequence of locations together with activities performed at those locations. Such data, commonly known as {\it activity-trajectory data}, can be analysed to find common user movements and preferred activities, which will have tremendous business value. However, various countries have now introduced data privacy regulations that make it mandatory to anonymize any data before releasing it. This immediately makes it difficult to look for patterns as the existing mining techniques may not be directly applicable over anonymized data. User locations in an activity-trajectory dataset are anonymized to regions of different shapes and sizes making them uncomparable; therefore, it is unclear how to define suitable patterns over those regions. In this paper, we propose a  top-k pattern mining technique called TopKMintra that employs a pre-processing step to transform anonymized activity-trajectory into an intermediate representation that address the above problem. Unlike usual sequential data, activity-trajectory data is 2-dimensional that can lead to generation of duplicate patterns. To handle this, TopKMintra restricts arbitrary extensions in the two dimensions by imposing an order over the search space; this also helps in addressing the common problem of updating the threshold in top-k pattern mining algorithms during various stages. We also address the issue to raise the initial threshold that avoid exploring substantial amount of invalid candidates. We perform extensive experiments on real datasets to demonstrate the efficiency and the effectiveness of TopKMintra.
Our results show that even after anonymization, certain patterns may remain in a dataset and those could be mined efficiently.

\keywords{Trajectory Patterns, Sequential Pattern Mining, Data Privacy, Anonymized trajectory, Anonymized database.}
\end{abstract}
\section{Introduction}
Are you a company in the business of data crunching? Do you believe in user's privacy to keep your users-base intact? Do you follow General Data Privacy Regulations to avoid heavy penalties? To remain in the business days are not far when partner companies would not provide their customer's data in raw format. Instead, they would first anonymize the data before sharing it due to privacy regulations. In the last decade, there has been a lot of work in the area of data privacy enforcement \cite{ardagna07,bamba08,beresford03,chow07,mokbel08,gruteser03,Hoh05,dummy05,Machanavajjhala07,Samarati01,xu08,You07}, and as a result, there are many tools available that can anonymize a given dataset to enforce privacy in a customizable manner. However, there is a lack of tools and techniques that can extract useful knowledge from the anonymized data \cite{Mintra16}, partly due to the challenge that a generic tool that cuts across multiple anonymization techniques and extracts a diverse set of information may not at all be possible.

Imagine travelers that are visiting several places as part of a single journey. An activity-trajectory data they generate consists of the geographic coordinates (referred to as spatial data) of the important locations visited by several people along with the textual descriptions of the activities (referred to as categorical data) partook by them at each location. Such datasets are generated by many service providers that track locations and are fairly commonplace due to the availability of mobile, GPS and internet services at very cheap costs. Erstwhile researchers have demonstrated the use of activity-trajectory data for a personalized recommendation \cite{ZHENG12}, user behaviour analysis \cite{BIERLAIRE08,Rai07,Zheng08}, traffic management \cite{Horvitz05,HERRERA10,Goh12}, disaster recovery \cite{MONTOYA03,MANSOURIAN06}, route recommendation \cite{Dai15,Kurashima10}, etc. However, growing user concerns about privacy threat in sharing too much personal information such as locations, activities performed, etc., while availing services jeopardize their future. This has forced business ventures to adopt privacy-preserving techniques while rendering service to protect the data-privacy needs of a user.

 A commonly used technique to achieve data-privacy is `anonymization'~\cite{ardagna07,Samarati01,gruteser03,nergiz2008towards} that hide sensitive information by generalizing data attributes and thereby delinking the precise information with the unique user identifier. The salient feature of these techniques is to group multiple similar records from the data into a publishable record that blur the sensitive information, which for our scenario are users' visited locations and activities performed at those locations.  What this means for us is that every published anonymized-trajectory is produced by grouping at least a minimum number of activity-trajectory sequences. We observe that most of the privacy-preserving techniques that are proposed along these lines have been designed to enforce a certain privacy notion whereas the other important goal of data utility is either overlooked or not considered practically. It is rather assumed that an optimal (or a sub-optimal) grouping of records which results into comparatively less information loss may provide reasonable utility. Due to this limitation, despite ease in ensuring privacy, the anonymization based techniques were phased out by differential privacy approaches which ensures privacy by adding noise using a predefined distribution. The advantage of the differential privacy over anonymization is that the former provides a formal guarantee over the data usage. However, ensuring an addition of a controlled noise for a differential private query is challenging and may not be possible always. This suggest that by proposing a data utility model for anonymized data, an alternate framework can be designed for those queries that may not be easily protected through differential privacy. Our proposed framework is one such model that can be used to assess the utility of the data in the context of pattern mining from an anonymized activity-trajectory data. In our mining framework, we assume that the data is anonymized using popular location privacy enforcement techniques, viz., location obfuscation~ \cite{ardagna07,gruteser03,OUR_ICISS}, $k$-anonymity~\cite{gedik08,nergiz2008towards,xu08}, $l$-diversity~\cite{Machanavajjhala07,Saxena13}, etc. Some of the current authors proposed a preliminary answer to this problem in an earlier work,  named Mintra \cite{Mintra16}, that could mine common patterns from an anonymized activity-trajectory data. In this paper, we propose a framework `TopKMintra' for mining \textit{top-k most relevant trajectory-patterns}.  To the best of our knowledge, no other work exists in this direction.

Our mining framework is based on the assumption that frequent patterns in the actual trajectory data will be contained in many anonymized trajectories, and therefore, can be extracted as a frequent common pattern from the anonymized data. For this it is required to define patterns over the anonymized data containing  actual trajectory patterns, and to provide a computational framework to measure their frequency in the anonymized data. These common patterns may have high resolution\footnote{The resolution of a sequence of the boxed-locations (or blurred locations) is defined as the average area of the boxed-locations.} and may not be of much use. Therefore, while computing the frequency of the patterns, we need to give more weight to the finer frequent patterns. To motivate further on the mining objective consider the following example.

\begin{figure*}[!ht]
	\begin{minipage}{.5\textwidth}
		\centering \includegraphics[width=0.9\textwidth]{./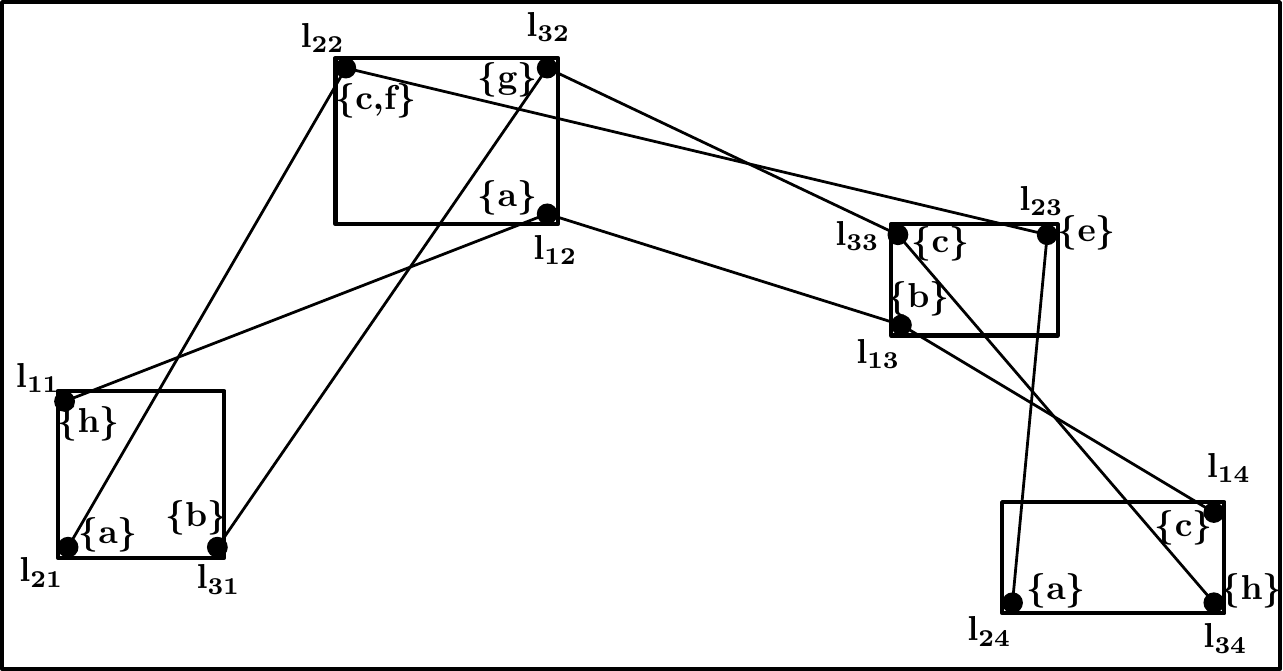}
		\caption{\label{fig:example} An anonymized annotated trajectory}
	\end{minipage}%
	\begin{minipage}{.5\textwidth}
		\centering
		\includegraphics[width=1\textwidth]{./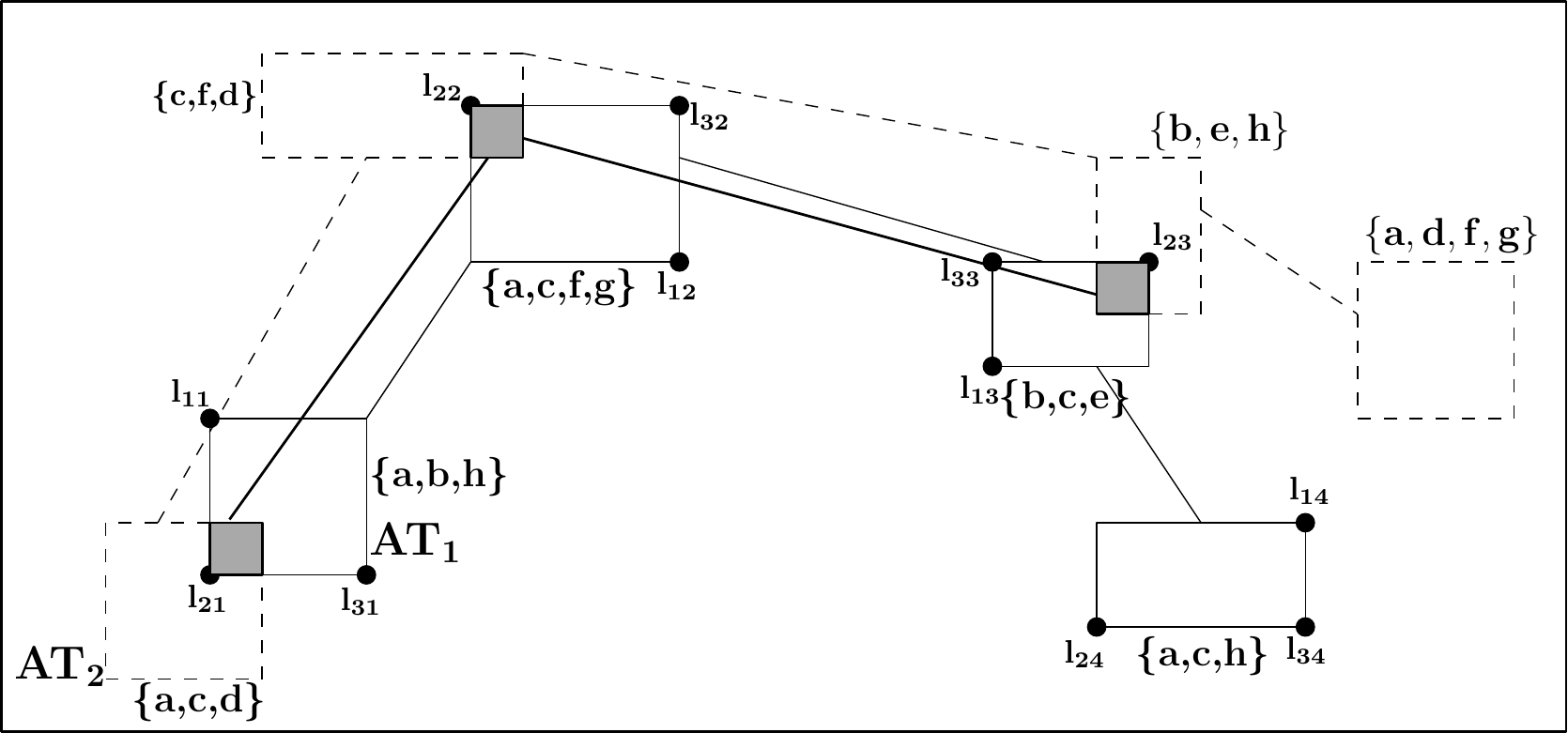}
		\caption{\label{fig:trajPattern} Example of a trajectory-pattern}
	\end{minipage}
\end{figure*}
\noindent \textbf{Example:} Fig.\ \ref{fig:example} shows an anonymized trajectory of length four obtained by grouping three users trajectories. The first record of an anonymized trajectory shows a {\it minimum bounding rectangle} (MBR) region of three user locations $\{l_{11},l_{21},l_{31}\}$ and their activities $\{\mathrm{a,b,h}\}$. After privacy enforcement, the database would have a set of such anonymized trajectories. The following questions arise regarding the utility of anonymized data: {\it Given previous route and current location of a user what activity and venue can be recommended to the user?}, {\it What routes are highly preferred by users who perform a certain set of activities?}, {\it Given a route query of a user what activities can be recommended to the user?} 

 We believe that the patterns can be mined from anonymized trajectory data that may help to answer questions similar to the given above. However, the accuracy of the answer would depend highly on the quality of the mined patterns. Fig.~\ref{fig:trajPattern} shows a common pattern that we intend to mine and in this case it is obtained from the two anonymized trajectories $AT_1$ \& $AT_2$ as a sequence of common shaded regions along with the most likely activities performed therein, e.g., $\{\mathrm{a}\}$, $\{\mathrm{c,f}\}$ and $\{\mathrm{b,e}\}$. If this pattern is contained in many other anonymized trajectories also, it is highly likely that many users prefer to visit a location in the first shaded region to perform activity `$\mathrm{a}$' then visit a location in the second shaded region and perform activity `$\mathrm{c}$' or `$\mathrm{f}$' and so on. We can also conclude that the activity `$\mathrm{a}$' takes place in the first shaded region -- a drastic improvement from the view portrayed by the anonymized trajectories $AT_1$ or $AT_2$. Our mining goal is to find such sufficiently finer patterns whose regions \& activities are part of many anonymized trajectories.

Trajectory data being sequential in nature, many sequence mining algorithms have been adapted for mining trajectory patterns. In general, the techniques for sequence mining can be classified into two categories, i.e., (1) frequent sequence pattern mining (FSM) \cite{han2000freespan,han2004mining,pei2001prefixspan}, and (2) high-utility sequence pattern mining (HSM) \cite{array_eswa,mmu,tseng2010up,inc_huspm,kais,yin2012uspan,icdm}. FSM techniques consider only the presence and absence of an item in a sequence and mines patterns with a frequency greater than a threshold. Whereas, HSM techniques consider importance/relevance of an individual item and mines high utility patterns. Both FSM and HSM have been adapted over trajectory data recently \cite{arya15,cao2005mining,giannotti2007trajectory,tsoukatos2001efficient}. However, the same can not be employed directly to mine patterns from anonymized data. There are several challenges to this problem compared to trajectory data mining.

{\bf Challenges:}   First, each record of an anonymized trajectory is in the form of a pair consisting of a spatial-region (a location is blurred to a spatial region) and an activity-set. It is not possible to compare two spatial regions for equality whereas they can be checked for containment or overlapping; this is in contrast to the classical FSM and HSM techniques where two items ought to be comparable. There are other problems arising of the anonymized nature of the data. Pattern mining techniques require a notion of ``weight of a pattern'' --- it is further desired that to support early termination weight should satisfy the downward closure property\cite{Agrawal1994} (discussed in detail in Section \ref{sec:top-k}); however, such a notion is not trivially defined for anonymized data. Secondly, the activity-trajectory data has a complex structure being two dimensional (namely, spatial and categorical). On the other hand, the existing FSM and HSM techniques consider only one-dimensional data, i.e., categorical data. It is, therefore, challenging to adapt existing pattern mining algorithms to mine activity-trajectory data. Thirdly, similar to the classical techniques for pattern mining, the trajectory-pattern mining over anonymized data has to deal with the issue of choosing a suitable threshold value for finding enough patterns.

We model the anonymized trajectory mining problem as a high-utility sequence pattern mining task. For this, we discretize anonymized regions as a set of smaller regions named as cells. Any mined location in a pattern can not be smaller than a cell, and therefore, defining the size of a cell is crucial which we will discuss in  section \ref{sec:summary}. However, the basic understanding of a cell is that it is considered as a possible region of activity. We associate an activity with a cell based on the frequency of the (cell, activity) pair in the anonymous data. In the cell representation of an anonymized region, we associate a weight with each cell that is the fraction of the cell region overlapped with the anonymized region. Assuming that activities within an anonymized region are uniformly spread over the region, we give a practical model to measure the total overlap ratio of a pattern over an anonymized data. We named this weight of a pattern as its relevance score. We discuss more detail on this in section \ref{sec:relevance}. All of this gives a score function that associates high score with patterns that are finer and are contained in many anonymized trajectories, thus convert the pattern mining over anonymized data into a high-relevance sequential mining.  

In a preliminary version of this work, the authors proposed a technique to mine all patterns~\cite{Mintra16} for a user provided threshold. However, that technique has a fundamental limitation that is common to pattern mining algorithms, i.e., they either produce a large number of patterns or very few patterns due to the difficulty in choosing a suitable threshold value for different datasets. Therefore, in this work, we propose a top-k solution for our scenario, namely TopKMintra. Many top-k versions of FSM and HSM algorithms have been designed that raise the utility threshold during the search space exploration. However designing similar techniques for a 2-dimensional data space such as ours is not straightforward. For this, we first adopt a couple of strategies proposed in top-k high utility pattern mining literature \cite{icdm,tks}, namely {\it Threshold Initialization (TI)} and {\it Threshold Updation (TU)}. The TI strategy basically pre-inserts some patterns as candidates to raise the initial threshold before starting the mining process. As the search space is not fixed and depends upon the varying threshold value thereby not starting with a zero threshold value a substantial amount of invalid candidates can be avoided. The TU technique maintains an order over the search space by adding those candidates early that may sharply increase the threshold value during the execution. This curtails the search space and may find top-k patterns efficiently. In addition, we have also adopted a couple of pruning strategies, namely {\it depth pruning} and {\it width pruning}. 

We experimentally evaluate the performance of TopKMintra over real datasets and study the impact of our proposed heuristics that are (i) TI, i.e.,initialization of threshold to prune the search space and (ii) TU, i.e., sorting of encoded small regions (cells) as per their relevance score across the database. 
We also compare TopKMintra with the state-of-the-art approach MintraBL which is an adapted version of the USpan algorithm \cite{yin2012uspan} designed for mining top-k patterns from the activity-trajectory data and also analyze the effect of the above two mentioned heuristics by adopting it in the baseline algorithm as MintraBL+i  (TI with baseline) and MintraBL+s (TU with baseline). 

\noindent {\bf Contributions:} The specific contributions of this paper are:
\begin{enumerate}
	\item We come up with a novel problem of mining anonymized activity trajectory data where the users trajectories are anonymized using popular $k$-anonymity and $l$-diversity privacy enforcement techniques.
	\item We introduce TopKMintra, a technique to mine top-k patterns from anonymized trajectories data. TopKMintra uses a spatial encoding technique to model the problem as a high-utility sequence pattern mining problem. It uses heuristics for efficiently initializing and updating threshold value and for early termination. 
	\item We conduct extensive experiments with real datasets to establish the efficiency of TopkMintra. The experimental study shows that TopkMintra outperforms all the three baseline algorithms in terms of memory usage, the number of candidates explored and absolute clock time.
\end{enumerate}

\noindent {\bf Outline:} The outline of the paper is as follows. Section \ref{sec:RW} presents related work in the area of data-privacy techniques (Section~\ref{sec:rw:dpt}), sequential pattern mining (Section~\ref{sec:rw:spm}) and top-k high utility sequential pattern mining (Section~\ref{sec:rw:top-k_huspm}). In Section~\ref{sec:pps}, we introduce the activity-trajectory data (Section~\ref{sec:pps:td}), the anonymized-trajectory data (Section~\ref{sec:pps:atd}), problem definition stating the mining goal over the anonymized data (Section~\ref{sec:pps:ps}) and  the solution approach (Section~\ref{sec:summary}) stating the need for the preprocessing of the data, patterns that we intend to mine and the weight function to measure the worth of a pattern. In Section ~\ref{sec:TF}, we first use the preprocessing steps to convert the data into the wLAS-sequence form (Section~\ref{sec:preprocessing}), and then we discuss the computational framework for measuring the worth of a trajectory-pattern in a wLAS-sequence data (Section~\ref{sec:TPM} and Section~\ref{sec:relevance}). Our proposed framework can be employed to incrementally compute the relevance of the child node from the relevance of the parent node.  We present the top-k pattern mining approach $TopKMintra$ in Section~\ref{sec:top-k} that uses a couple of efficiency strategies to reduce the search space. Finally, we present our experiment results in Section~\ref{sec:exp}  where we compare our top-k approach with three baseline approaches.  We conclude in Section~\ref{sec:conclusion} with the possible future directions.     
%

\section{Related Work}
\label{sec:RW}
In this section, we first review studies on {\it data privacy techniques} with an emphasis on techniques protecting sensitive information in spatial queries through anonymization. As we intend to mine useful patterns from the anonymized-data generated through these privacy-preserving techniques, we next discuss models proposed for {\it sequential patterns mining} over spatial-data that might be close to our mining framework over anonymized data. Further, having identified the challenges of setting an appropriate threshold in data mining that can mine sufficient patterns, we aim for a top-k pattern mining framework. Thus a review of models on {\it the top-k framework for sequential pattern mining} is also briefly presented. 
 
\subsection{Data Privacy Techniques}
\label{sec:rw:dpt}
Enforcing privacy of personal information in public databases has been in focus since the 1990s~\cite{badge92}. The issue of privacy for a spatial database is very similar to the above in the sense that user's spatial queries contain sensitive information open to misuse, and in fact, research in location privacy adopt most of the concepts from the area of privacy in database publishing~\cite{Samarati01,Machanavajjhala07}. One among the initial problems that discuss the privacy over spatial data was proposed by Gruteser {\it et al.}~\cite{gruteser03} where they suggested modification in the spatial and the temporal information to achieve privacy. Since then many techniques have been proposed to protect the location information. Most of them mask the exact spatial location using one of the following: {\it imprecision} (i.e., by blurring the region)~\cite{dummy05,bamba08,gruteser03}, {\it inaccuracy} (i.e., by providing nearby landmark)~\cite{ji,MLY08} or {\it anonymization} (i.e., by generating bounding region containing several users)~\cite{chow06,casper06, gedik08}. This process of masking precise location makes it difficult to correlate a user with her location. 

Methods to protect privacy through imprecise or inaccurate locations have been discussed extensively by Chow {\it et al.}~\cite{mokbel08}. Kido{\it et al.}~\cite{dummy05} suggested introducing an imprecision by sending a set of false locations along with the exact user location. On the other hand, solutions that are based on anonymization do so based on the idea adopted from the database research: $k$-anonymity~\cite{Samarati01} and $l$-diversity~\cite{Machanavajjhala07}. Over a spatial database, the $k$-anonymity is achieved by reporting {\it a region as an anonymous location} of a user that contains at-least $k-1$ other users' location. The anonymization of location makes the location of the requesting user indistinguishable among $k$ user-locations in the anonymous region~\cite{chow06,hu09}. Over a spatial-textual date, the k-anonymity ensures location privacy, however, the activity information may get disclosed due to activity homogeneity attack. An improvement on $k$-anonymity that protects activity-privacy is proposed using  $l$-diversity~\cite{bamba08}, i.e., by maintaining at least $l$ different activities in the $k$-anonymous spatial query. Thus $k$-anonymity together with $l$-diversity may satisfy the location as well as the activity privacy needs of a user.

Mokbel {\it et al.}~\cite{casper06} proposed a grid-based algorithm to determine the optimal obfuscated area in his Casper framework. One of the notable works on location privacy through anonymization was the `Clique-Cloak' algorithm by Gedik {\it et al.}~\cite{gedik08} that suggested implementing personalized k-anonymity to optimize the obfuscated location information as per the individual privacy need. All of these models discuss the location privacy. Bamba {\it et al.}~\cite{bamba08} proposed the PrivacyGrid to preserve both $k$-anonymity and $l$-diversity.  Most of these earlier works address the snapshot query scenario where location information is disclosed one-time only for getting service. However, they may not necessarily ensure the privacy of a user who is moving continuously and reporting its location information from different places-- called as a {\it continuous query scenario.} 

The challenge in continuous query scenario is to ensure that the location of a user should not be exactly computable by correlating information from consecutive queries. Chow {\it et al.}~\cite{chow07} proposed two necessary properties namely \textit{cloaking region sharing} and \textit{memorization} that needs to ensure for providing privacy in continuous settings. 
As extensions of the notions of $k$-anonymity and $l$-diversity, other notions like historical $k$-anonymity and $m$-invariance impose other constraints of invariance of users and queries throughout the session, respectively. Dewri {\it et al.}~\cite{DewriRRW10} showed that for preserving query privacy in continuous query scenario by enforcing m-invariance is sufficient in the case of a location-aware adversary. However, for the case of location unaware adversary where location privacy is equally important, there is a need to enforce historical $k$-anonymity. In~\cite{xu08}, the authors consider past visited locations (footprints) of the user as well as trajectories of other users to anonymize the trajectory of the current user and enforce historical $k$-anonymity.

Beresford {\it et al.} tackle location privacy by proposing a concept called mix zones~\cite{beresford03}. Users in a mix zone are assigned identities in such a way that it is not possible to relate entering individuals with exiting individuals. Similar to mix zone, where user essentially does not get service to ensure location privacy, Saxena {\it et al.}~\cite{OUR_ICISS} considered skipping service once in a while towards the same goal. Their Rule-Based approach gave a heuristic which determined whether it is safe to take service or not at a particular location. Chow {\it et al.}~\cite{Chow2011} proposed a location anonymization algorithm that is designed specifically for the road network environment using $k$-anonymity, i.e., location is cloaked into a set of connected road segments of a fixed minimum total length and including at least $k$ users. Ghinita {\it et al.} \cite{GHINITA10} proposed protecting identity privacy by avoiding location based attacks and by using k-anonymity to mix user's identity with other nearby users. Lee {\it et al.}~\cite{lee11} presents a
location privacy protection technique that protects the location semantics by cloaking with semantically heterogeneous nearby locations. Xue {\it et al.}~\cite{Xue09} modeled the similar problem as location $l$-diversity by associating each query with at least $l$ different semantic locations, and proposed a solution by mapping observed queries to a linear equation of semantic locations. He formally justified privacy by relating privacy guarantee with existence of an infinite solutions in the resulting system of linear equations.

\subsection{Sequential Pattern Mining}
\label{sec:rw:spm}
Based on the current state-of-the-art pattern mining can broadly be divided into two classes-- frequent sequence pattern mining (FSM) and high-utility sequence pattern mining (HSM). The most essential and prior algorithm was proposed by Agrawal {\it et al.}~\cite{agrawal1995mining} in which they gave the {\it AprioriAll} algorithm which mines frequent patterns by candidates generation and test approach. The algorithm was based on {\it downward closure property} which states that a frequent pattern contains a sub-pattern that are in turn frequent. A list of algorithm based on data projection have been proposed which includes {\it FreeSpan} \cite{han2000freespan} and {\it PrefixSpan} \cite{pei2001prefixspan}. In these approaches, a sequence database is recursively projected into a set of smaller projected databases, and sequential patterns are grown in each projected database by exploring only locally frequent fragments. FreeSpan offers bi-level projection technique to improve performance. PrefixSpan is a projection-based pattern-growth approach for the mining of sequential patterns.  PrefixSpan outperforms FreeSpan in that only relevant postfixes are projected. Han {\it et al.}~\cite{han2000mining} propose a technique using a data structure called FP-Tree(Frequent pattern tree) which is an extended prefix tree structure for storing information required by depth-first search travel algorithm. SPADE algorithm by Zaki {\it et al.}~\cite{zaki2001spade} searches the lattice formed by id-list intersections and completes mining in three passes of database scanning. 
It may be noted that all these works focus mainly on itemsets sequences without any spatial dimension.

For high-utility itemset mining, several techniques have been proposed. Candidate generation and test-based techniques, such as Two-Phase~\cite{han2004mining} and IHUP~\cite{ahmed2009efficient}, run in two phases. Candidates generated in the first phase are being verified in the second phase. The efficiency of such an algorithm depends upon the candidate size in the first phase. Tseng {\it et al.}~\cite{tseng2010up}, in {\it UP-Growth} and {\it UP-Growth+}, have proposed an efficient data structure called UP-Tree to generate a candidate set that is much smaller than Two-Phase and IHUP. They have used heuristics to prune nodes and to decrease utility at nodes, and therefore results in less number of candidates.

Frequent sequence mining assumes that the items have binary presence or absence in the database and have equal importance. Ahmed {\it et al.}~\cite{ahmed} introduced the concept of high-utility sequence mining. High-utility sequence mining associates different importance with items and items occur in a sequence with different quantity. The authors proposed a level-wise algorithm UtilityLevel and a pattern growth approach UtilitySpan for mining high-utility sequential patterns in three database scans. UtilitySpan scans the database to find the 1-length promising sequences. The 1-length sequences are extended using the pattern growth approach to generate candidate high-utility sequences. A third database scan is performed to find the exact high-utility sequences from the candidates. Yin {\it et al.}~\cite{yin2012uspan} introduced the lexicographic tree structure to store sequences and extension mechanisms called i-concatenation and s-concatenation to grow the prefix pattern for the proposed USPAN algorithm. The authors proposed some pruning strategies to reduce the search space compared to UtilitySpan. Aklan {\it et al.}~\cite{crom} proposed another upper bound called Cumulated Rest of Match (CRoM) and an algorithm called HuspExt which estimates the utility of future extensions based on the current prefix. Le {\it et al.}~\cite{array_eswa} propose an array-based structure which takes less memory compared to the utility-matrix structure used by the USPAN algorithm. The authors propose an algorithm called AHUS-P which utilizes shared memory to parallelize the mining task. Some work has also been done on mining high-utility sequential rules \cite{seq_rules}, mining sequential patterns with multiple minimum utility thresholds \cite{mmu}, and incremental high-utility sequential pattern mining \cite{inc_huspm}.   

Koperski {\it et al.}~\cite{koperski1995discovery} extended the concept given by Srikanta {\it et al.} to spatial databases by introducing spatial predicates like close\_to and near\_by. Tsoukatos {\it et al.}~\cite{tsoukatos2001efficient} proposed an apriori-based method for mining spatial regions sequences. Cao {\it et al.}~\cite{cao2005mining} proposed a trajectory mining algorithm. Chung {\it et al.}~\cite{chung2004evolutionary} proposed an Apriori-based method to mine movement patterns, where the moving objects are transformed using a grid system, and then frequent patterns are generated level by level. Giannotti {\it et al.}~\cite{giannotti2007trajectory} focuses on a method for extracting trajectory patterns containing both spatial and temporal information. These works focus mainly on spatial and temporal dimension. Mining over Spatial-Textual data is being proposed for the first time by Arya {\it et al.}~\cite{arya15}. They have studied mining over the two dimension data in three ways -- namely Spatial-Textual (Spatial in the first Phase then Textual), Textual-Spatial (Textual in the first phase then Spatial) and Hybrid (Both dimensions simultaneously).

\subsection{Top-k High-Utility Sequential Pattern Mining}
\label{sec:rw:top-k_huspm}
The number of high-utility sequences output by the sequential pattern mining algorithm depends on the minimum utility threshold for a dataset. A large number of sequences can be produced if the threshold is set too low and vice-versa. The characteristics of the database need to be analyzed by the user to set a threshold for getting a reasonable number of patterns. Yin {\it et al.}~\cite{icdm} introduced the problem of mining top-k high utility sequential patterns and an algorithm based on USPAN. Top-k mining algorithms start with a threshold zero and update it during the mining process only. However, it is beneficial to raise the threshold during pre-processing to remove unpromising items before exploring the search space. Yin {\it et al.} proposed a pre-insertion strategy to insert 1-length and full sequence in the top-k buffer. The order in which search space is explored doesn't matter for high-utility sequence mining as result will remain the same. However, the order matters for top-k algorithms as processing a branch in the search space with a higher possibility of getting a top-k pattern will raise the threshold quickly. The authors proposed another strategy to sort the extensions of a sequence in decreasing order according to Sequence-Projected Utilization. 

Wang {\it et al.}~\cite{kais} proposed an algorithm called TKHUS-Span to find top-k high-utility sequential patterns. Search strategies like Guided Depth-first Search, Best-first search, and Hybrid search are proposed to intelligently explore the search space. The authors design a utility-chain structure for faster computation compared to the lexicographic tree employed by USpan. A prefix extension utility strategy is also proposed to bound the utility of any extension of the current prefix.

\section{Preliminaries, Problem Statement and Solution Approach}
\label{sec:pps}
This section presents preliminaries, defines the problem of top-k trajectory-pattern mining over an anonymized activity-trajectory data and present a solution approach. For this, we first discuss activity-trajectory data and its anonymization. Next, we discuss the sequential patterns that are of interest over the activity-trajectory data. However, dissociation of location and activity information in the anonymized data makes the mining of such a pattern challenging. To address this issue, we suggest a solution that encodes an anonymous-trajectory data as a weighted-location activity-set trajectory data (detail in section \ref{sec:preprocessing}), and then by defining trajectory-patterns associate a weight with the patterns in the encoded data. The weight of a trajectory-pattern is named as its relevance score. This reduces the problem of finding useful patterns from the anonymized-data into {\it high relevant pattern mining} over encoded data. We now present the necessary theory that is required for understanding our mining framework.

\subsection{Trajectory Database}
\label{sec:pps:td}
Let ${\cal L}=\{l_1,l_2,\ldots l_m\}$  denote the collection of all the locations on the map at which some of the activities from the set $I=\{i_1,i_2,\ldots, i_n\}$ can be performed. For us, a location $l \in {\cal L}$  is a discrete spatial point $(latitude,longitude)$ that represents a point of interest such as a store, a restaurant, a movie theater, etc. A user is allowed to perform multiple activities at one location. For example, whilst in a shopping mall, one can perform activities such as purchasing items from an outlet, watching a movie in a multiplex, ordering eatables, etc. The set of activities performed by a user at a location is named as {\it activity-set} and is denoted by $X (\subseteq I)$. Thus, an {\it activity-trajectory} of a user represents her movement along locations from ${\cal L}$ together with activities performed at those locations. Formally, 
\begin{definition}[Activity-Trajectory]
	An activity-trajectory ${\cal T}$ of length $t$ is a sequence of (location, activity-set) pair i.e., $${\cal T}= \langle (l_1,X_1) (l_2,X_2) \ldots (l_t,X_t) \rangle \qquad l_i\in{\cal L} \mbox { and } X_i \subseteq I \mbox{ for all } i\in\{1,2,\ldots,t\}$$
\end{definition}
For pattern mining, two patterns having the same (location, activity) sequence but different time-stamp are considered same, and therefore, we do not associate a time-stamp with the (location,activity-set) pair in the activity-trajectory. Our focus in this work is on activity-trajectory datasets only and therefore, for the sake of brevity in the rest of the paper, we will call an activity-trajectory as a  {\it trajectory} and an activity-trajectory database as a {\it trajectory database} (denoted by $\cal D$). In Fig.~\ref{fig:example}, $\langle(l_{11},\{\mathrm{h}\})\;(l_{12},\{\mathrm{a}\})$ \\$\;(l_{13},\{\mathrm{b}\})\;(l_{14},\{\mathrm{c}\})\rangle$ is one such trajectory of a user. 

Nowadays, users are more concerned about their privacy and prefer disclosing sensitive information like locations,  activities, etc., only in a privacy-preserving manner. For example, a user may consider not disclosing an activity associated with a particular location that may reveal her private association such as health status, liking, etc. To satisfy such privacy needs, it is desired that the database should be published only after anonymizing the sensitive parameters. 
\subsection{Anonymized Trajectory Database}
\label{sec:pps:atd}
We assume that there exists a privacy-preserving technique such as {\it grid-based location blurring~\cite{Gidofalvi07}, $k$-anonymity~\cite{nergiz2008towards,Abul08,Samarati01}, $l$-diversity~\cite{Saxena13, bamba08}}, etc., that anonymize trajectory-data to satisfy the privacy need of a user. We denote the {\it anonymized trajectory}  by ${\cal AT}$ and the {\it anonymized trajectory database} by ${\cal PD}$.
\begin{definition}[Anonymous-Trajectory]
\label{anontraj_d}
For a trajectory ${\cal T} \in {\cal D}$, where $ {\cal T} = \langle (l_i, X_i)\rangle _{i=1}^n$, its anonymization is  $${\cal AT}= \langle (AL_k,AX_k)\rangle_{k=1}^m~\qquad(m\le n)$$ where $AL_k$ is a spatial-region (called as anonymous location) and $AX_k$ is a set of activities (called as anonymous activity-set), if there exists a sub-trajectory ${\cal T}^\prime=\langle (l_{i_k}, X_{i_k})\rangle _{k=1}^m$ of ${\cal T}$ such that the following are satisfied:    
	\begin{enumerate}
		\item {\it (Location containment constraint)} Every location of the sub-trajectory ${\cal T}'$ must be contained in the corresponding anonymous location of ${\cal AT}$ i.e.,  $l_{i_k} \in AL_k$ for each k such that $1\le k \le m$.
		\item {\it (Location privacy constraint)} Each anonymous-location $AL_k$ should satisfy the location privacy criteria as stated by the privacy definition.
		\item {\it (Activity containment constraint)} Every activity-set of the sub-trajectory ${\cal T}'$ must be contained in the corresponding anonymous activity-set of ${\cal AT}$, i.e.,  $X_{i_k} \subseteq AX_k$ for each $k$ such that $1\le k \le m$.
		\item {\it (Activity privacy constraint)} Each anonymous activity-set $AX_i$ should satisfy the activity privacy criteria as stated by the privacy definition. 
	\end{enumerate}	
\end{definition}
Location and activity privacy constraints, as in the point $2$ and in the point $4$ above,  
may vary for different privacy enforcement techniques. For example, anonymization satisfying $k$-anonymity as a location privacy constraint and $l$-diversity as an activity privacy constraint requires that each anonymous-trajectory ${\cal AT}$ is an anonymization of a collection of trajectories from ${\cal D}$ (called an {\it anonymity-set}) such that each anonymous-location $AL_i$ in ${\cal AT}$ contains at least $k$ distinct locations from trajectories in the anonymity-set and each anonymous activity-set $AX_i$ contains at least $l$ distinct activities along those locations in $AL_i$. 
In Figure\ \ref{fig:example},   $\langle(AL_1,\{\mathrm{a,b,h}\})$ $(AL_2,\{\mathrm{a,c,f,g}\})$ $(AL_3,\{\mathrm{b,c,e}\})$ $(AL_4,\{\mathrm{a,c,h}\})\rangle$ is a 3-anonymous and 3-diverse activity-trajectory. Here $AL_1$ (shown as a rectangular box) is a minimum bounding rectangle ($MBR$) that anonymize locations $l_{11},l_{21}$ and $l_{31}$ along the three user trajectories and $AX_1=\{\mathrm{a,b,h}\}$ is the respective anonymous activity-set such that $|AL_1|\ge 3$ and $|AX_1|\ge 3$. Similarly, it can be seen that the other anonymous location and anonymous activity-set also satisfies the 3-anonymity and the 3-diversity condition. $k$-anonymity and $l$-diversity ensure the location and the activity privacy by making it difficult for an adversary to associate an actual location and an actual activity with the user trajectory based on the disclosed anonymized dataset with a high probability. 

Though anonymization provides privacy, it affects the data-usability. For example, mining useful information from anonymized activity-trajectories such as frequently visited trajectories for path recommendation, frequent (location,activity) pair for point of interest, etc., may not be trivial due to dissociation of location and activity attributes. Anonymization based privacy-preserving techniques (as in Definition~\ref{anontraj_d} above)~\cite{Abul08, bamba08,DewriRRW10, Gidofalvi07, Hoh05, nergiz2008towards, xu08,Saxena13, Machanavajjhala07, You07, kalnis07} mostly overlooked this usability part where it is assumed that reasonable utility can be achieved by keeping an optimal or a suboptimal blurring level while anonymizing. It remains mostly unclear whether the anonymized data is of much use.

\subsection{Problem Statement: Knowledge Mining over Anonymous-trajectory}
\label{sec:pps:ps}
An ideal mining goal over a trajectory data would be to retrieve a sequence of (location, activity) pairs taken by many users, i.e., `frequent activity-trajectory patterns'. As discussed,  dissociation of location, activity and their sequentiality in anonymized data makes the mining of such patterns challenging. For example, from data containing anonymous trajectories similar to $AT_1$ in Figure \ref{fig:trajPattern}, it is not trivial to mine a frequent pattern such as $\langle (l_{21},a)(l_{22},c)$ $(l_{23},e)\rangle$. However, {\it if such a trajectory-pattern is frequent, it must be a part in many other anonymous trajectories also.} This raises a question as to `{\it how can we mine frequently visited sequential-pattern that is common across several anonymous-trajectories in the anonymized data}'. 

We now give an outline of our mining framework. We assume that there exists an anonymization technique which generates anonymized data in the form as defined in Definition \ref{anontraj_d}. We also emphasize that the proposed mining framework do not link user or its trajectory with the mined patterns with any better probabilistic guarantee than in the anonymized data, i.e., the privacy of a user remains unaffected with the mined knowledge that predict frequent location-activity pairs and their sequentiality. 

\subsection{Solution Approach: Summary of the framework}
\label{sec:summary}
Our solution for pattern mining from an anonymized data is based on the assumption that the actual patterns of an activity-trajectory data that are frequent will be contained in many anonymized trajectories after anonymization, and therefore, can be extracted as a common knowledge within those anonymized trajectories. The challenges, however, is to come up with a computational framework that can mine common regions having similar activities, and the sequences of such frequent region-activity pairs from the anonymized data. As the actual user activity-trajectory data is not known due to anonymization, the best we can do is to predict the most likely patterns of location-activity pair from the anonymized data that are taken by many users. For such a pattern-mining task over an anonymized-trajectory data, first we propose an encoding of a data into a {\it weighted Location-Activity-Set sequence (wLAS-sequence)} format, and thereafter, by defining patterns over the wLAS-sequences, we propose its relevance score as a cumulative overlap ratio of the pattern over the trajectories in the anonymized data. The mined patterns from the anonymized data that have high total overlap ratio (or relevance score) are most likely the patterns containing the actual frequent location-activity sequential patterns from the users activity-trajectory data.  We now discuss the steps of our solution approach.

We first encode the anonymous data into wLAS-sequence data. For this, we discretize the region ${\cal R}$, containing all the anonymous-trajectories in ${\cal PD}$, by partitioning it into rectangular cells--called as cell-partition of ${\cal R}$. We denote the collection of all the cells in the region ${\cal R}$ by ${\cal P}=\{p_1,p_2,\ldots, p_n\}$. Formally,
\begin{definition}[Partition of a region] A finite collection ${\cal P}=\{p_1,p_2,\ldots p_n\}$ of cells in ${\cal R}$ is a partition of ${\cal R}$ if 
	\begin{itemize}
		\item (cells are mutually disjoint) $~p_r \cap p_s = \phi$, for any  $r,s \in \{1,2,\ldots n\}, r\not=s$. 
		\item (cells cover the whole region ${\cal R}$) $~\cup_{k=1}^n p_k = {\cal R} $ 
	\end{itemize}
\end{definition}
The cells are fairly small in size as compared to the anonymous locations and represent the basic unit of location, i.e., {\it the location in a mined trajectory-patterns cannot be finer than the cells in the discretized space.} This also means that the location of an activity that we mine in a predicted pattern can at-least be of the cell-size. The size of the cell can be precomputed either using the actual activity-trajectory data before anonymization and is provided with the anonymized data for the mining task or it may be computed by using the trade-off between pattern quality and mining time over the anonymized data. If it is precomputed over actual activity-trajectory data, the size should be tuned so as to avoid the closest reported consecutive locations of the user or the nearby activities on the ground to lie in the same cell. However, we only avoid it whereas some of the locations or the activities may lie in the same cell for the chosen size. This is due to the existence of closeby GPS locations (when a user is almost still) in the activity trajectory data or extremely nearby activities on the ground. In that case, those locations or the activities that are in the same cell cannot be seen independently as a frequent pattern due to the modeling constraint. Further, if we tuned cell-size over anonymized data, for a smaller cell-size, the patterns we get are finer. But if we keep cell-size too small, it may only increase the computational complexity of the pattern mining without increasing the quality of the mined patterns. This is because on reducing the size of the cell even further we may not get substantially finer patterns as compared to patterns with bigger cell-size; however, the computational cost for the pattern-mining is inversely related to cell-size (as shown experimentally also). 

For our high-relevant pattern mining framework, we restrict ${\cal P}$ as a partition of the region ${\cal R}$. However, for other mining task formalization of ${\cal P}$ as collections of cells over ${\cal R}$ that cover ${\cal R}$ but may not necessarily disjoint, or a collection of finite disjoint cells that represents point-of-interest in the region ${\cal R}$ but may not cover whole ${\cal R}$, etc., can be analyzed as  separate problems.\\

\noindent We now discuss the weighted Location Set representation of an anonymous location.
\begin{definition}[weighted Location Set (wLS)]
\label{def:cell-weight}
For a discretization $ {\cal P}=\{p_1,p_2,\ldots, p_n\}$ of ${\cal R}$, the {\it weighted Location Set} (wLS) representation of an anonymous location $AL$ is $$\{p_{i_1}:w_{i_1},p_{i_2}:w_{i_2},\ldots, p_{i_k}:w_{i_k}\}$$ 
where $w_{i_s}$, for $s=1,\ldots, k$, denotes the (non-zero) weight of the cell $p_{i_s} \in {\cal P}$ having overlap with $AL$.
\end{definition}
The weight of a cell in wLS representation must be defined based on the knowledge that we intend to mine from the anonymized data. For predicting frequent patterns from an anonymized data, one of the way to define the weight of a cell in the anonymous location $AL$ is the chance of the user being present in the cell given that she is in $AL$. If the cumulative weight of a pattern across trajectories in the anonymized data is high, it is most likely a high probable frequent pattern. While reporting cells for $AL$, it is meaningful to report only those cells that have a non-zero weight, which in our case are the cells having non-empty overlap with the anonymous location. Also, for computational requirement, we enforce that the cells in the wLS representation are ordered over the cells index.  A formal discussion on the weight of the cells for our mining task is presented in Section \ref{sec:preprocessing}. We now discuss the weighted Location Activity Set (wLAS) sequences.

\begin{definition}[weighted Location Activity Set (wLAS) sequence]
	An anonymized trajectory ${\cal AT}=\langle (AL_i,AX_i)\rangle_i$ is said to be a wLAS-sequence if each of its anonymized locations $AL_i$ is in a wLS form. 
\end{definition}
\noindent For computational requirements, we restrict that the activities in the activity-set $AX_i$ are sorted in lexicographic order. Table \ref{table:wLAS-seq-rep} shows an example of an anonymized trajectory in a wLAS-sequence representation.  Since the weights are chances in our framework, their values are in between $0$ and $1$. However, it is not a requirement for our mining task to keep weights as a fraction. If required, they can be scaled to an integer by multiplying with an appropriate factor $N$. In this example, $N=18$ can convert weights into integer weights.
\begin{table}[H]
	\begin{center}
		\begin{tabular}{||l l l ||} 
			\hline
			$wLAS$-Sequence & ${\cal AT} = \langle\alpha_1,\alpha_2,\alpha_3, \alpha_4\rangle$ & \\ [0.5ex] 
			\hline\hline
			$wLAS$-Terms 
			& $\alpha_1 =(AL_1,AX_1)$ & $AL_1 \{p_1:\frac{2}{9},$
			~{\boldmath${ p_2}$} $:\frac{1}{9}, ~p_5:\frac{4}{9},$
			~{\boldmath$p_6$}$:\frac{2}{9}\} $ \\
			&& $AX_1=\{\mathrm{a,b,h}\}$\\ 
			\cline{2-3}
			& $~\alpha_2 =(AL_2,AX_2)$ &   $AL_2=\{p_{11}:\frac{1}{3}, 		 ~p_{12}:\frac{1}{6}, ~p_{15}:\frac{1}{3}, 		 ~p_{16}:\frac{1}{6}\}$\\ 
			& &   $AX_2=\{\mathrm{a,c,f,g}\})$ \\
			\cline{2-3}
			&$~\alpha_3 =(AL_3,AX_3)$ &  $AL_3=\{p_{22}:\frac{1}{6},
			~p_{23}:\frac{1}{6}, ~\mbox{\boldmath{$p_{26}$}}:\frac{1}{3},
			~p_{27}:\frac{1}{3}\}$ \\ 
			& &  $AX_3=\{\mathrm{b,c,e}\})$ \\
			\cline{2-3}
			&$~\alpha_4 =(AL_4,AX_4)$ &  $AL_4=AL_4=\{p_{25}:\frac{1}{2},~\mbox{\boldmath{$p_{26}$}}:\frac{1}{2}\}$ \\ 
			& &  $AX_4=\{\mathrm{a,c,h}\})$ \\
			\hline\hline
			
		\end{tabular}
		\caption{ wLAS-sequence Representation of Anonymous Trajectory}
		\label{table:wLAS-seq-rep}
	\end{center}
\end{table}

\noindent We now formally define the trajectory-patterns that we intend to mine from the anonymized data.  Our example of a trajectory-pattern is the one as shown by the shaded common information in the figure \ref{fig:trajPattern}.
\begin{definition}[Trajectory-Pattern]
	A trajectory-pattern, denoted by the symbol $T$, is a sequence of (cellSet, ActivitySet) pair, i.e., $T= \langle (P_1,X_1), (P_2,X_2), \ldots , (P_n,X_n) \rangle$, where each $P_i$ and $X_i$ denote the collection of cells and the activities respectively.
\end{definition}
For example, $t=\langle~ (\{p_2,p_{6}\},\{a,b\}) ~ (p_{26},c)~ \rangle$ is one such  trajectory-pattern. We denote by the size of $P_i$ (or $X_i$) as the number of cells (or activities) in the collection. 

We are now ready to discuss the relevance-score of a pattern in an anonymized data that is in the wLAS-sequence form. The purpose of the relevance-score is to capture the common knowledge across trajectories in the wLAS-sequnece form, and therefore, assist in the pattern mining. Formally, we require that the relevance of a trajectory pattern should satisfy the following conditions.    



\begin{definition}[High-Relevance trajectory pattern] \label{defn:high-relevance} A trajectory-pattern $T=\langle (P_i,X_i) \rangle_i$ is of {\it high-relevance} if 
	\begin{enumerate}
		\item {\it The association between locations and activities within the trajectory-pattern is highly likely}, i.e.,  based on the information within the anonymized data it is more likely that the activities in $X_i$ are performed at some of the locations in $P_i$ for each $i$.
		\item {\it The patterns having relatively smaller size of $P_i$ and $X_i$ (for each $i$) are comparatively more relevant}. For large size of $P_i$ and $X_i$, the trajectory-pattern is just like an anonymized trajectory, and therefore, not of much use.
		\item {\it The pattern must be a frequent one in an anonymized trajectory database}, i.e., the mined knowledge must be present in several anonymized trajectories as a common knowledge. 
		  
	\end{enumerate}
\end{definition}

The first two points in the above definition define the relevance of a trajectory-pattern whereas the third point discusses its significant presence across different anonymized trajectories in the database. We formally define the relevance-score of a trajectory-pattern in a $wLAS$-sequence database later in Section \ref{sec:relevance}. Informally, the relevance-score of a cell-activity pair in a wLAS-sequence is the chance of the activity being performed at the cell location based on the information in the wLAS-sequence. For a trajectory-pattern, its relevance in a wLAS-sequence is the cumulative chance of individual pairs based on the sequentiality knowledge of the pairs in the wLAS-sequence, and the relevance in a wLAS-sequence data is the sum of the relevance over matching wLAS-sequences. Patterns for which the relevance-score in the anonymized data is higher than the prespecified threshold are considered {\it relevant-patterns}. 

We define the relevance score of a (cell, activity) term in a matching wLAS (anonymized region) using the cell-weight (Definition~\ref{def:cell-weight}), which is the chance of an activity being performed in the cell.  For computing the relevance of a pattern in a wLAS-sequence, we aggregate weights of  individual matching terms; and over an anonymized data, we aggregate relevance in the matching wLAS-sequences of the data. We discuss more detail about the relevance computation in Section~\ref{sec:relevance}, Definition~\ref{defn:relevance}, however, our proposed relevance score satisfies the three conditions as mentioned in the definition above. The aggregation of relevance over the matching sequences in the data ensures point (2) and point (3) as above. The finer patterns are expected to have comparatively more matching in anonymized-trajectories that increases its frequency, and being frequent its relevance in different matching trajectories get added to make it high. However, the point (1) is satisfied by the choice of the relevance score as the cell weight.

We now explain relevance of a pattern by an example. For a trajectory-pattern $t_1=(\{p_2,p_6\},a)$, its relevance in ${\cal AT}$ (as in Table \ref{table:wLAS-seq-rep}) is the maximum of the weights among all the possible matches of $t_1$ with different terms of ${\cal AT}$, i.e., $\alpha_1,\alpha_2,\alpha_3$ and $\alpha_4$. Since the match of $t_1$ is in $\alpha_1$ only, we define $w(t_1,\alpha_1) = w(p_2) + w(p_6) = 1/3$ (as $\mathrm{a} \in AX_1$), and therefore $w(t_1,{\cal AT})=\max{w(t,\alpha_1)=1/3}$. Similarly, for the weight of a pattern $t=\langle t_1 t_2\rangle$ where $t_1$ is as before and $t_2=(\{p_{26},c\})$, we get that second term $t_2$ has a match in $\alpha_3$ and $\alpha_4$ both, and $w(t_2, \alpha_3)=1/3, w(t_2,\alpha_4)=1/2$. Therefore, $w(t,{\cal AT})=max\{w(t_1,\alpha_1)+w(t_2,\alpha_3),~ w(t_1,\alpha_1)+w(t_2,\alpha_4)\}=\{1/3+1/3, 1/3+1/2\}=5/6$. The weight of a pattern in an anonymized database consisting of several such trajectories is the cumulative weight of the pattern in different trajectories. 


This transformation of an anonymized data into a wLAS-sequences data maps the problem of finding `patterns from an anonymized data' into an equivalent problem of `high-relevance pattern mining' over a two-dimensional wLAS-sequence data. In pattern mining, finding an appropriate threshold that can mine enough number of relevant-trajectories is data specific and is not easy to decide beforehand. 
To overcome this difficulty, in this paper, we propose a {\it top-k approach for mining high-relevant trajectory patterns from a wLAS-sequence data}. We adopt a couple of efficiency strategies for our proposed algorithm $TopKMintra$, such as {\it Threshold Initialization Strategy (TI)} and {\it Threshold Updation (TU)}. Our computational framework is motivated by the $USpan$ \cite{yin2012uspan}, a technique that mines high utility sequential patterns from one-dimensional itemset sequences. Similar to $USpan$, for early termination, we adopt a couple of pruning strategies, namely {\it Depth Pruning} and  {\it Width Pruning}. \\ \ \\
\noindent The outline of the rest of the sections in light of the abovementioned solution approach are as follows:
\begin{itemize}
	\item First, we propose a {\it preprocessing step} (Section \ref{sec:preprocessing}) that converts an anonymous-trajectory data into a corresponding wLAS-sequence data.  This assists in mining `high-relevant trajectory-patterns' from anonymized data.
	\item The {\it relevance-score} of a {\it trajectory-pattern} over the processed data is defined in Section \ref{sec:relevance}. The proposed score in Definition~\ref{defn:high-relevance} quantify both the relevance and the repetitiveness of the common information within different wLAS-sequences. 
	\item  We present an efficient top-k trajectory mining approach TopKMintra in Section \ref{sec:top-k} with two effective efficiency strategies, namely  {\it Threshold Initialization (TI)} and {\it Threshold Updation (TU)} and two pruning strategies, namely {\it Depth Pruning} and  {\it Width Pruning}.
	\item Finally, our experimental results are presented in Section \ref{sec:exp}. As there exists no algorithm that discusses mining over anonymized trajectory data, we compare TopKMintra with three algorithms which are designed using our baseline algorithm $MintraBL$ (Section \ref{sec:mintraBL}) by combining with two efficiency strategies $TI$ and $TU$.
\end{itemize}

\section{Computational Framework for Mining Patterns}
\label{sec:TF}
As a preprocessing step, we convert an anonymized trajectory data into a wLAS-sequence data for a given cell-partition which is precomputed using the actual trajectory data and is known in advance. We suggest the cell-partition to be as much refined+ over activities as possible. At the same time, we do not restrict having more than one activities at a single cell. By separating nearby activities over different cells, we create cells with a very small size. This increases the size of $P_i$ in wLS representation, and therefore, increases the computational complexity of pattern mining, i.e., there is a trade-off between the finer patterns and the computational complexity to mine them. Note that reporting nearby activities at a single cell is equally useful information in such a scenario. Therefore, we suggest deciding the size of the cells keeping both of these constraints in mind.  One such cell-partitioning is shown in Figure \ref{fig:prepro} that divide the region into cells ranging from $P_1$ to $P_{32}$. The cells in the figure are taken quite bigger to keep our example size in the subsequent discussion smaller. 

\subsection{Preprocessing Step: wLAS-sequence data}
\label{sec:preprocessing}

\begin{figure}
	\centering
	\includegraphics[width=.6\textwidth]{./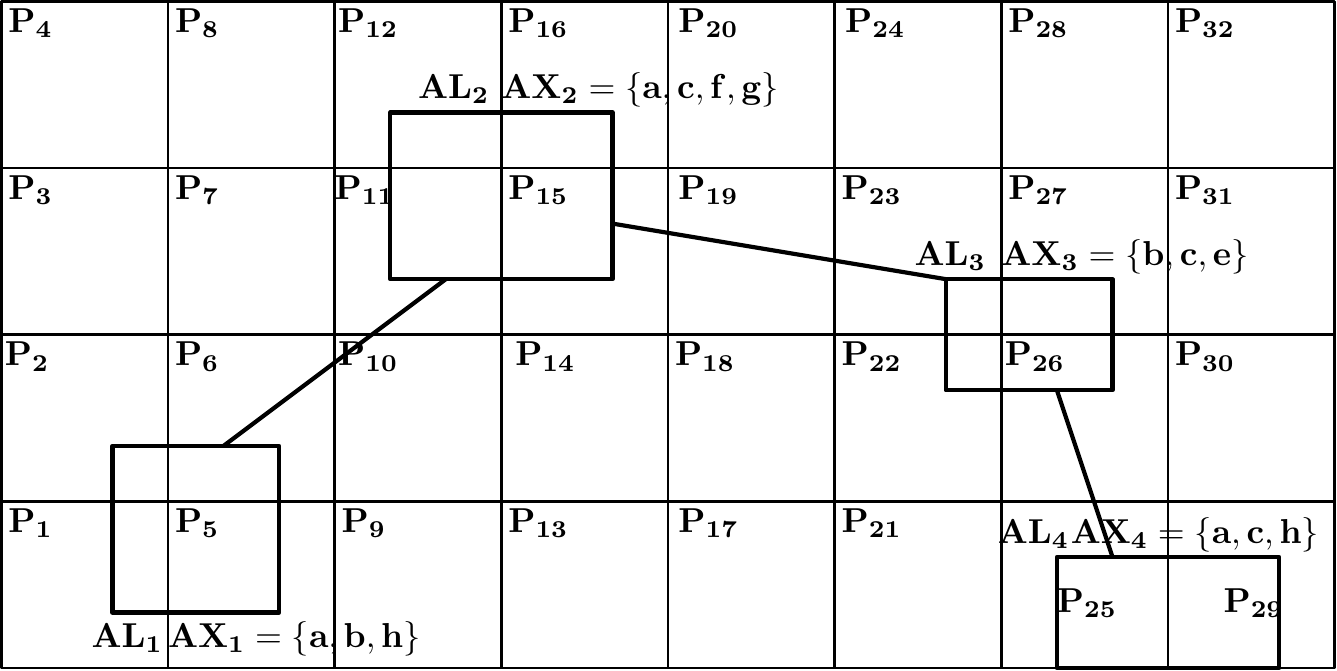}
	\captionof{figure}{Pre-processing using grid of square cells}
	\label{fig:prepro}
\end{figure}
Let $AL_i$ be a MBR of the $i^{th}$ anonymous location in a given anonymous trajectory that overlaps $n$ cells $p^1_i,p^2_i, \ldots p^n_i$ in ${\cal P}$. We define the weight $w^j_i$ of a cell $p^j_i$ within a MBR as the ratio of area of overlap region between the cell and the MBR, i.e.,
\begin{center}
	$w^j_i =\frac{\mbox{area of overlap region between $AL_i$
			and $p^j_i$}}{\mbox{area of $AL_i$}}$
\end{center} 
The weighted locationSet representation of $AL_i$ (aka. wLS form), therefore, is
\begin{center}
	$AL_i \equiv \{p^1_i:w^1_i, p^2_i:w^2_i, \ldots, p^n_i:w^n_i \}$	
\end{center}
\noindent  For convenience, locations in weighted representation will always be ordered in the increasing order of cell-index. Consider an example of anonymized trajectory as in Figure~\ref{fig:prepro} where anonymized-region $AL_1$ has a non-empty overlap with cells $p_1,p_2,p_5$ and $p_6$. Based on the overlap-ratio of these cells with $AL_1$, we have $w_1=\frac{2}{9}, w_2=\frac{1}{9}, w_5=\frac{4}{9}, w_6=\frac{2}{9}$. The wLS representation of $AL_1$, wLAS of $(AL_1,AX_1)$ and wLAS-sequence for anonymous-trajectory in Figure~\ref{fig:prepro} are shown in Table~\ref{table:preprocessing}. In wLAS representation, we also maintain the collection of activities in $AX_i$, for each $i$, sorted in lexicographic order.

\begin{table}[H]
	\begin{center}
		\resizebox{\textwidth}{!}{\begin{tabular}{ll}
			\hline
			\\[-1em]
			weighted locationSet (wLS) & $AL_1\equiv \{p_1:\frac{2}{9}, ~p_2:\frac{1}{9}, ~p_5:\frac{4}{9}, ~p_6:\frac{2}{9}\}$ \\
			\\[-1em]
			\hline \hline
			\\[-1em]
			weighted location-activity-set  & $\alpha_1=(AL_1,AX_1)$, $\;$ where\\ 
			\\[-1em]
			(wLAS)& $AL_1$ as above, $\;\;$  $AX_1=\{a,b,h\}$\\
			\\[-1em]
			\hline \hline
			\\[-1em]
			weighted location-activity-set  & $\alpha =\langle \alpha_1,\alpha_2,\alpha_3,\alpha_4  \rangle$, $\;$ where \\
			\\[-1em]
			Sequence (wLAS-sequence) &
			$\alpha_1=(AL_1,AX_1)$,$\alpha_2=(AL_2,AX_2)$,\\
			\\[-1em]
			& $\alpha_3=(AL_3,AX_3)$,$\alpha_4=(AL_4,AX_4)$\\
			\\[-1em]  
			&
			$AL_1=\{p_1:\frac{2}{9},$
			~{\boldmath${ p_2}$} $:\frac{1}{9}, ~p_5:\frac{4}{9},$
			~{\boldmath$p_6$}$:\frac{2}{9}\}$ and $AX_1=\{\mathrm{a,b,{\bf h} }\}$\\
			\\[-1em]
			& $AL_2=\{p_{11}:\frac{1}{3}, 		 ~p_{12}:\frac{1}{6}, ~p_{15}:\frac{1}{3}, 		 ~p_{16}:\frac{1}{6}\}$ and $AX_2=\{\mathrm{a,c,f,g}\}$\\
			\\[-1em]
			& $AL_3=\{p_{22}:\frac{1}{6},
			~p_{23}:\frac{1}{6}, ~p_{26}:\frac{1}{3},
			~p_{27}:\frac{1}{3}\}$ and $AX_3=\{\mathrm{b,c,e}\}$\\
			\\[-1em]
			& $AL_4=\{\mbox{\boldmath{$p_{25}$}}:\frac{1}{2},~\mbox{\boldmath{$p_{26}$}}:\frac{1}{2}\}$ and $AX_4=\{\mathrm{{\bf a,c},h}\}$\\
			\\[-1em]
			\hline
		\end{tabular}}
		\caption{wLAS representation of the Anonymous Trajectory in Fig~\ref{fig:prepro}}
		\label{table:preprocessing}
	\end{center}
\end{table}
By $w(p,\alpha_i)$,  we denote the weight of a cell $p$ in wLAS $\alpha_i$. For example, as in Table~\ref{table:preprocessing}, we have $w(p_2,\alpha_1)=\frac{1}{9}$. In subsequent sections, we define the relevance of the trajectory-patterns using weights of cells as in wLAS representation.
We do not consider any separate weight with activities and assume that a (cell, activity) pair within an anonymized trajectory has the same weight as that of the weight of the cell in the matching trajectory. This is equivalent of assuming that activities within the MBRs are equally likely, and therefore, associated with a cell with the same chance as that of a user being present in the cell. The use of additional knowledge about activities like activity ranking based on user feedback, most-frequently visited activities, etc., are expected to refine the quality of the patterns. However, we leave such extensions on the modeling of activity-weights and its effect on trajectory-pattern mining over anonymized data as a work to be explored in the future. This being a basic model, the emphasis is on exploring the possibility of mining patterns from the anonymized data than enhancing the pattern quality.

For a fixed cell partitioning of the region ${\cal R}$,  we convert each anonymous trajectories in ${\cal PD}$ into  a weighted location-activity-set sequence (aka. wLAS-sequence). For notational convenience, in rest of the paper, we still denote the wLS by $AL$, wLAS by $\alpha_i$, wLAS-sequence by $\alpha$ and processed private data by $\cal PD$.
For a particular wLAS $\alpha_i$, the set of locations in $\alpha_i$ are denoted by $\alpha_i.loc$, and the set of activities are denoted by $\alpha_i.act$. For example, consider $\alpha_1$ as in the table above, we have $\alpha_1.loc = \{p_1,p_2,p_5,p_6\}$ and
$\alpha_1.act = \{\mathrm{a,b,h}\}$.

\begin{definition}
	Consider two wLAS-sequences $\alpha^a, \alpha^b $ of length $p$ and $q$ respectively, say $\alpha^a  = \langle \alpha^a_1, \alpha^a_2, \ldots,\alpha^a_p \rangle$ and $ \alpha^b = \langle \alpha^b_1, \alpha^b_2, \ldots,\alpha^b_q \rangle$, where $i^{th}$ wLAS in $\alpha^a$ is $\alpha^a_i = (AL^a_i, AX^a_i)$ and  $j^{th}$ wLAS in $\alpha^b$ is $\alpha^b_j=(AL^b_j,AX^b_j)$. 
	\begin{enumerate}
		\item (wLS containment) wLS $AL^a_i$ is contained in $AL^b_j$, denoted by
		$AL^a_i \subseteq AL^b_j$, if locations in $AL^a_i$ are all locations in $AL^b_j$ i.e., $AL^a_i.loc \subseteq AL^b_j.loc$. 
		\item (wLAS containment) wLAS $\alpha^a_i$ is said to be contained in $\alpha^b_j$, denoted by $\alpha^a_i \subseteq \alpha^b_j$, if $AL^a_i \subseteq AL^b_j$ and $AX^a_i \subseteq AX^b_j$. It should be noted that the latter containment is a standard set-containment
		\item (wLAS-sequence containment) $\alpha^a$ is contained in $\alpha^b$, denoted by $\alpha^a \subseteq \alpha^b$,  if every wLAS of $\alpha^a$ is contained in some wLAS of $\alpha^b$. Formally,
		
		$\alpha^a \subseteq \alpha^b$ if $p \le q$ and there exist $1 \le  j_1 < j_2 \ldots < j_p \le q$ such that for every $k=1$ to $p$ we have $\alpha^a_k \subseteq \alpha^b_{j_k}$. 
		\item (wLAS subsequence) $\alpha^a$ is a  subsequence of $\alpha^b$ if $\alpha^a \subseteq \alpha^b$.
	\end{enumerate}
\end{definition}
\begin{example} Consider wLAS-sequences $\alpha$ in Table~\ref{table:preprocessing} and $\beta$ as below. The respective matching of $\beta$ in $\alpha$ are highlighted.
	\begin{table}[H]
		\begin{center}
			\begin{tabular}{ll}
				\hline
				\\[-1em]
				wLAS-sequence $\beta=\langle \beta_1,\beta_2 \rangle$&
				$\beta_1=(AL^b_1,AX^b_1)$,$~\beta_2=(AL^b_2,AX^b_2)$\\
				\\[-1em]  
				&
				$AL^b_1=\{~p_2:\frac{1}{3},			~p_6:\frac{2}{3}\}$ and $AX^b_1=\{\mathrm{h}\}$\\
				\\[-1em]
				& $AL^b_2=\{p_{25}:\frac{1}{2}, 		~p_{26}:\frac{1}{2}\}$ and $AX_2=\{\mathrm{a,c}\}$\\
				\\[-1em]
				\hline \hline
				wLS containment & $AL^b_1 \subseteq AL_1$ and $AL^b_2 \subseteq AL_4$ \\
				\\[-1em]
				\hline\hline
				wLAS containment & $\beta_1 \subseteq \alpha_1$ and $\beta_2 \subseteq \alpha_4$ \\
				\\[-1em]
				\hline\hline
				wLAS-sequence containment & $\beta \subseteq \alpha$\\
				\\[-1em]
				\hline\hline
			\end{tabular}
			\caption{wLAS subsequence}
			\label{table:subsequence}
		\end{center}
	\end{table}
\end{example}

\subsection{Trajectory-pattern Matching from a wLAS-sequence}
\label{sec:TPM}
We now introduce some of the basic notions and definitions related to trajectory-pattern and  $wLAS$-sequence, which is useful for explanations in the subsequent sections.


\begin{definition}
\label{defn:match}
	For a trajectory-pattern $T=\langle (P_i,X_i) \rangle_{i=1}^n$ of length $n$ and a wLAS-sequence $\alpha = \langle \alpha_1,\alpha_2,\ldots \alpha_m \rangle$,
	\begin{enumerate}
		\item  $\alpha$  is said to be an {\bf exact-match} of $T$, denoted by $T \sim \alpha $, if $m=n$ and for all $i$, $P_i=\alpha_i.loc$ and $X_i=\alpha_i.act$.
		\item $\alpha $ is said to be a {\bf matching sequence} of $T$, denoted by $T \precsim \alpha$, if there exists a subsequence of $\alpha$ which is an exact-match of $T$. Formally, 	$ T \precsim \alpha ~$ if $\exists~ \alpha^\prime \subseteq \alpha$ such that $ ~T \sim \alpha'$. 
	\end{enumerate}	
\end{definition}



For a discussion on pattern mining from a wLAS-sequence database, we need to consider all subsequences of a wLAS-sequence having an exact-match with the pattern. We also define notions such as {\it pivot-match} and {\it projected subsequence} of a pattern in wLAS-sequences, as discussed below.

\begin{definition}
\label{defn:pivot-match}
	Consider a trajectory-pattern $T$, its matching sequence $\alpha = \langle \alpha_1,\alpha_2,\ldots ,\alpha_m \rangle$ ($|T| \le m$) and a subsequence $\beta = \langle \beta_{k_1}, \beta_{k_2},\ldots,\beta_{k_{|T|}}\rangle$ of $\alpha$ of the length same as that of $T$. 
	\begin{enumerate}	
		\item $\beta$ is said to be a {\bf pivot-match} of $T$ in $\alpha$ if
		\begin{itemize}
			\item $\beta$ is an exact-match of $T$, i.e., $T \sim \beta $, and
			\item It is the first exact-match of $T$ in $\alpha$, i.e., for any other exact-match of $T$ in $\alpha$, say $\gamma = \langle \gamma_{s_1}, \gamma_{s_2},\ldots,\gamma_{s_{|T|}}\rangle$, the last matching term of $T$ in $\beta$ (having index $k_{|T|}$) comes before the last matching term in $\gamma$ (having index $s_{|T|}$). 
		\end{itemize}
		\item For a pivot-match $\beta = \langle \beta_{k_1}, \beta_{k_2}, \ldots, \beta_{k_{|T|}}\rangle$ of $T$ in $\alpha$, the term of $\alpha$ containing the last term of the pivot-match, i.e., $\beta_{k_{|T|}}$, is called the {\bf pivot-term} of $T$ in $\alpha$.
		\item If pivot term of  $T$ in  $\alpha $ is $\alpha_{a_k}$, the {\bf projected subsequence} of $T$ in $\alpha$, denoted by $\overline{\alpha}_T$  is the subsequence of $\alpha$ which contains the remaining terms of $\alpha$ after the pivot-term, including the remaining weighted-locations and activities from the pivot term. Formally, $\overline{\alpha}_T =<\overline{\alpha}_{a_k}, \alpha_{a_k+1},\ldots ,\alpha_n>$, where $\overline{\alpha}_{a_k}$ denotes remaining wLAS in $\alpha_{a_k}$.
	\end{enumerate}	
\end{definition}
\begin{example} Consider the following example of wLAS-sequence $\gamma$ and trajectory-patters $T^a,T^b,T^c,T^d$ and $T^e$. It can be seen that $\gamma$ is a matching sequence for $T^a,T^b, T^c$ and $T^d$ but not for the pattern $T^e$. A matching sequence may have more than one subsequences which contains the exact match to the given trajectory-pattern. For example, $T^b$ has a match in subsequence $\langle \gamma_1, \gamma_2 \rangle$ and $\langle \gamma_1 , \gamma_3 \rangle$. Similarly, $T^c$ has match in $\langle \gamma_1, \gamma_2 \rangle$, $\langle \gamma_2, \gamma_3 \rangle$, $\langle \gamma_1 , \gamma_3 \rangle$ and $T^d$	has a match in $\langle \gamma_1, \gamma_3 \rangle$, $\langle \gamma_2 , \gamma_3 \rangle$.
	\begin{table}[H]
		\begin{center}
			\resizebox{\textwidth}{!}{\begin{tabular}{l|lll}
				\hline\hline
				\\[-1em]
				\boldmath{$\gamma=\langle \gamma_1, \gamma_2, \gamma_3\rangle$} & $\gamma_1 =  (AL_1,AX_1)$ & $\gamma_2 =(AL_2,AX_2)$& $\gamma_3=(AL_3,AX_3)$\\ 
				(wLAS-sequence) &  $AL_1=\{p_2:0.6,p_3:0.4\}$ & $AL_2=(\{p_{3}:0.2, p_{4}:0.8\}$& $AL_3=\{ p_4:0.3, p_6:0.7\}$\\ &$AX_1=\{\mathrm{b,e,h}\})$ & $ AX_2=\{\mathrm{a,b,f,g}\})$&$AX_3=\{c,f,h\}$\\
				\\[-1em]
				\hline \hline
				\\[-1em]
				Trajectory-patterns &Patterns Match in $\gamma$&&\\ 
				\hline
				\\[-1em]
				\boldmath{$T^a=\langle t^a_1, t^a_2 \rangle \precsim \gamma$} &		$t^a_1=(p_2,\mathrm{b})~$&  \boldmath{$t^a_2=(p_{3},\mathrm{b})$}& $\;\;\;\;\;\;\;\;\;\;\;\;\;\;\;\;\;\;\;\;\;\;\leftarrow$ pivot match\\
				\\[-1em]
				\hline
				\\[-1em]
				\boldmath{$T^b=\langle t^b_1, t^b_2 \rangle \precsim \gamma$}&	$t^b_1 =(p_2,\{\mathrm{b,e}\})$&\boldmath{$t^b_2=(p_{4},\mathrm{f})$}&$\;\;\;\;\;\;\;\;\;\;\;\;\;\;\;\;\;\;\;\;\;\;\leftarrow$ pivot match\\
				& $t^b_1=(p_2,\mathrm{\{b,e\}})~$&&  $t^b_2=(p_{4},\mathrm{f})$\\
				\\[-1em]
				\hline
				\\[-1em]
				\boldmath{$T^c=\langle t^c_1, t^c_2 \rangle \precsim \gamma$}&	$t^c_1 =(p_3,\{\mathrm{b}\})$&\boldmath{$t^c_2=(p_{4},\mathrm{f})$}&$\;\;\;\;\;\;\;\;\;\;\;\;\;\;\;\;\;\;\;\;\;\;\leftarrow$ pivot match\\
				& &$t^c_1=(p_3,\mathrm{b})~$&  $t^c_2=(p_{4},\mathrm{f})$\\
				
				& $t^c_1=(p_3,\mathrm{b})~$&&  $t^c_2=(p_{4},\mathrm{f})$\\
				\\[-1em]
				\hline
				\\[-1em]
				\boldmath{$T^d=\langle t^d_1, t^d_2 \rangle \precsim \gamma$}&	$t^d_1 =(p_3,\{\mathrm{b}\})$&&{\boldmath{$t^d_2=(p_{6},\mathrm{c})$}}$\;\;\;\leftarrow$ pivot match\\
				& &$t^d_1=(p_3,\mathrm{b})~$&  {\boldmath $t^d_2=(p_{6},\mathrm{c})$}$\;\;\;\leftarrow$ pivot match\\
				\\[-1em]
				\hline
				\\[-1em]
				\boldmath{$T^e=\langle t^e_1\rangle \not \precsim \gamma$} & No Match & No Match& No Match\\
				$t^e_1=(p_{4},\mathrm{e})$&&&\\
				\\[-1em]
				\hline\hline
				\\[-1em]
				Projected Subsequence of &&&\\
				\\[-1em]
				\hline
				\\[-1em]
				$T^a$ in $\gamma$ &&$\overline{\gamma_2}=(p_4,\{\mathrm{f,g}\})$& $\gamma_3$\\
				\\[-1em]
				\hline
				\\[-1em]
				$T^b$ in $\gamma$ &&$\overline{\gamma_2}=$null&$\gamma_3$\\
				\hline
				\\[-1em]
				$T^c$ in $\gamma$ &&$\overline{\gamma_2}=$null&$\gamma_3$\\
				\hline
				\\[-1em]
				$T^d$ in $\gamma$ &&&$\overline{\gamma_3}=$null\\
				\\[-1em]
				\hline\hline
			\end{tabular}}
			\caption{Matching Sequences, Pivot Match and Projected Subsequence}
			\label{table:matching}
		\end{center}
	\end{table}
\end{example}

From the above example, the pivot-match of trajectory-patterns $T^a$, $T^b$ and $T^c$ are in the same subsequence $\langle \gamma_1, \gamma_2\rangle$ and the pivot term is $\gamma_2$. 
In all these cases, the pivot-match is unique. However, there can be more than one pivot-match of a trajectory-pattern. This is because pivot-match only restricts the pivot-term which is unique. The sub-pattern of the trajectory-pattern excluding the last term may have multiple exact-match. For example, the trajectory-pattern $T^d$ has two  matches in $\gamma$, namely $\langle\gamma_1,\gamma_3\rangle$ and $\langle\gamma_2,\gamma_3\rangle$, and both are pivot-match. The projected subsequence of $T^a$ in $\gamma$ is $\langle\overline{\gamma_2}, \gamma_3\rangle$, which is rest of the wLAS-sequence in $\gamma$ after the patter $T^a$. Here $\overline{\gamma_2}$ contains location and activities of the pivot term $\gamma_2$, excluding the locations and activities of the last term of $T^a$. For the projected sequence of $T^b$ and $T^c$, the term $\overline{\gamma_2}$ is null. This is because either the location or the activity set becomes empty.

For a trajectory-pattern, we now introduce two more terms, namely {\it $1$-pattern} and {\it $r$-pattern}. For a wLAS-sequence, a $1$-pattern is a $(location, activity)$ pair for which wLAS-sequence is a matching. For example, $(p_2,b)$ is a $1$-pattern of $\gamma$ as in Table~\ref{table:matching}, whereas $(p_5,c)$ is not. The complete trajectory-pattern of a given wLAS-sequence is called a raw pattern or  $r$-pattern. For example, 
$\langle (\{p_2,p_3\},\{\mathrm{b,e,h}\}) (\{p_{3}, p_{4}\},\{\mathrm{a,b,f,g}\})$ \\ $(\{p_4,p_6\},\{c,f,h\})\rangle$ is an $r$-pattern for the $\gamma$. We need $1$-sequence and $r$-sequence for discussion in threshold initialization.

\begin{definition}[Sequence-relevance]
\label{defn:sr}
	For a given wLAS-sequence $\alpha$, the relevance of its $r$-pattern is called the sequence-relevance. We denote it by ${\cal R}(\alpha)$	
\end{definition} 
\subsection{Relevance of a Trajectory-Pattern}	 
\label{sec:relevance}
 
%

Our objective is to find  high-relevant trajectory-patterns as discussed in Definition~\ref{defn:high-relevance}. For this, we assign a weight to a trajectory-pattern,  
\begin{itemize}
	\item by using the weight associated with cells in a wLAS-sequence representation of anonymous trajectory, and
	\item by accumulating all the weights due to an overlap of a pattern with different anonymous trajectories in ${\cal PD}$.
\end{itemize}
The weight of a trajectory-pattern in the wLAS-sequence estimate the chance of a pattern in the database. We call this weight of a trajectory-pattern as its {\it relevance-score} in the database or just as its {\it relevance}. Next, we formally define the procedure to estimate the relevance of a trajectory-pattern. We first define the relevance of a single (location,activity) pair in a wLAS-sequence (Definition \ref{defn:relevance}, point-1). Using this, we define the relevance of (locationSet,activitySet), i.e., a trajectory-term in a wLAS (Definition \ref{defn:relevance}, point-2). Finally, we discuss the relevance of a trajectory-pattern in a wLAS-sequence (Definition \ref{defn:relevance}, point-3) and in a wLAS-sequence database (Definition \ref{defn:relevance}, point-4). For our formalization, we assume that user is equally likely to be at any place of locations in a wLAS; and may perform any set of activities out of activitySet in wLAS. 

\begin{definition}[Relevance of Trajectory-pattern]
\label{defn:relevance}
Consider a wLAS-sequence database ${\cal PD}$, where $\alpha=\langle\alpha_i\rangle_i$ denote a wLAS-sequence in ${\cal PD}$ and $\alpha_i$ as $i^{th}$ wLAS-term in $\alpha$.
Then  	
	\begin{enumerate}
		\item The relevance of a single location-activity pair $(p,\mathrm{a})$ in  wLAS $\alpha_t$ is equal to weight of $p$ in the $\alpha_t$, if $\alpha_t$ is a matching sequence of $(p,\mathrm{a})$. i.e.,
		\begin{center}
			\begin{tabular}{lll}
				$\displaystyle{\cal R}((p,\mathrm{a}), \alpha_t)$ & = & $\left\{
				\begin{array}{ll}
				w(p, \alpha_t)  & \text{ if } (p,a) \precsim \alpha_t \\
				0 & \text{ otherwise }
				\end{array}
				\right.$
			\end{tabular}
		\end{center}
		\item The relevance of locationSet-activitySet pair $(L,X)$ in some wLAS $\alpha_t$ is
		\begin{enumerate}
			\item When $L = \{p_1,p_2\ldots, p_n\} $, $X= \{\mathrm{a}\}$ and $(L,X) \precsim \alpha_i$. The relevance of $(L,X)$ in $\alpha_i$ is the sum of the overlap ratio of locations in  $\alpha_i.loc$ with that of $L$. 
			
			\begin{center}
				\begin{tabular}{lll}
					$\displaystyle{\cal R}((L,X), \alpha_t)$  =   $\left\{
					\begin{array}{ll}
					\sum_{i=1}^n{\cal R}(~(p_i,\mathrm{a}),~\alpha_t~)  & (L,X)\precsim \alpha_t \\
					0 & \mbox{otherwise }
					\end{array}
					\right.$
				\end{tabular}
			\end{center}
			\item When $L= \{p_1,\ldots,p_n\}$ and $X= \{\mathrm{a_1,a_2,\ldots, a_m}\}$. The relevance of $(L,X)$ in $\alpha_t$ in term of activities depends solely on the fact whether activities in $X$ belong to $\alpha_t.act$. This is because, we do not associate weight with activities.
			\begin{center}
				$\displaystyle{\cal R}((L,X), \alpha_t)  = \min_{j}\{~{\cal R}(L,\mathrm{a_j})~\vert~j=1,\ldots, m~\}$  
			\end{center}
		\end{enumerate}
		
		\item The relevance of a trajectory pattern $T=\langle (L_i,X_i) \rangle$ in  wLAS-sequence.
		\begin{enumerate}
			\item The relevance of a trajectory-pattern $T$ in a wLAS-sequence $\alpha$ is the collection of all relevance values of $T$ for possible multiple exact-match of $T$ in $\alpha$. 
			\begin{center}
				$\displaystyle {\cal R}(T, \alpha) =
				\{ ~\; \sum_i {\cal R}((L_i,X_i), \beta_i) ~~|~~   \beta= \langle
				\beta_i \rangle \subseteq \alpha \ ~\mbox{ and }~  T \sim \beta )\} $.
			\end{center}
			\item The max-relevance of a trajectory-pattern $T$ in a wLAS-sequence $\alpha$, denoted by ${\cal R}_{\max}(T, \alpha)$, is the maximum over all relevance values of $T$ in $\alpha$.
			\begin{center}
				${\cal R}_{max}(T, \alpha) = \max {\cal R}(T,\alpha) $.
			\end{center}
		\end{enumerate} 
		\item The relevance score of a trajectory-pattern $T$ in an anonymous trajectory data ${\cal PD}$ is the summation of all the relevance values of $T$ over matching sequences in ${\cal PD}$. 
		\begin{center}
			$\displaystyle {\cal R}(T, {\cal PD}) = \sum_{\alpha \in {\cal PD}} {\cal R}_{max}(T,\alpha )$
		\end{center}
	\end{enumerate}	
\end{definition}


We define relevance of a trajectory-pattern $T=\langle (L_i,X_i) \rangle$ in a wLAS-sequence $\alpha \in {\cal PD}$ using its match in $\alpha$; and by accumulating the relevance of trajectory-term $(L_i,X_i)$ (called {\it term-relevance})  with the corresponding matching-term. For term-relevance of $(L,X)$ in $\alpha^t$, if any activity in $X$ is not in $\alpha_t.act$, then the relevance is zero; otherwise, the relevance is same as that of $(L,\mathrm{a}_i)$ for any activity
$\mathrm{a}_i$ in $X$. It is easy to verify that ${\cal
	R}(L,\mathrm{a}_i) = {\cal R}(L,\mathrm{a}_j)$, for any
$\mathrm{a}_i,\mathrm{a}_j \in X$.
In the definition of term-relevance, we first consider the relevance of the locations in $L$ and then the relevance of the activities in $X$. We could have done the other way round and obtained the same value.

The relevance of a trajectory-pattern $T$ with the matching wLAS-sequence $\alpha$ is defined as the summation of the term-relevance values. An alternative theoretically more sound definition for the relevance of a trajectory-pattern may be to consider the product of the term-relevance of the independent terms in the exact-match in place of summation. Being an overlap ratio of locations, term-relevance values are fractions, and their product will be even smaller. To avoid working with very small threshold values, we have considered summation of individual term-relevances which has the same pruning power as the product in term of finding useful patterns. Since a trajectory-pattern may have multiple matches in a wLAS-sequence, its relevance in wLAS-sequence is the collection of all relevance values with distinct matches. For defining the relevance of a trajectory-pattern in a wLAS-sequence data, we consider the maximum relevance value of a trajectory-pattern among its distinct matches in wLAS-sequence. The summation of maximum relevances of a trajectory pattern over the matching sequences in a wLAS-sequence data is, therefore, the relevance of a trajectory-pattern in wLAS-sequence data.   If wLAS-sequence data is clear from the context, we sometimes denote the relevance of a trajectory-pattern $T$ by just writing ${\cal R}(T)$. 



\begin{example}
	\label{ex1}
	For explaining relevance of a trajectory-pattern, as discussed above, we consider a toy example of an anonymized database ${\cal PD} = \langle\alpha_1,\alpha_2,\alpha_3\rangle$ in Table \ref{tab:1}. The evaluation of the relevance of a trajectory pattern $T= \langle \eta_1, \eta_2 \rangle$, where $\eta_1=(\{p_1,p_2\},\{\mathrm{a,b}\})$ and $\eta_2= (p_5,\mathrm{g})$, in $\alpha_1$ and in ${\cal PD}$ is shown in Table \ref{tabel:relevance}.
	\begin{table}[H]
		\centering
		\resizebox{\textwidth}{!}{\begin{tabular}{l|l}
			\hline\hline
			Private Data &\\
			\hline
			\\[-1em]
			$\alpha_1 = \langle\alpha^a_1,\alpha^a_2,\alpha^a_3\rangle$ 
			& $\alpha^a_1 =(AL^a_1,AX^a_1),~\alpha^a_2 =(AL^a_2,AX^a_2),~\alpha^q_3 =(AL^a_3,AX^a_3)$\\ & $AL^a_1=\{p_1:0.25,p_2:0.25, p_5:0.25,p_6:0.25\}$ and $ AX^a_1=\{\mathrm{a,b,h}\}$\\
			&$AL^a_2=\{ p_1:0.2, p_2:0.2, p_{5}:0.4, p_{7}:0.2\}$ and $AX^a_2=\{\mathrm{a,b,g,j}\})$\\
			&$AL^a_3=\{ p_{3}:0.2, p_{5}:0.1, p_{7}:0.25,  p_{11}:0.45\}$ and $AX^a_3=\{\mathrm{a,c,d,g}\})$\\
			\\[-1em]
			\hline
			\\[-1em]
			$\alpha_2 =  \langle\alpha^b_1,\alpha^b_2,\alpha^b_3 \rangle$&
			$\alpha^b_1 = (AL^b_1,AX^b_1),\alpha^b_2 = (AL^b_2,AX^b_2),\alpha^b_3 = (AL^b_3,AX^b_3)$\\
			&$AL^b_1=\{p_3:0.26, p_4:0.22, p_7:0.3, p_8:0.22,\}$ and $AX^b_1=\{\mathrm{d,e,h}\}$\\
			&$AL^b_2=\{p_6:0.13, p_7:0.2, p_{10}:0.2, p_{11}:0.47,\}$ and $AX^b_2=\{\mathrm{e,f,g}\}$\\
			&$AL^b_3=\{p_9:0.24, p_{10}:0.34, p_{13}:0.22, p_{14}:0.2\}$ and $AX^b_3=\{\mathrm{d,f}\}$\\
			\\[-1em]
			\hline
			\\[-1em]
			$\alpha_3 = \langle\alpha^c_1,\alpha^c_2\rangle$&
			$\alpha^c_1 = (AL^c_1,AX^c_1),~\alpha^c_2 = (AL^c_2,AX^c_2)$\\
			& $AL^c_1=\{p_1:0.2,p_2:0.4, p_6:0.1, p_7:0.3,\}$ and $AX^c_1=\{\mathrm{a,b,h}\}$\\
			&$AL^c_2=\{p_{5}:0.2, p_{6}:0.3, p_{10}:0.2,p_{11}:0.3,\}$ and $AX^c_2=\{\mathrm{a,g,h}\}$\\
			\\[-1em]
			\hline\hline
    		\end{tabular}}
		\caption{Sample wLAS-sequence database}
		\label{tab:1}
	\end{table}
	
\end{example}	

\begin{table}[H]
	\centering
	\resizebox{\textwidth}{!}{\begin{tabular}{l|lll}
		\hline\hline
		Relevance of   & $\alpha^a_1$ &$\alpha^a_2$ & $\alpha^a_3$\\
		$T= \langle \eta_1, \eta_2 \rangle$ in $\alpha_1$ &&&\\
		\cline{2-4}
        \hline
		\\[-1em]
		&  $\eta_1 \precsim \alpha^a_1$ & $\eta_1 \precsim \alpha^a_2$ & $\eta_1 \not \precsim \alpha^a_3$ \\
		\\[-1em]
		$\eta_1=(\{p_1,p_2\},\{\mathrm{a,b}\})$ & ${\cal R}((p_1,a),\alpha^a_1)=0.25$  & ${\cal R}((p_1,a),\alpha^a_2)= 0.2$ &${\cal R}(\eta_1,\alpha^a_3)=0$\\ 
		& ${\cal R}((p_2,a),\alpha^a_1)=0.25$&$ {\cal R}((p_2,a),\alpha^a_2)=0.2$&\\
		&${\cal R}((\{p_1,p_2\},a),\alpha^a_1)=0.5$&${\cal R}((\{p_1,p_2\},a),\alpha^a_2)=0.4$&\\
		&${\cal R}((\{p_1,p_2\},b),\alpha^a_1)=0.5$&${\cal R}((\{p_1,p_2\},b),\alpha^a_2)=0.4$&\\
		&${\cal R}((\eta_1,\alpha^a_1)=0.5$&${\cal R}(\eta_1,\alpha^a_2)=0.4$&\\
		\\[-1em]
		\\[-1em]
        \hline
		$\eta_2= (p_5,\mathrm{g})$ &$\eta_2 \not \precsim \alpha^a_1$&$\eta_2 \precsim \alpha^a_2$&$\eta_2 \precsim \alpha^a_3$\\
		\\[-1em]
		& ${\cal R}(\eta_2,\alpha^a_1)=0$  & ${\cal R}(\eta_2,\alpha^a_2)= 0.4$ &${\cal R}(\eta_2,\alpha^a_3)=0.1$\\	
		\\[-1em]
        \hline 
		\\[-1em]
		$1^{st}$ Match of T in $\alpha_1$, say $\beta_1$ & $\eta_1\precsim\alpha^a_1$ & $\eta_2\precsim \alpha^a_2$ & \\
		\\[-1em]
		${\cal R}(T, \beta_1) =0.9$ &${\cal R}(\eta_1,\alpha^a_1)$& + ${\cal R}(\eta_2,\alpha^a_2)$&\\ 
        \hline
		$2^{nd}$ Match of T in $\alpha_1$, say $\beta_2$& $\eta_1\precsim \alpha^a_1$ & & $\eta_2\precsim\alpha^a_3$ \\
		\\[-1em]
		${\cal R}(T, \beta_2) =0.6$ & ${\cal R}(\eta_1,\alpha^a_1)$ & & $+ {\cal R}(\eta_2,\alpha^a_3)$ \\ 
        \hline 
		$3^{rd}$ Match of T in $\alpha_1$,say $\beta_3$&& $\eta_1\precsim \alpha^a_2$ & $\eta_2\precsim \alpha^a_3$\\
		\\[-1em]
		${\cal R}(T, \beta_3) =0.5$ & & ${\cal R}(\eta_1,\alpha^a_2)$  & $+ {\cal R}(\eta_2,\alpha^a_3)$ \\\ \\
		&\multicolumn{3}{l}{
			${\cal R}_{max}(T,\alpha_1) = \max \{  {\cal R}(T, \beta_1), {\cal R}(T, \beta_2),{\cal R}(T, \beta_3)\}= \max\{0.9,0.6,0.5\}=0.9$ 
		}\\
		\\[-1em]
		\hline
		\\[-1em]
		&\multicolumn{3}{l}{
			${\cal R}_{max}(T,\alpha_2) =  0$    \hspace{10mm}($T \not \precsim \alpha_2$) 
		}\\
		\\[-1em]	
		&\multicolumn{3}{l}{
			${\cal R}_{max}(T,\alpha_3) = \max \{ \{ {\cal R}(\eta_1, \alpha^c_1)+ {\cal R}(\eta_2, \alpha^c_2)\}\}= \max\{\{0.6+0.2\}\}=0.8$ 
		}\\
		\\[-1em]
		&	\multicolumn{3}{l}{
			${\cal R}(T,{\cal PD}) =   {\cal R}_{max}(T, \alpha_1) +  {\cal R}_{max}(T, \alpha_2)+ {\cal R}_{max}(T, \alpha_3)= 1.7$ 
		}\\
		\hline \hline
	\end{tabular}}
	\caption{Relevance of Trajectory-pattern in wLAS-sequence data}
	\label{tabel:relevance}
\end{table}

\section{Top-K Pattern Mining Framework}
\label{sec:top-k}
Patterns from an anonymized data can be mined for a user specified $min\_relevance$ threshold. However, it is often difficult to set a proper threshold. A small threshold may produce thousands of patterns, whereas a high value may lead to no findings. Since the value of threshold is data-specific, setting a proper threshold for desired result size may be computationally challenging. We, therefore, propose a top-k framework for mining patterns from an anonymized data. In top-k mining framework, for a user specified pattern count--an interger value $k$, we intend to mine {\it top-k high-relevant trajectory patterns} from an anonymized data ${\cal PD}$ which is in a wLAS-sequence form. 
\begin{definition}[Top-k high relevant pattern]
	A trajectory-pattern $T$ is said to be a top-k high-relevant pattern if there are less than $k$ trajectory-patterns whose relevance score is more than the relevance of $T$ in the database ${\cal PD}$.  
\end{definition}

Our computational procedure for pattern mining over a wLAS-sequence data is influenced by a pattern growth approach $USpan$~\cite{yin2012uspan} which is designed for mining high-utility sequential patterns from a weighted itemset sequence data. The data in $USpan$ consists of one-dimensional sequences having terms as a set of items with their term-utility (that is quantity $\times$ unit-price). However, in our framework, a wLAS-sequence data contains two-dimensional sequences where each term is a pair having spatial-information (a weighted-locations-set) and textual information (an activity-set). The similarity in one of the dimensions of a wLAS-sequence data (i.e., weighted-location-set) with that of sequences in $USpan$ suggests a possible adoption of $USpan$ as a three-step procedure. First, transform a two-dimensional wLAS-sequence data into an equivalent one-dimensional data (required as an input for $USpan$). Second, apply $USpan$ to find one-dimensional relevant patterns from the transformed data. Third, map the mined patterns to the corresponding two-dimensional patterns. Though this adaptation of $USpan$ as $USpan2D$~\cite{Mintra16} can find high-relevant patterns from wLAS-sequence data, the procedure is computationally expensive. As shown \cite{Mintra16}, $USpan2D$ generates many {\it invalid} one-dimensional patterns that are not equivalent to any two-dimensional patterns in the wLAS-sequence data. Moreover, these invalid patterns cannot be discarded as their extension may generate valid two-dimensional high-relevant patterns. Inability to discard invalid patterns in the transformed space makes the search space substantially large and the $USpan2D$ procedure computationally inviable.

To overcome this difficulty of finding patterns efficiently, a two-dimensional pattern growth approach, namely $Mintra$ \cite{Mintra16}, is proposed.  
 $Mintra$, a non-trivial extension of the one-dimensional pattern growth technique, avoids duplicate pattern generation by restricting repetition in exploration. Though $Mintra$ outperforms $USpan2D$ in term of execution time, finding enough number of patterns in $Mintra$ is not always easy. A threshold value that returns a desired number of patterns is highly data dependent.
This is because the relevance score of patterns may be small numbers for a sparse-data whereas it can be larger numbers for a dense-data. This variation in the relevance scores of patterns makes it hard to choose a proper threshold value for the desired result size over different data sets. In this work, we resolve this issue  by proposing a top-k framework that can efficiently mine top-k high relevant trajectory-patterns for a user-provided parameter $k$. Since there exists no prior work on top-k pattern mining from an anonymized trajectory data, for comparison of our proposed technique, we first discuss a baseline approach that is designed using $Mintra$. An advantage of using $Mintra$ in defining the baseline algorithm is that $Mintra$ is efficient than the one-dimensional pattern growth algorithm $USpan2D$. 

\subsection{$MintraBL$ : A Baseline Algorithm}
\label{sec:mintraBL}
In our baseline algorithm $MintraBL$, we maintain a list `$TopKList$' that contains the best top-k patterns during the exploration using $Mintra$. The patterns in the $TopKList$ are sorted over their relevance score. At any stage of $MintraBL$, the minimum relevance score of the patterns in the  $TopKList$  is set as the current $threshold$. Thus, the $threshold$ value in the $MintraBL$ keeps changing while exploring the search space. The steps of our baseline algorithm are as follows:
\begin{enumerate}
	\item  To start the $TopKList$ is initialized as empty and the threshold value is initialized as $0$.
	\item The wLAS-sequence data is scanned using $Mintra$ to find patterns with their relevance score. 
	\item A pattern having relevance score more than the current $threshold$ is inserted into the $TopKList$ (at an appropriate position); the length of the list is maintained to $k$ if it has become more than $k$ (by dropping a pattern with the least relevance in the list); and, the $threshold$ is modified (to the updated minimum relevance of the current list) for the future exploration.
	\item $MintraBL$ terminates by exploring all the patterns generated using $Mintra$ and returns with the top-k patterns in the $TopKList$.
\end{enumerate}

\subsubsection{Disadvantage of Baseline Algorithm}
Though $MintraBL$ mines top-k patterns correctly, it traverses too many candidate trajectory-patterns. The limitation of the baseline algorithm is that it fails to capitalize the advantage of top-k requirement. There are two main disadvantages to the execution of the baseline approach.
\begin{itemize}
	\item First, in the baseline the $threshold$ value is dynamic, i.e., it changes during the execution. Initially, the $threshold$ is set at $0$, and it gets modified with the addition of patterns in the $TopKList$. This results in the exploration of a high number of patterns initially in the search tree and until the $threshold$ value is not raised substantially.
	\item Second, the value of the $threshold$ is modified based upon the relevance of the  candidates inserted in the $TopKList$. Therefore, the order in which the search tree is explored and the candidates are inserted into the $TopKList$ affects the size of the search tree. In between various possibilities for growing the current pattern,  choosing those patterns early that may have high relevance are expected to increase the threshold sharply and to have relatively smaller search space. However, in $MintraBL$ patterns are picked based upon its lexicographic order than that of its relevance score.  
\end{itemize}    
Based on these observations, we propose two efficiency heuristics, namely {\it threshold initialization (TI)} that resolve the issues with `low initial threshold value', and {\it threshold updation (TU)} that `update threshold optimally at each stage'. Our proposed top-k pattern mining approach over anonymized data uses these two heuristics together with two early termination strategies, namely {\it depth pruning} and  {\it width pruning} ~\cite{yin2012uspan}. We discuss the details of the $TopKMintra$ in the next section.

\subsection{TopKMintra}
\label{sec:topKMintra}

\noindent TopKMintra grows trajectory-patterns in both the spatial and the textual dimensions simultaneously. For this, there are three basic term-concatenation operations, namely location\hyp{}concatenation, activity\hyp{}concatenation, and sequential\hyp{}concatenation, denoted by   l\hyp{}\textit{concatenation}, a\hyp{}{\it concatenation}, and s\hyp{}{\it concatenation}  respectively. In s\hyp{}concatenation, a trajectory-pattern is extended by appending a term $(location, activity)$ at the suffix of the current trajectory-pattern. Whereas in a l\hyp{}concatenation or in an a\hyp{}concatenation, a trajectory-pattern is extended by appending a location or an activity to the last term of the current trajectory-pattern, respectively. An application of these term-concatenation operations on a trajectory-pattern and the resultant trajectory-pattern is shown in Table~\ref{table:concatenation}. 

\begin{table}[h!]
	\begin{center}
		\begin{tabular}{lll}
			\hline\hline
			\\[-1em]
			Trajectory-Pattern & Operation & Resultant Trajectory Pattern\\
			\\[-1em]
			\hline
			\\[-1em]
			$(p_1,\mathrm{a})(p_2,\{\mathrm{b,c}\})$& l\hyp{}concatenation by $p_3$ & $(p_1,\mathrm{a})(\{p_2,p_3\},\{\mathrm{b,c}\})$\\
			& a\hyp{}concatenation by $\mathrm{d}$&$(p_1,\mathrm{a})(p_2 ,\{\mathrm{b,c,d}\})$ \\
			& s\hyp{}concatenation by $(p_4,\mathrm{e})$&$(p_1,\mathrm{a})(p_2,\{\mathrm{b,c}\})(p_4,\mathrm{e})$\\
			\\[-1em]
			\hline\hline
		\end{tabular}
		\caption{Term concatenation operations over Trajectory-patterns}
		\label{table:concatenation}
	\end{center}
\end{table}  
 From a current-pattern, for a given location and an activity term, an l-concatenation followed by an a-concatenation or an a-concatenation followed by an l-concatenation will both generate the same child pattern. To avoid this repetition of patterns, we imposed a restriction on location extension after an activity extension. For example, in the Fig.~\ref{fig:tree-trav} depicting pattern-growth, the pattern $(p_1,\{\mathrm{a,b}\})$ is not extended in location dimension as it is generated through activity extension. 
\begin{figure}[h!]
	\centering\includegraphics[width=0.5\textwidth]{./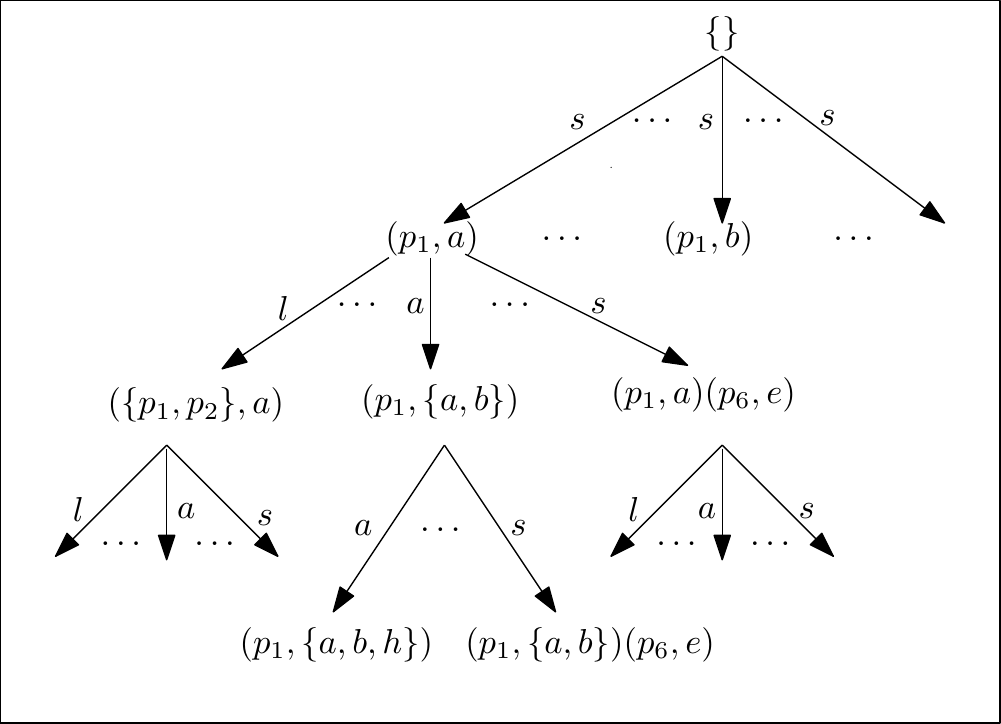}
	\caption{\label{fig:tree-trav} Snapshot of a search tree }
\end{figure}
This restriction on pattern generation not only improves on execution time by avoiding repetition but it is also necessary for finding correct top-k patterns by restricting duplicates. However, the actual efficiency of TopKMintra over the baseline is due to the two heuristics, namely {\it threshold initialization (TI)} and {\it threshold updation (TU)}, that improves on the above mentioned two limitations of the baseline approach--initializing and updating the threshold value; and due to the couple of pruning strategies, namely {\it depth pruning} and {\it width pruning}, that ensures early termination.  

To overcome the limitation of the threshold initialized with $0$ in $MintraBL$, that results in generating a high number of unpromising patterns, we use the following heuristic.
\paragraph*{\bf Strategy 1: Threshold initialization (TI).}  {\it   We initialize the $TopKList$ with patterns of length $1$, length $2$ and complete wLAS-sequences from the database}, sorted based on their relevance-score. The initialization takes place before the actual mining starts and requires a single data scan that can be performed together with other initialization tasks, such as relevance matrix formation (discussed later while explaining the algorithmic procedure). 
The minimum relevance score of these initial patterns in the $TopKList$ is assigned as the initial $threshold$.  As shown through experiments in Section \ref{sec:exp}, the TI heuristic raises the initial threshold value substantially, restrict the unpromising candidates early, and therefore, reduces the search space and the running time.  \qed \\

TI strategy initializes the threshold value to some non-zero value based on the pre-insertion of patterns as mentioned above. Let us consider a pattern which is discarded to be a candidate in $TopKList$ due to this change (say $TopKList_1$); however, assume that it is selected as a candidate in the $TopKList$ when the initial threshold was set to be zero (say $TopKList_2$). Clearly, such a pattern cannot be a top-k pattern. For if it is a valid top-k pattern then its score should be more than the lowest score candidate in the current $TopKlist_1$. This is contradictory to the fact that it was discarded to be a candidate in $TopKList_1$. Such a pattern even if it enters in the $TopKList_2$ has to leave it at some stage in the future. Thus, in both the cases, it can not be a valid candidate eventually. 

Further, we are working on a fixed search space where we are pruning or reordering it based on some fixed strategies to make exploration efficient. If we assume that there is no other strategy except threshold initialization, then the patterns will be picked from the search space in the same order. In that case, it is easy to see that the threshold due to preinsertion will always remain greater or equal to the threshold when we start with zero initial threshold. As a result, any pattern which is a candidate in $TopKList_1$ will also be a candidate in $TopKList_2$. Thus, if other strategies are correct TI neither generates false positive nor it generates false negative.\\
   
\noindent In a fixed-threshold based pattern growth algorithms, the search space is independent of the exploration order. However, for the correctness of such an algorithm, non-repetition of patterns need to be ensured. In $Mintra$, the same is done by introducing a lexicographic order over the term-concatenation, and therefore, by maintaining an order over the search tree. For the top-k approach,  the threshold value changes during the execution. Therefore, an early addition of candidates in the $TopKList$ that raises the threshold sharply may result in generating less unpromising patterns. This leads us to our second heuristic-- {\it threshold update}. We first introduce PTR-order for trajectory patterns and for the term-concatenation operation that is instrumental in defining our next heuristic. We start with  {\it Projectod Trajectory-pattern Relevance (PTR)} which in turn is defined using {\it pivot-match relevance} and {\it projected-subsequence relevance}.

\begin{definition}[Pivot-match relevance]
	\label{defn:pm-relevance}
	For a pivot-match (Definition~\ref{defn:pivot-match}) of a trajectory pattern $T$ in a wLAS-sequence $\alpha$, its maximum relevance is called the pivot-match relevance, and is denoted by ${\cal R}_{PM} (T,\alpha)$.
\end{definition} 
\noindent For trajectory-pattern  $T=\langle \eta_1 ,\eta_2 \rangle$ and its matching sequence $\alpha_1$ in Table~\ref{tabel:relevance}, there are three exact match $\beta_1$, $\beta_2$ and $\beta_3$. Clearly, $\beta_1=\langle \alpha^a_1, \alpha^a_2 \rangle$ is a pivot-match, where ${\cal R}_{PM} (T,\alpha_1) = \max \{ {\cal R}(T,\beta_1)\}=\max\{0.9\}=0.9$. The pivot-match of a trajectory pattern in a matching sequence need not to be unique. For example consider the matching sequence $\alpha_2$ of a trajectory-pattern $T^\prime=\langle \nu_1 \nu_2 \rangle$, where $\nu_1=(p_7,\mathrm{e})$, $\nu_2= (p_{9},\mathrm{d})$. The pattern $T^\prime$ have exact match in subsequences of $\alpha_2$, i.e., in  $\alpha^2_1\cdot\alpha^2_3$, say $\beta_1$, and in $\alpha^2_2\cdot\alpha^2_3$, say $\beta_2$. Clearly, both of them are pivot-match, where ${\cal R}_{PM} (T^\prime,\alpha_2) = \max\{{\cal R}(T^\prime,\beta_1),{\cal  R}(T^\prime,\beta_2)\} =\max\{\{0.3+0.24\},\{0.2+0.24\}\}=0.54$

\begin{definition}[Projected-subsequence relevance]
	For a trajectory pattern $T$ and a wLAS-sequence $\alpha$, the  projected-subsequence relevance of $T$ in $\alpha$ is the sequence-relevance of its projected-subsequence (Definition~\ref{defn:pivot-match}). We denote it by ${\cal R}_{rest} (T,\alpha)$ or by ${\cal R}(\overline{\alpha})$. 
\end{definition}
The {\it projected subsequence} of trajectory-pattern $T$ (Table~\ref{tabel:relevance}) in matching sequence $\alpha_1$ (Table~\ref{tab:1}) is $\langle (\{p_{7}:0.2,\{\mathrm{j}\}),(\{p_3:0.2,p_{5}:0.1, p_{7}:0.25,p_{11}:0.45\},\mathrm{\{a,c,d,g\}})\rangle$, and the projected-subsequence relevance, i.e.,  ${\cal R}_{rest} (T,\alpha_1)$  is $1.2$. \\

\noindent We now define the {\it projected trajectory-pattern relevance (PTR)} of a trajectory pattern in a matching wLAS-sequence and in the wLAS-sequence database.

\begin{definition}[Projected Trajectory-pattern Relevance (PTR)] 
	\label{defn:PTR}
	For a trajectory pattern $T$, a wLAS-sequence data ${\cal PD}$ and a matching wLAS-sequence $\alpha$ of $T$ in ${\cal PD}$,
	\begin{itemize}
		\item The projected trajectory-pattern relevance of $T$ in $\alpha$, denoted by $PTR(T,\alpha)$, is the sum of the pivot-match relevance and the projected subsequence relevance, i.e.,
		$$ PTR(T, \alpha) = {\cal R}_{PM}(T,\alpha) + {\cal R}_{rest}(T,\alpha)$$
		\item The projected trajectory-pattern relevance of $T$ in ${\cal PD}$, denoted by $PTR(T)$, is the sum of projected trajectory-pattern relevance of $T$ over all the matching sequences $\alpha$ in ${\cal PD}$. i.e,
		$$ PTR(T) = \sum_{\alpha\in{\cal PD} }PTR(T,\alpha)$$
	\end{itemize}	  	
\end{definition}
 For a trajectory-pattern $T$ (Table~\ref{tabel:relevance}) and wLAS-sequence database (Table~\ref{tab:1}), the projected trajectory-pattern relevance of $T$ in $\alpha_1$ is $PTR(T,\alpha_1)=2.1$ (as ${\cal R}_{PM} (T,\alpha_1) =0.9$ and ${\cal R}_{rest} (T,\alpha_1) =1.2$). Also, $PTR(T,\alpha_2) =0$ ($T \not \precsim \alpha_2$) and $PTR(T,\alpha_3)=1.6$ (as ${\cal R}_{PM} (T,\alpha_3) =0.8$, ${\cal R}_{rest} (T,\alpha_3) =0.8$ ), and  therefore, $PTR(T)=3.7$.\\ 

\noindent We now define the PTR-order for trajectory patterns and for the concatenation operation.

\begin{definition}(PTR-order for trajectory-patterns)
	Let $T$ be a trajectory-pattern which is extended to $T_x$ and $T_y$ by a single term-concatenation operation by term $x$ and $y$ respectively. Then we say pattern $T_x$ is prior to $T_y$, denoted by $T_x < T_y$, if 
\begin{itemize}
	\item Both $T_x$ and $T_y$ are concatenated through the same operation, and
	\begin{itemize}
		\item either $PTR(T_x) > PTR (T_y)$.
		\item or if $PTR(T_x) = PTR (T_y)$ then $x$ is lexicographically smaller than $y$.
	\end{itemize} 
	\item $T_x$ is l-concatenated and $T_y$ is either a-concatenated or s-concatenated.
	\item $T_x$ is a-concatenated and $T_y$ is s-concatenated.
\end{itemize} 
\end{definition}
PTR-order inserts those candidates early in the $TopKList$ for which PTR (Definition~\ref{defn:PTR}) is high as compare to other candidates at that stage. In the PTR-order, the execution preference is given to l-concatenation, then to a-concatenation and then to s-concatenation. We use PTR-order for  trajectory-patterns to induce an order over term-concatenation.
\begin{definition}(PTR-order based term-concatenation)
For a trajectory-pattern $T$, a term $x$ is prior to term $y$ for term-concatenation, denoted by $x \prec y$, if $T_x < T_y$.
\end{definition}		
\paragraph*{\bf Strategy 2: Threshold update.}
We propose to use the PTR-order to sort the terms in the term-concatenation lists, namley s-list, a-list and l-list. The proposed PTR-order not only ensure correctness, it also reduce the search space. This is due to quick rise in the threshold value during the stages by choosing candidates that are expected to have high relevance score, and which is also experimentally verified in Section~\ref{sec:exp}. \qed

Now we justify that threshold update do not generate any false negative or false positive pattern as a top-k pattern. At any stage, the threshold update explores the search space by giving preference to the patterns having higher relevance as compared to other candidates at that stage. Thus, if a pattern is discarded to be a candidate it must have a relevance score lower than the lowest relevance of the current candidate list. Thus, it can not be a valid top-k pattern as we have already a set of $k$ patterns having relevance more than this pattern. If we assume that the other strategies are correct, threshold update only changes the order in which the search space is explored, i.e., all the patterns will be explored to be a valid candidate but in a different order. Now if a pattern is in the actual top-k list, it has to be in the final candidate list reported through threshold update. If it is not, then the lowest threshold of the candidates in the reported list will be lesser than this missing pattern, and this contradicts the fact that all the candidates were explored in the search space. Thus if all other strategies are correct, the threshold update returns the correct top-k list. 

The above two strategies manage threshold value so as to avoid generating many unpromising patterns. However, we still need to scan the search tree  substantially to decide on top-k relevant patterns.  For this, we suggest two pruning strategies,  namely {\it depth pruning} and {\it width pruning}~\cite{yin2012uspan}, that help in early termination with the best top-k patterns. 

In $TopKMintra$, the projected data is scanned to find terms for concatenation. Three lists, viz., $l$-list, $a$-list and $s$-list, are maintained that contains terms for respective concatenation operation. In width pruning, while maintaining the $l$-list and $s$-list for the future exploration corresponding to the current trajectory-pattern, only promising items are appended to the respective list. However, in a depth pruning strategy, further exploration of a trajectory-pattern is stopped when no more high relevant pattern as an extension of the current pattern in possible based on the information from the projected data, i.e., it  stopped going deeper into the search tree. 


\paragraph*{\bf Strategy 3: Width Pruning~\cite{yin2012uspan}.}
To avoid selecting unpromising items for the $s$-list and $l$-list, we adopt a width pruning strategy. The strategy is based on sequence-relevance downward closure property (SDCP) similar to USpan. To define SDCP, we first discuss the matching sequence-relevance of a trajectory-pattern.
\begin{definition}(Matching Sequence-relevance) For a pattern $T$, its matching sequence-relevance, denoted by $MSR(T)$, is defined as the sum of the sequence-relevance (Definition \ref{defn:sr}) of all its matching wLAS-sequences  $\alpha \in {\cal PD}$ (Definition \ref{defn:match}). i.e.,
$$ MSR(T) = \sum_{T \precsim \alpha, \alpha\in{\cal PD}} {\cal R}(\alpha)$$
\end{definition}
\begin{theorem}(Sequence-relevance Downward Closure Property) For a given wLAS-sequence database, and two trajectory patterns $T_1$ and $T_2$ such that $T_1 \subseteq T_2$, we have $$ MSR(T_2) \le MSR(T_1)$$
\end{theorem}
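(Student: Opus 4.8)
The plan is to show that passing from a pattern to one of its sub-patterns can only \emph{enlarge} the set of matching wLAS-sequences, and then to exploit the fact that sequence-relevance is non-negative so that summing over a larger index set cannot decrease the total. Thus the whole argument reduces to one containment of sets plus a one-line monotonicity-of-sums observation.

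First I would establish the key lemma: if $T_1 \subseteq T_2$ (under the natural extension of wLAS-containment to trajectory-patterns, i.e. an order-preserving injection of the terms of $T_1$ into those of $T_2$ with each cell-set and each activity-set of $T_1$ a subset of its image), then $T_2 \precsim \alpha$ implies $T_1 \precsim \alpha$. Unwinding Definition~\ref{defn:match}, $T_2 \precsim \alpha$ yields a subsequence $\beta = \langle \beta_1,\ldots,\beta_n\rangle \subseteq \alpha$ with $T_2 \sim \beta$, so $\beta$ is built from actual terms of $\alpha$ whose location-sets and activity-sets agree term-by-term with those of $T_2$. The containment $T_1 \subseteq T_2$ then maps each term of $T_1$ into some $\beta_{j_k}$ so that the $k$-th cell-set of $T_1$ is contained in $\beta_{j_k}.loc$ and its $k$-th activity-set in $\beta_{j_k}.act$. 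Restricting each such term of $\alpha$ to exactly the cells and activities prescribed by $T_1$ produces a subsequence $\gamma \subseteq \alpha$ (wLS- and wLAS-containment survive because we only shrink location- and activity-sets, and the order is inherited) with $T_1 \sim \gamma$, hence $T_1 \precsim \alpha$. Equivalently, $\{\alpha\in{\cal PD} : T_2 \precsim \alpha\} \subseteq \{\alpha\in{\cal PD} : T_1 \precsim \alpha\}$.

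Next I would record that ${\cal R}(\alpha) \ge 0$ for every wLAS-sequence $\alpha$: by Definition~\ref{defn:sr} it is the relevance of the $r$-pattern of $\alpha$, and by Definition~\ref{defn:relevance} each term-relevance is a sum (over cells) or a minimum (over activities) of cell-weights $w(p,\alpha_t) \ge 0$, so any finite aggregation of term-relevances is $\ge 0$. Combining this with the set containment above gives
$$ MSR(T_2) \;=\; \sum_{T_2 \precsim \alpha,\ \alpha\in{\cal PD}} {\cal R}(\alpha) \;\le\; \sum_{T_1 \precsim \alpha,\ \alpha\in{\cal PD}} {\cal R}(\alpha) \;=\; MSR(T_1), $$
since the left-hand sum ranges over a subset of the index set of the right-hand sum and every summand is non-negative.

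I expect the main obstacle to be the first step, specifically making the ``restrict the matched terms of $\alpha$'' construction fully rigorous: one must check directly from the definitions of wLS-, wLAS-, and wLAS-sequence containment and of exact-match ($\sim$) that shrinking the cell-sets and activity-sets inside the terms of $\alpha$ that exact-match $T_2$ really does yield a valid subsequence of $\alpha$ that exact-matches $T_1$ (in particular that the induced index order is strictly increasing and that no term becomes empty, which is guaranteed because each term of $T_1$ is nonempty by definition). Everything downstream --- non-negativity of ${\cal R}(\alpha)$ and the subset-sum inequality --- is routine.
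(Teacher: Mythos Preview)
Your proposal is correct and follows essentially the same route as the paper: establish the set containment $\{\alpha\in{\cal PD}: T_2\precsim\alpha\}\subseteq\{\alpha\in{\cal PD}: T_1\precsim\alpha\}$ by producing, from any exact match of $T_2$ in $\alpha$, an exact match of $T_1$ in $\alpha$, and then conclude by summing non-negative sequence-relevances over a smaller index set. The paper's proof is terser (it asserts the existence of $\alpha_1\subseteq\alpha_2$ with $T_1\sim\alpha_1$ without spelling out the restriction construction or the non-negativity of ${\cal R}(\alpha)$), but the logical skeleton is identical to yours.
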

\begin{proof}
Let $\alpha \in{\cal PD}$ such that $T_2 \precsim \alpha$, i.e., there exists $\alpha_2 \subseteq \alpha$ s.t. $T_2 \sim \alpha_2$. Since $T_1 \subseteq T_2$, there must exists a $\alpha_1 \subseteq \alpha_2$ such that $T_1 \sim \alpha_1$. This implies that $T_1 \precsim \alpha$, i.e.,
$$ \{\alpha ~|~ \alpha \in {\cal PD} \wedge  T_2 \precsim \alpha\} \subseteq \{\beta ~|~ \beta \in {\cal PD} \wedge  T_1 \precsim \beta\}$$
Hence,
$$ MSR(T_2) \le MSR(T_1)$$
\end{proof}
For the term under consideration for the concatenation operation, we compute the matching sequence-relevance corresponding to all the matching sequences. If the matching sequence-relevance is more than the current threshold, the term is promising. Otherwise, the term is unpromising. 
\paragraph*{\bf Strategy 4: Depth Pruning \cite{yin2012uspan}.}  If the projected trajectory relevance of a trajectory pattern $T$ in wLAS-sequence data ${\cal PD}$, i.e., $PTR(T)$ (Definition~\ref{defn:PTR}), is less than the current threshold then $T$ and its offspring cannot be relevant pattern. In this situation, we can stop going deeper into the search tree and rather backtrack. The depth pruning strategy is based on the following result.

\begin{theorem}
Given a trajectory pattern $T$ and wLAS-sequence data ${\cal PD}$, the relevance of $T$ and $T$'s childern in the search tree are no more than $PTR(T)$. 
\end{theorem}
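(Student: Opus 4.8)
The plan is to reduce the statement to a single wLAS-sequence and then prove the resulting per-sequence bound by cutting any embedding at the pivot-term of $T$. First I would note that ${\cal R}(T,{\cal PD})=\sum_{\alpha\in{\cal PD}}{\cal R}_{max}(T,\alpha)$ and $PTR(T)=\sum_{\alpha\in{\cal PD}}PTR(T,\alpha)$, with both terms vanishing on sequences that do not match $T$, and that every child $T'$ of $T$ satisfies $T\subseteq T'$, so $T'\precsim\alpha$ forces $T\precsim\alpha$ and ${\cal R}(T',{\cal PD})$ is a sum over a subset of the sequences carrying $PTR(T)$. Hence it is enough to prove, for every $\alpha$ with $T\precsim\alpha$ and every $S$ that is $T$ itself or a child of $T$, the inequality
\[
 {\cal R}_{max}(S,\alpha)\ \le\ PTR(T,\alpha)\ =\ {\cal R}_{PM}(T,\alpha)+{\cal R}_{rest}(T,\alpha);
\]
I would actually establish this slightly more general form in which $S$ may be any descendant of $T$, since that is what the depth-pruning strategy really needs, and then summing over $\alpha\in{\cal PD}$ gives the theorem.

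To prove the per-sequence bound I would fix $\alpha$, let $\alpha_a$ be the pivot-term of $T$ in $\alpha$, and recall that ${\cal R}_{rest}(T,\alpha)={\cal R}(\overline{\alpha}_T)$ equals the total weight of all cells occurring in $\alpha_{a+1},\ldots,\alpha_m$ plus the weight of the residual cells of $\alpha_a$ that lie after the last cell of $T$'s final term. Given any exact match $\beta$ of $S$ in $\alpha$, I would split ${\cal R}(S,\beta)$ into the cell-weights contributed by $\alpha$-terms of index greater than $a$ together with the cell-weights added by the concatenation(s) that built $S$ from $T$, on one side, and the cell-weights of the embedded copy of $T$ sitting at indices at most $a$, on the other. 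The first side is bounded by ${\cal R}_{rest}(T,\alpha)$: the later terms are counted in full inside $\overline{\alpha}_T$, and every fresh cell or activity added by an extension lands in this region too — a cell appended by l-concatenation has index larger than every cell already in the last term, hence is a residual cell of $\alpha_a$ (when $T$'s final term is matched in $\alpha_a$) or sits in a full later term; an a-concatenation appends a weightless activity; and an s-concatenation appends a term matched strictly after $\alpha_a$.

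The remaining work, and the step I expect to be the real obstacle, is bounding the second side — the weight carried at indices $\le a$ by the copy of $T$ embedded in $\beta$ — by ${\cal R}_{PM}(T,\alpha)$. When that copy has its last term matched exactly at $\alpha_a$ it is a pivot-match and the bound is immediate from the definition of ${\cal R}_{PM}$ as the maximum over pivot-matches. The hard case is when its last term falls at some index $c>a$, so only a proper prefix of $T$ is realized at indices $\le a$; there I would argue that this prefix completes to a genuine pivot-match of $T$ (matching the remaining terms of $T$ starting from $\alpha_a$, which by definition admits $T$'s final term), so its relevance is dominated by a pivot-match relevance, or equivalently that one may first replace $\beta$'s embedded $T$ by the actual pivot-match, which merely moves the surplus term at index $c$ into the already fully-counted ${\cal R}_{rest}$ part without lowering the bound. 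Combining the two sides gives ${\cal R}(S,\beta)\le{\cal R}_{PM}(T,\alpha)+{\cal R}_{rest}(T,\alpha)$, hence ${\cal R}_{max}(S,\alpha)\le PTR(T,\alpha)$. This is the two-dimensional analogue of the depth-pruning correctness argument for USpan~\cite{yin2012uspan}; the genuinely delicate point is making that completion/replacement argument case-free in our setting, where the l-concatenation (cell-index order), a-concatenation (lexicographic activity order) and s-concatenation rules interact and one must track precisely which cell-weights are charged to the pivot-match and which to the residue.
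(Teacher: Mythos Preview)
Your approach is essentially the same as the paper's: reduce to a per-sequence inequality ${\cal R}_{max}(S,\alpha)\le {\cal R}_{PM}(T,\alpha)+{\cal R}_{rest}(T,\alpha)$ and then sum over $\alpha\in{\cal PD}$. The paper's own proof is considerably terser --- it simply observes that the concatenated term $t$ lives in $\overline{\alpha}_T$ so its match contributes at most ${\cal R}_{rest}(T,\alpha)$, and then asserts (``it is easy to observe'') that ${\cal R}_{max}(T',\alpha)\le {\cal R}_{PM}(T,\alpha)+{\cal R}_{rest}(T,\alpha)$ without separating out your ``hard case'' where the embedded copy of $T$ ends strictly past the pivot index $a$; your completion/replacement argument for that case, and your explicit handling of how l-, a-, and s-concatenations land inside $\overline{\alpha}_T$, are exactly the details the paper elides, and your extension from immediate children to all descendants is the form the depth-pruning strategy actually uses.
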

\begin{proof}
For $T$, a given trajectory-pattern, and for any $\alpha \in {\cal PD}$ such that $T\precsim \alpha$, say, the projected subsequence of $T$ in $\alpha$ is $\overline{\alpha_T}$ (Definition~\ref{defn:pivot-match}). Now, for any term $t$ in the projected subsequence $\overline{\alpha_T}$, the relevance of any of the match of $t$ in $\overline{\alpha_T}$ is no more than the total relevance of the projected subsequence ${\cal R}_{rest}(T,\alpha)$. 

Let $T^\prime$ be the pattern generated by concatenating the term $t$ in $T$. From the above argument, it is easy to observe that the maximum relevance of $T^\prime$ in $\alpha$, i.e., ${\cal R}_{max}(T^\prime,\alpha)$ can be no more that the  ${\cal R}_{PM}(T,\alpha) + {\cal R}_{rest} (T,\alpha)$ ($= PTR (T,\alpha)$). Therefore,
\begin{eqnarray*}
{\cal R}(T^\prime,{\cal PD})& = & \sum_{\alpha\in{\cal PD}}{\cal R}_{max}(T^\prime,\alpha)\\ 
            & \le & \sum_{\alpha\in{\cal PD}}{PTR(T,\alpha)}\\
            &=& PTR(T)
\end{eqnarray*}
Hence the result.
\end{proof}

\begin{algorithm*}[!h]
	\caption{TopKMintra}
    \label{algo:TopKMintra}
	\begin{algorithmic}[1]
    	\Inputs{Anonymized database ${\cal PD}$ in wLAS-sequence form,\\ user provided top-k parameter `k'}
    	\Output{ $TopKList$ containing Top-k spatio-textual sequential pattern}
    	\GlobalState{$TopKList$: list of size `k'to store top-k patterns\\ $threshold$: To compare relevance of a patterns}
		\Initialize{\strut Projected-relevance Matrix (PRM)\\
		            \strut lookup table for activities (Act)\\
		            \strut $TopKList \gets \phi$\\
		            \strut $threshold \gets preInsertion({\cal PD}, k,TopKList)$
		}
		\State \hspace*{1em}
  		\State $Node \gets root$   
		\State $s\_loc-list \gets \mbox{Scan PRM to find location for serial concatenation}$ 
		\State Perform width pruning to remove unpromising items ($u\_loc$) from $s\_loc-list$
		\For{each $loc \in s\_loc-list$}
		\State $Node \gets Node + Serial(loc)$ 
		\State Compute $Node.PTR$  \Comment{Using pivot match, pivot relevance and subsequence relevance }
        \If{$(Node.PTR >=  threshold)$}
		\State Call $TopKMintra\_update\_act(Node,PRM,Act,k,TopKList,empty\-list)$
		\State Call $TopKMintra\_update\_i-loc(Node,PRM,Act,k,TopKList)$
		\EndIf
		\EndFor
		\State return $TopKList$
	\end{algorithmic}
\end{algorithm*}

\paragraph*{\bf TopKMintra Algorithm.} We are now ready to discuss our TopKMintra algorithm. The algorithm is illustrated in Algorithm~\ref{algo:TopKMintra} with two subroutines Algorithm~\ref{algo:TopKMintraUpdateAct} and Algorithm~\ref{algo:TopKMintraUpdate-x-loc}. The input for TopKMintra is an anonymous database ${\cal PD}$ in wLAS-sequence form and the user provided top-k parameter $k$; the output contains the top-k high relevant trajectory-patterns from ${\cal PD}$ in the $TopKList$. 

The main algorithm TopKMintra first initializes data structure projected-relevance matrix (PRM) similar to the utility matrix in \cite{yin2012uspan}, a lookup table for activities (ACT), TopKList and initial threshold. The projected-relevance matrix for each wLAS-sequence contains a matrix in loc-ids (present in the sequence) and term-id. Each entry in the matrix is a tuple where the first value is the relevance of the matching loc-id in the term-id (zero if it is not present) and the second value is the remaining subsequence relevance.  The two advantage of computing PRM is that-- first, we can find list of term concatenation (line 7 in Algorithm~\ref{algo:TopKMintra}, line 1 in Algorithm~\ref{algo:TopKMintraUpdateAct} and Algorithm~\ref{algo:TopKMintraUpdate-x-loc}); and second, we can compute $Node.relevance$ and $Node.PTR$ which is the relevance and projected trajectory relevance of the current trajectory-pattern respectively (line 11 in Algorithm~\ref{algo:TopKMintra}, line 4 in Algorithm~\ref{algo:TopKMintraUpdateAct} and line 5 in Algorithm~\ref{algo:TopKMintraUpdate-x-loc}). The data structure $Act$ is an inverted list of (activity,term-id) to the list of wLAS-sequence-ids. This stores the information of the list of wLAS-sequence in which the activity is present at a particular term-id. 

The two subroutines Algorithm~\ref{algo:TopKMintraUpdateAct} and Algorithm~\ref{algo:TopKMintraUpdate-x-loc} are required to maintain the order of execution (line 13-14 in Algorithm~\ref{algo:TopKMintra}, line 10-11 in Algorithm~\ref{algo:TopKMintraUpdateAct} and line 7-8 in ALgorithm~\ref{algo:TopKMintraUpdate-x-loc}) and restricting $l$-concatenation after the $a$-concatenation (line 10-11 in Algorithm~\ref{algo:TopKMintraUpdateAct}) to avoid  repetition. The algorithm~\ref{algo:TopKMintraUpdate-x-loc} is a common routine for $l$-concatenation and $s$-concatenation where $x$-loc represent both $l$-loc and $s$-loc. The width pruning while extending the location is applied (line 8 in Algorithm~\ref{algo:TopKMintra} and line 2 in Algorithm~\ref{algo:TopKMintraUpdate-x-loc}) to consider only promising terms for concatenation. To backtrack instead of going deeper and returning with nothing, we have applied depth pruning (line 12-15 in Algorithm~\ref{algo:TopKMintra}, line 9-12 in Algorithm~\ref{algo:TopKMintraUpdateAct} and line 6-9 in Algorithm~\ref{algo:TopKMintraUpdate-x-loc}). The pattern is valid only when we append an activity (line 2-3 in Algorithm~\ref{algo:TopKMintraUpdateAct}. For every such valid pattern, we check if the relevance of the generated pattern is more than the current threshold (line 5 in Algorithm~\ref{algo:TopKMintraUpdateAct}). For relevant pattern, we update the $TopKList$ by appending the trajectory-pattern at an appropriate place in TopKList and update the current threshold (line 6-7 in Algorithm~\ref{algo:TopKMintraUpdateAct}. The routine of the TopKMintra is self explanatory which recursively calls various routines for updating $TopKList$ and return with  top-k most relevant trajectory-patterns in wLAS-sequence data ${\cal PD}$.

\begin{algorithm*}[!t]
	\caption{TopKMintra\_update\_act}
    \label{algo:TopKMintraUpdateAct}
	\begin{algorithmic}[1]
  		\State $a-list \gets$ Scan PRM to find activities for i-concatenation
		\For{each $act \in a-list$}
		\State $Node \gets Node + I(act)$ 
		\State Compute $Node.relevance$ and $Node.PTR$ using PRM 
		\If{$(Node.relevance >=  threshold)$}
		\State Insert $Node.seq$ in $TopKList$ maintaining size at-most $k$
		\State Update $threshold$
		\EndIf
		\If{$(Node.PTR >=  threshold)$}
		\State Call $TopKMintra\_update\_act(Node,PRM,Act,k,TopKList, tail(a-list))$
		\State Call $TopKMintra\_update\_s-loc(Node,PRM,Act,k,TopKList)$
		\EndIf
		\EndFor
	\end{algorithmic}
\end{algorithm*}

\begin{algorithm*}[!t]
	\caption{TopKMintra\_update\_x-loc}
    \label{algo:TopKMintraUpdate-x-loc}
	\begin{algorithmic}[1]
  		\State $x\_loc-list \gets$ Scan PRM to find location for x-concatenation
  		\State Perform width pruning to remove Unpromising items ($u\_loc$) from $x\_loc-list$
		\For{each $lox \in x\_loc-list$}
		\State $Node \gets Node + X(loc)$ 
		\State Compute $Node.PTR$ using PRM 
		\If{$(Node.PTR >=  threshold)$}
		\State Call $TopKMintra\_update\_act(Node,PRM,Act,k,TopKList, empty-list)$
		\State Call $TopKMintra\_update\_\overline{x}-loc(Node,PRM,Act,k,TopKList)$
		\EndIf
		\EndFor
	\end{algorithmic}
\end{algorithm*}


\section{Experimental Evaluation}
\label{sec:exp}
\begin{table}
	
	\begin{center}
		\caption{$ Characteristics \, of \, real \, datasets$ \label{Tab:datasets}}
		\resizebox{\textwidth}{!}{\begin{tabular}{ l l  l l l l}
			\hline
			\bfseries{Dataset} & \bfseries{\#Trajectories} & \bfseries{Avg. trajectory length (A)}& \bfseries{\#Locations (D)} & \bfseries{\#Activity (D)} & \bfseries{Max. trajectory length}  \\ \hline
			TDrive & 10,219 & 9.6 & 15,481 & 631 & 10  \\
			FourSquare & 10,567 & 8.2 & 13,946 & 631 & 12 \\
			 \hline
		\end{tabular}}
	\end{center}
\end{table}
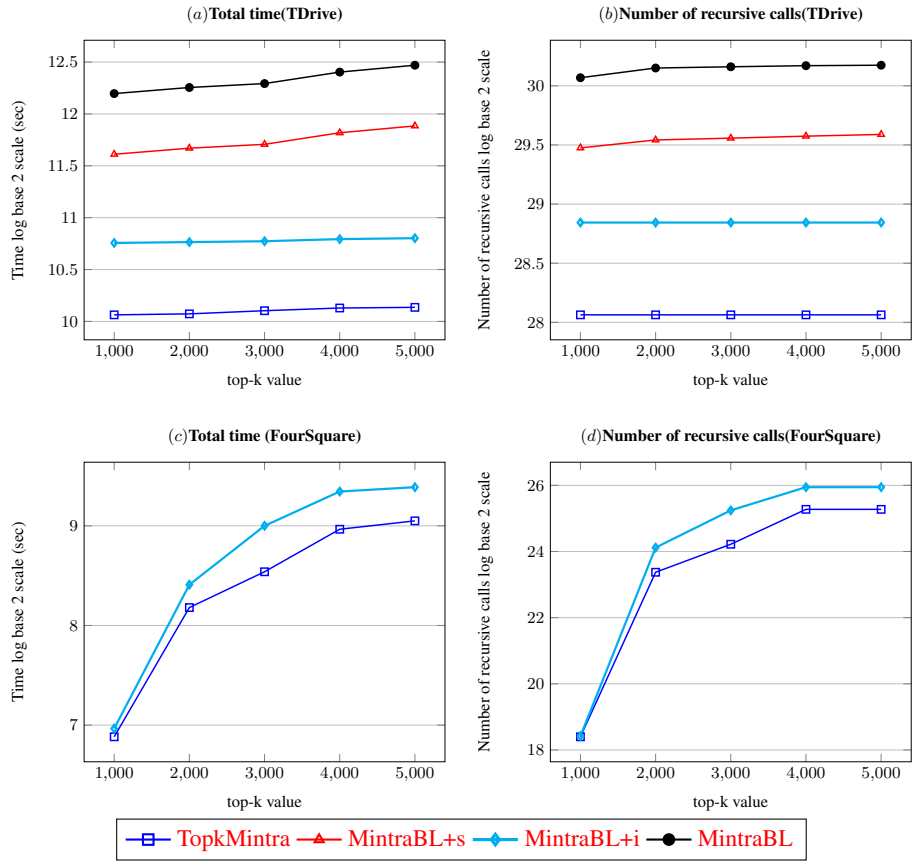
\begin{figure}[tbp]
	\begin{center}
		\begin{tabular}{cc}

			\begin{minipage}{.5\textwidth}
				\begin{tikzpicture}[scale=0.7]
				\begin{axis}[
				title={$(a)$\textbf{Total time(TDrive)}},
				xlabel={top-k value },
				ylabel={Time log base 2 scale (sec)},
				enlarge y limits=true,
				ymajorgrids=true,
				legend to name=t1,
				legend style={at={($(0,0)+(1cm,1cm)$)},legend columns=4,fill=none,draw=black,anchor=center, align=center},
				]
				
				\addplot
				[
				color=blue,
				mark=square,
				style=thick,
				]
				coordinates {
					(1000,{log2(1070)})(2000,{log2(1077)})(3000,{log2(1100)})(4000,{log2(1120)})(5000,{log2(1125)})
				};
				\addlegendentry{TopkMintra}
				\addplot
				[
				color=red,
				mark=triangle,
				style=thick,
				]
				coordinates {
					(1000,{log2(3129)})(2000,{log2(3259)})(3000,{log2(3344)})(4000,{log2(3612)})(5000,{log2(3782)})
				};    
				\addlegendentry{MintraBL+s}
				\addplot
				[
				color=cyan,
				mark=diamond,
				style=very thick,
				]
				coordinates {
					(1000,{log2(1730)})(2000,{log2(1740)})(3000,{log2(1750)})(4000,{log2(1775)})(5000,{log2(1786)})
				};    
				\addlegendentry{MintraBL+i}
				\addplot
				[
				color=black,
				mark=*,
				style=thick,
				]
				coordinates {
					(1000,{log2(4693)})(2000,{log2(4888)})(3000,{log2(5018)})(4000,{log2(5418)})(5000,{log2(5673)})
				};    
				\addlegendentry{MintraBL}

				\end{axis}
				
				\end{tikzpicture}
			\end{minipage}
			
			&
			
			\begin{minipage}{.5\textwidth}
				\begin{tikzpicture}[scale=0.7]
				\begin{axis}[
				title={$(b)$\textbf{Number of recursive calls(TDrive)}},
				xlabel={top-k value },
				ylabel={Number of recursive calls log base 2 scale},
				enlarge y limits=true,
				ymajorgrids=true,
				]
				
				\addplot
				[
				color=blue,
				mark=square,
				style=thick,
				]
				coordinates {
					(1000,{log2(280489663)})(2000,{log2(280489763)})(3000,{log2(280489863)})(4000,{log2(280489963)})(5000,{log2(280489980)})
				};
				\addplot
				[
				color=red,
				mark=triangle,
				style=thick,
				]
				coordinates {
					(1000,{log2(746369331)})(2000,{log2(781850069)})(3000,{log2(790342520)})(4000,{log2(799582850)})(5000,{log2(807498744)})
				};    
				\addplot
				[
				color=cyan,
				mark=diamond,
				style=very thick,
				]
				coordinates {
					(1000,{log2(481811638)})(2000,{log2(481811738)})(3000,{log2(481811838)})(4000,{log2(481811938)})(5000,{log2(481811968)})
				};    
				\addplot
				[
				color=black,
				mark=*,
				style=thick,
				]
				coordinates {
					(1000,{log2(1126038310)})(2000,{log2(1191974370)})(3000,{log2(1200619938)})(4000,{log2(1208384678)})(5000,{log2(1211651571)})
				};  
				
				\end{axis}
				
				\end{tikzpicture}
			\end{minipage}\\ \\

			\begin{minipage}{.5\textwidth}
				\begin{tikzpicture}[scale=0.7]
				\begin{axis}[
				title={$(c)$\textbf{Total time (FourSquare)}},
				xlabel={top-k value },
				ylabel={Time log base 2 scale (sec)},
				enlarge y limits=true,
				ymajorgrids=true,
				]
				
				\addplot[
				color=blue,
				mark=square,
				style=thick,
				]
				coordinates {
					(1000,{log2(118)})(2000,{log2(290)})(3000,{log2(372)})(4000,{log2(500)})(5000,{log2(530)})
				};
				\addplot[
				color=cyan,
				mark=diamond,
				style=very thick,
				]
				coordinates {
					(1000,{log2(125)})(2000,{log2(340)})(3000,{log2(512)})(4000,{log2(650)})(5000,{log2(670)})
				};    
				
				\end{axis}
				
				\end{tikzpicture}
			\end{minipage}
			
			&
			
			\begin{minipage}{.5\textwidth}
				\begin{tikzpicture}[scale=0.7]
				\begin{axis}[
				title={$(d)$\textbf{Number of recursive calls(FourSquare)}},
				xlabel={top-k value },
				ylabel={Number of recursive calls log base 2 scale},
				enlarge y limits=true,
				ymajorgrids=true,
				]
				
				\addplot[
				color=blue,
				mark=square,
				style=thick,
				]
				coordinates {
					(1000,{log2(345337)})(2000,{log2(10859397)})(3000,{log2(19531356)})(4000,{log2(40576850)})(5000,{log2(40576900)})
				};
				\addplot[
				color=cyan,
				mark=diamond,
				style=very thick,
				]
				coordinates {
					(1000,{log2(350357)})(2000,{log2(18212375)})(3000,{log2(39746629)})(4000,{log2(64576850)})(5000,{log2(64576900)})
				};

				\end{axis}
				
				\end{tikzpicture}
			\end{minipage}
			
		\end{tabular}
		\ref{t1}
		\caption{Performance evaluation on TDrive and FourSquare datasets for cell size (0.0020,0.0040). MintraBL+s and MintraBL didn't terminate on FourSquare dataset for more than 24 hours. Hence, we don't report their readings for FourSquare dataset.}
		\label{fig:tdrive_foursquareperf1}
	\end{center}
\end{figure}
In this section, we compare the performance of our proposed technique TopKMintra against three algorithms-- a baseline algorithm MintraBL (refer \ref{sec:mintraBL}) and two other algorithms designed using baseline algorithm, namely MintraBL+i, MintraBL+s. The difference between the three algorithms with that of TopKMintra is the inclusion or non-inclusion of the two pruning strategies as discussed in section \ref{sec:topKMintra}, i.e., {\it threshold initialization} (to overcome the limitation of initial threshold value as 0), and {\it threshold update} (to optimally update threshold at each step using PTR based sorting order). In the baseline algorithm, none of the two strategies are included, whereas in MintraBL+i only threshold initialization strategy is included, and in MintraBL+s only threshold update strategy is included. We have implemented our algorithms in Java, and all the experiments were conducted on a Windows 2012 Server with Intel Xeon(R) CPU of 2.00 GHz and 64 GB RAM. 

We conduct experiments on TDrive \cite{tdrive_2,tdrive_1} and FourSquare \cite{ankita17} trajectory datasets. The characteristics of the datasets are shown in Table \ref{Tab:datasets}. The TDrive dataset contains one-week trajectories of 10,357 taxis. We preprocess the datasets to construct anonymized user trajectories with both $k$-anonymity and $l$-diversity set to 3. We have used HaverSine method to compute the distance between two GPS coordinates for any comparisons. Since we extract common patterns over anonymized data, meaningful patterns can be extracted only if there is enough overlap between the trajectories. This requires filtering of trajectories for a fixed region to find enough patterns. To incorporate this in TDrive dataset, we have considered the taxi locations with longitude and latitude between (115 to 120) and (37 to 42). For the specified region, we have considered taxi trajectories of length at most ten where every location of the trajectory is a stoppage, i.e., the taxi has stopped at the location for the considerate amount of time. As a preprocessing step, we discretized the region into $1000 \times 1000$ square cells and mapped the taxi trajectories into the corresponding cell-trajectories. To create  activity-trajectory data, some of the cell locations of the discretized region are assigned a randomly chosen activity, and the same is associated with the corresponding cell-trajectories. The extracted cell-trajectories are anonymized using the $k$-anonymity and $l$-diversity as specified above. 

For FourSquare dataset, check-in locations with longitude and latitude between (35 to 36) and (139 to 140) are selected. At most twelve points are extracted for every trajectory and mapped to a $200 \times 200$ square cells. Every check-in location in FourSquare dataset has an activity associated with it. This cell-trajectory data is anonymized using $k$-anonymity and $l$-diversity. 

\begin{figure}[tbp]
	\begin{center}
		\begin{tabular}{cc}

			\begin{minipage}{.5\textwidth}
				\begin{tikzpicture}[scale=0.7]
				\begin{axis}[
				title={$(a)$\textbf{Memory consumed(TDrive)}},
				xlabel={top-k },
				ylabel={Memory (MB) log base 2 scale},
				enlarge y limits=true,
				ymajorgrids=true,
				legend to name=t2,
				legend style={at={($(0,0)+(1cm,1cm)$)},legend columns=4,fill=none,draw=black,anchor=center, align=center},
				]
				
				\addplot
				[
				color=blue,
				mark=square,
				style=thick,
				]
				coordinates {
					(1000,{log2(5963)})(2000,{log2(5964)})(3000,{log2(5966)})(4000,{log2(5968)})(5000,{log2(5970)})
				};
				\addlegendentry{TopkMintra}
				\addplot
				[
				color=red,
				mark=triangle,
				style=thick,
				]
				coordinates {
					(1000,{log2(6004)})(2000,{log2(6008)})(3000,{log2(6010)})(4000,{log2(6012)})(5000,{log2(6015)})
				};    
				\addlegendentry{MintraBL+s}
				\addplot
				[
				color=cyan,
				mark=diamond,
				style=very thick,
				]
				coordinates {
					(1000,{log2(5975)})(2000,{log2(5977)})(3000,{log2(5978)})(4000,{log2(5980)})(5000,{log2(5985)})
				};    
				\addlegendentry{MintraBL+i}
				\addplot
				[
				color=black,
				mark=*,
				style=thick,
				]
				coordinates {
					(1000,{log2(6020)})(2000,{log2(6030)})(3000,{log2(6040)})(4000,{log2(6045)})(5000,{log2(6050)})
				};    
				\addlegendentry{MintraBL}

				\end{axis}
				
				\end{tikzpicture}
			\end{minipage}
			
			&
			
			\begin{minipage}{.5\textwidth}
				\begin{tikzpicture}[scale=0.7]
				\begin{axis}[
				title={$(b)$\textbf{Memory Consumed(FourSquare)}},
				xlabel={top-k },
				ylabel={Memory (MB) log base 2 scale},
				enlarge y limits=true,
				ymajorgrids=true,
				]
				
				\addplot
				[
				color=blue,
				mark=square,
				style=thick,
				]
				coordinates {
					(1000,{log2(1162)})(2000,{log2(1505)})(3000,{log2(1600)})(4000,{log2(1958)})(5000,{log2(2022)})
				};
				\addplot
				[
				color=cyan,
				mark=diamond,
				style=very thick,
				]
				coordinates {
					(1000,{log2(1241)})(2000,{log2(1743)})(3000,{log2(2088)})(4000,{log2(2292)})(5000,{log2(2300)})
				};

				\end{axis}
				
				\end{tikzpicture}
			\end{minipage}			
		\end{tabular}
		\ref{t2}
		\caption{Memory consumption on TDrive and FourSquare datasets. MintraBL+s and MintraBL didn't terminate on FourSquare dataset for more than 24 hours. Hence, we don't report their readings for FourSquare dataset.}
		\label{fig:tdrive_foursquarememory}
	\end{center}
\end{figure}
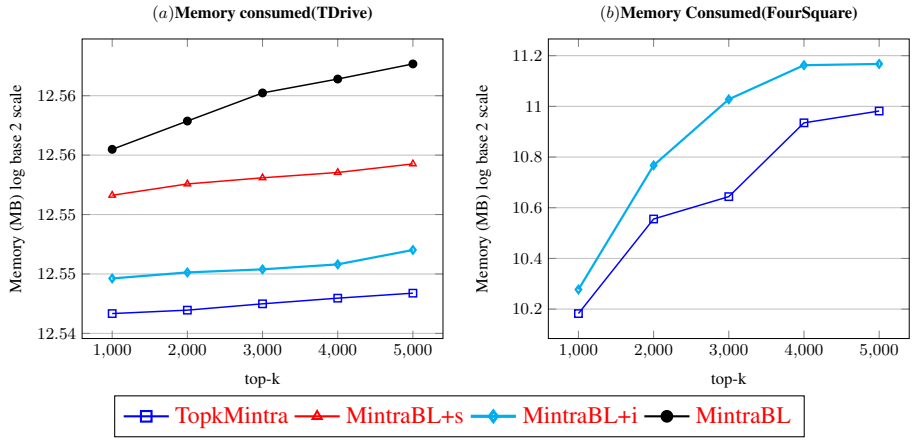

\subsection{Performance Evaluation}
We study the performance gain of ToKMintra with the other three Algorithm $MintraBL$, $MintraBL+i$ and $MintraBL+s$ in terms of absolute execution time, candidate set size and main memory for different top-k values. We also study the effect of dataset-size, trajectory-length, cell size, $k$-anonymity, and $l$-diversity over execution time and the number of candidate trajectory-patterns generated measured by the number of recursive calls.
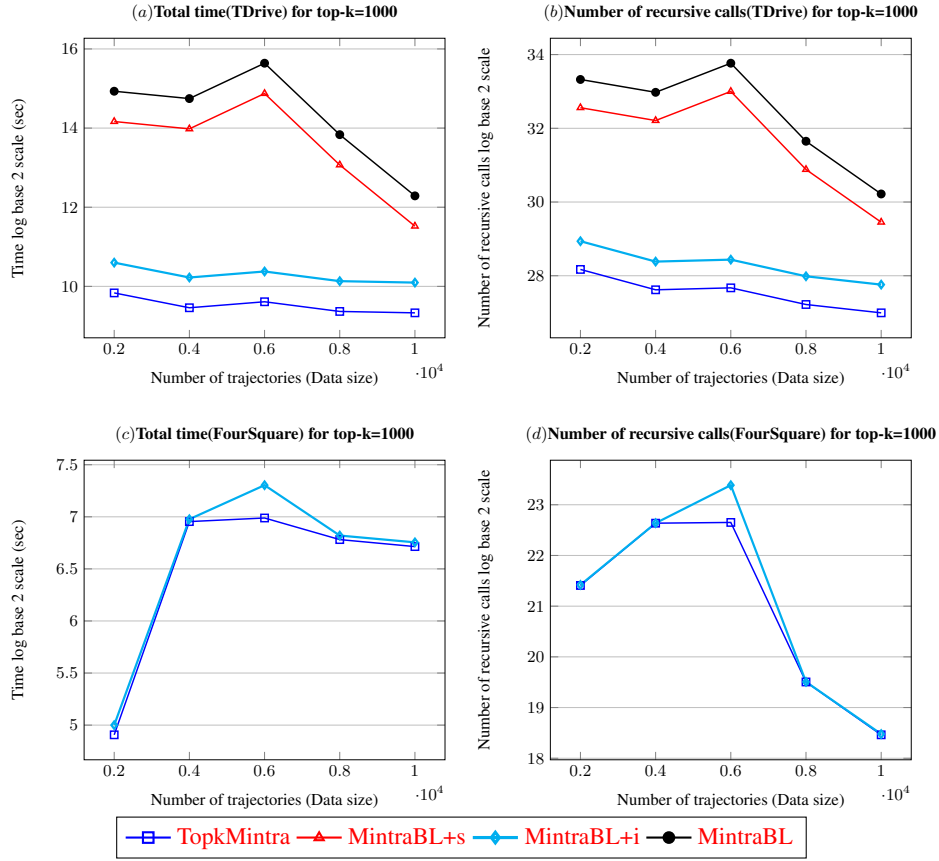
\begin{figure}[tbp]
	\begin{center}
		\begin{tabular}{cc}

			\begin{minipage}{.5\textwidth}
				\begin{tikzpicture}[scale=0.7]
				\begin{axis}[
				title={$(a)$\textbf{Total time(TDrive) for top-k=1000}},
				xlabel={Number of trajectories (Data size)},
				ylabel={Time log base 2 scale (sec)},
				enlarge y limits=true,
				ymajorgrids=true,
				legend to name=t3,
				legend style={at={($(0,0)+(1cm,1cm)$)},legend columns=4,fill=none,draw=black,anchor=center, align=center},
				]
				
				\addplot
				[
				color=blue,
				mark=square,
				style=thick,
				]
				coordinates {
					(2000,{log2(913)})(4000,{log2(704)})(6000,{log2(782)})(8000,{log2(660)})(10000,{log2(644)})
				};
				\addlegendentry{TopkMintra}
				\addplot
				[
				color=red,
				mark=triangle,
				style=thick,
				]
				coordinates {
					(2000,{log2(18381)})(4000,{log2(16162)})(6000,{log2(30029)})(8000,{log2(8591)})(10000,{log2(2940)})
				};    
				\addlegendentry{MintraBL+s}
				\addplot
				[
				color=cyan,
				mark=diamond,
				style=very thick,
				]
				coordinates {
					(2000,{log2(1552)})(4000,{log2(1196)})(6000,{log2(1329)})(8000,{log2(1122)})(10000,{log2(1094)})
				};    
				\addlegendentry{MintraBL+i}
				\addplot
				[
				color=black,
				mark=*,
				style=thick,
				]
				coordinates {
					(2000,{log2(31247)})(4000,{log2(27475)})(6000,{log2(51049)})(8000,{log2(14604)})(10000,{log2(4998)})
				};    
				\addlegendentry{MintraBL}

				\end{axis}
				
				\end{tikzpicture}
			\end{minipage}
			
			&
			
			\begin{minipage}{.5\textwidth}
				\begin{tikzpicture}[scale=0.7]
				\begin{axis}[
				title={$(b)$\textbf{Number of recursive calls(TDrive) for top-k=1000}},
				xlabel={Number of trajectories (Data size) },
				ylabel={Number of recursive calls log base 2 scale},
				enlarge y limits=true,
				ymajorgrids=true,
				]
				
				\addplot
				[
				color=blue,
				mark=square,
				style=thick,
				]
				coordinates {
					(2000,{log2(302303401)})(4000,{log2(206071616)})(6000,{log2(213772406)})(8000,{log2(156319859)})(10000,{log2(133620241)})
				};
				\addplot
				[
				color=red,
				mark=triangle,
				style=thick,
				]
				coordinates {
					(2000,{log2(6328251280)})(4000,{log2(4969828868)})(6000,{log2(8591866725)})(8000,{log2(1980262649)})(10000,{log2(734975901)})
				};    
				\addplot
				[
				color=cyan,
				mark=diamond,
				style=very thick,
				]
				coordinates {
					(2000,{log2(513915781)})(4000,{log2(350321747)})(6000,{log2(363413090)})(8000,{log2(265743760)})(10000,{log2(227154409)})
				};    
				\addplot
				[
				color=black,
				mark=*,
				style=thick,
				]
				coordinates {
					(2000,{log2(10758027176)})(4000,{log2(8448709075)})(6000,{log2(14606173432)})(8000,{log2(3366446504)})(10000,{log2(1249459031)})
				};  
				
				\end{axis}
				
				\end{tikzpicture}
			\end{minipage}\\ \\

			\begin{minipage}{.5\textwidth}
				\begin{tikzpicture}[scale=0.7]
				\begin{axis}[
				title={$(c)$\textbf{Total time(FourSquare) for top-k=1000}},
				xlabel={Number of trajectories (Data size) },
				ylabel={Time log base 2 scale (sec)},
				enlarge y limits=true,
				ymajorgrids=true,
				]
				
				\addplot[
				color=blue,
				mark=square,
				style=thick,
				]
				coordinates {
					(2000,{log2(30)})(4000,{log2(124)})(6000,{log2(127)})(8000,{log2(110)})(10000,{log2(105)})
				};
				\addplot[
				color=cyan,
				mark=diamond,
				style=very thick,
				]
				coordinates {
					(2000,{log2(32)})(4000,{log2(126)})(6000,{log2(158)})(8000,{log2(113)})(10000,{log2(108)})
				};    
				
				\end{axis}
				
				\end{tikzpicture}
			\end{minipage}
			
			&
			
			\begin{minipage}{.5\textwidth}
				\begin{tikzpicture}[scale=0.7]
				\begin{axis}[
				title={$(d)$\textbf{Number of recursive calls(FourSquare) for top-k=1000}},
				xlabel={Number of trajectories (Data size)},
				ylabel={Number of recursive calls log base 2 scale},
				enlarge y limits=true,
				ymajorgrids=true,
				]
				
				\addplot[
				color=blue,
				mark=square,
				style=thick,
				]
				coordinates {
					(2000,{log2(2783737)})(4000,{log2(6521337)})(6000,{log2(6587818)})(8000,{log2(743995)})(10000,{log2(361189)})
				};
				\addplot[
				color=cyan,
				mark=diamond,
				style=very thick,
				]
				coordinates {
					(2000,{log2(2804511)})(4000,{log2(6527968)})(6000,{log2(10964015)})(8000,{log2(744872)})(10000,{log2(365082)})
				};
				
				\end{axis}
				
				\end{tikzpicture}
			\end{minipage}
			
		\end{tabular}
		\ref{t3}
		\caption{Scalability on TDrive and FourSquare datasets for top-k=1000 and cell size (0.0020,0.0040). MintraBL+s and MintraBL didn't terminate on FourSquare dataset for more than 24 hours. Hence, we don't report their readings for FourSquare dataset.}
		\label{fig:tdrive_foursquare_scalability}
	\end{center}
\end{figure}
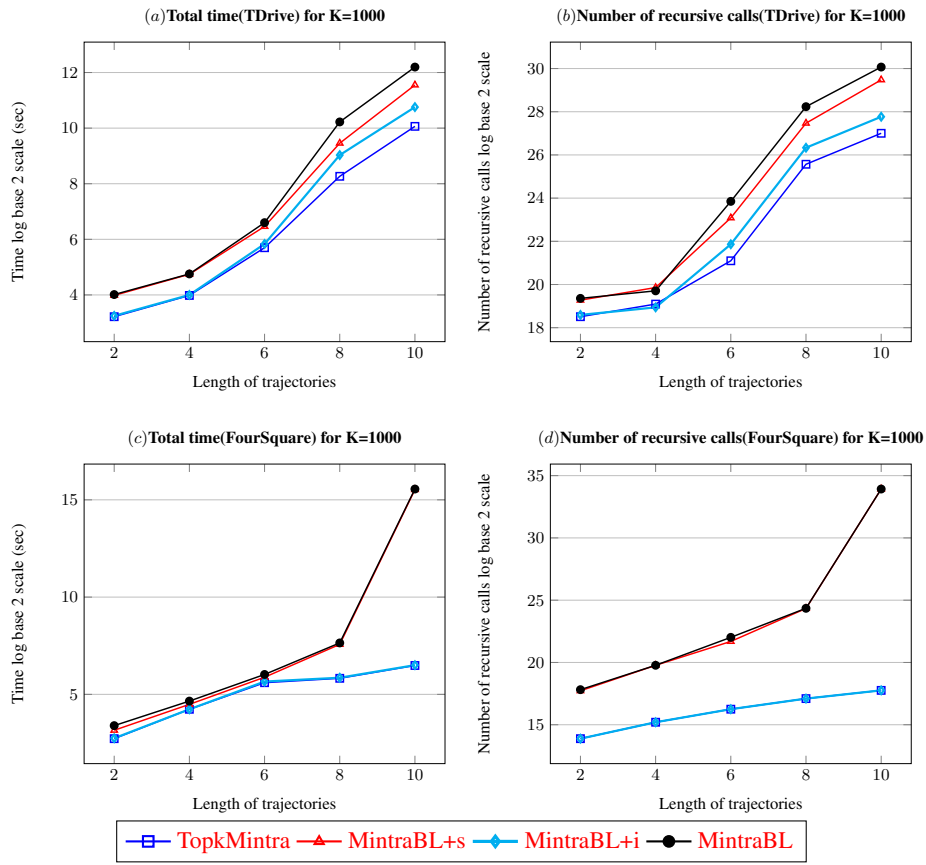
\begin{figure}[tbp]
	\begin{center}
		\begin{tabular}{cc}

			\begin{minipage}{.5\textwidth}
				\begin{tikzpicture}[scale=0.7]
				\begin{axis}[
				title={$(a)$\textbf{Total time(TDrive) for K=1000}},
				xlabel={Length of trajectories},
				ylabel={Time log base 2 scale (sec)},
				enlarge y limits=true,
				ymajorgrids=true,
				legend to name=t4,
				legend style={at={($(0,0)+(1cm,1cm)$)},legend columns=4,fill=none,draw=black,anchor=center, align=center},
				]
				
				\addplot
				[
				color=blue,
				mark=square,
				style=thick,
				]
				coordinates {
					(2,{log2(9.3)})(4,{log2(15.8)})(6,{log2(51.9)})(8,{log2(308)})(10,{log2(1070)})
				};
				\addlegendentry{TopkMintra}
				\addplot
				[
				color=red,
				mark=triangle,
				style=thick,
				]
				coordinates {
					(2,{log2(15.81)})(4,{log2(26.8)})(6,{log2(88.2)})(8,{log2(703)})(10,{log2(3000)})
				
				};    
				\addlegendentry{MintraBL+s}
				\addplot
				[
				color=cyan,
				mark=diamond,
				style=very thick,
				]
				coordinates {
						(2,{log2(9.5)})(4,{log2(16)})(6,{log2(57)})(8,{log2(523)})(10,{log2(1730)})
				};    
				\addlegendentry{MintraBL+i}
				\addplot
				[
				color=black,
				mark=*,
				style=thick,
				]
				coordinates {
					(2,{log2(16.15)})(4,{log2(27)})(6,{log2(97)})(8,{log2(1195)})(10,{log2(4693)})
				};    
				\addlegendentry{MintraBL}

				\end{axis}
				
				\end{tikzpicture}
			\end{minipage}
			
			&
			
			\begin{minipage}{.5\textwidth}
				\begin{tikzpicture}[scale=0.7]
				\begin{axis}[
				title={$(b)$\textbf{Number of recursive calls(TDrive) for K=1000}},
				xlabel={Length of trajectories},
				ylabel={Number of recursive calls log base 2 scale},
				enlarge y limits=true,
				ymajorgrids=true,
				]
				
				\addplot
				[
				color=blue,
				mark=square,
				style=thick,
				]
				coordinates {
					(2,{log2(373219)})(4,{log2(559419)})(6,{log2(2245201)})(8,{log2(49740656)})(10,{log2(134380583)})
				};
				\addplot
				[
				color=red,
				mark=triangle,
				style=thick,
				]
				coordinates {
					(2,{log2(634472)})(4,{log2(951012)})(6,{log2(8892131)})(8,{log2(185726718)})(10,{log2(746369331)})
				
				};    
				\addplot
				[
				color=cyan,
				mark=diamond,
				style=very thick,
				]
				coordinates {
						(2,{log2(395247)})(4,{log2(503345)})(6,{log2(3816841)})(8,{log2(84559115)})(10,{log2(228446991)})
				};    
				\addplot
				[
				color=black,
				mark=*,
				style=thick,
				]
				coordinates {
					(2,{log2(671919)})(4,{log2(855686)})(6,{log2(15116622)})(8,{log2(315735420)})(10,{log2(1126038310)})
				};  
				
				\end{axis}
				
				\end{tikzpicture}
			\end{minipage}\\ \\

			\begin{minipage}{.5\textwidth}
				\begin{tikzpicture}[scale=0.7]
				\begin{axis}[
				title={$(c)$\textbf{Total time(FourSquare) for K=1000}},
				xlabel={Length of trajectories },
				ylabel={Time log base 2 scale (sec)},
				enlarge y limits=true,
				ymajorgrids=true,
				]
				
				\addplot[
				color=blue,
				mark=square,
				style=thick,
				]
				coordinates {
					(2,{log2(6.6)})(4,{log2(18.8)})(6,{log2(48.5)})(8,{log2(56.7)})(10,{log2(89.5)})
				};
				\addplot
				[
				color=red,
				mark=triangle,
				style=thick,
				]
				coordinates {
				(2,{log2(8.9)})(4,{log2(22.4)})(6,{log2(58.9)})(8,{log2(190)})(10,{log2(47172)})
				};  
				\addplot[
				color=cyan,
				mark=diamond,
				style=very thick,
				]
				coordinates {
					(2,{log2(6.7)})(4,{log2(18.8)})(6,{log2(50.5)})(8,{log2(57.9)})(10,{log2(90.6)})
				};    
				\addplot[
				color=black,
				mark=*,
				style=thick,
				]
				coordinates {
				(2,{log2(10.5)})(4,{log2(25.17)})(6,{log2(65)})(8,{log2(200)})(10,{log2(48157)})
				};
				
				\end{axis}
				
				\end{tikzpicture}
			\end{minipage}
			
			&
			
			\begin{minipage}{.5\textwidth}
				\begin{tikzpicture}[scale=0.7]
				\begin{axis}[
				title={$(d)$\textbf{Number of recursive calls(FourSquare) for K=1000}},
				xlabel={Length of trajectories},
				ylabel={Number of recursive calls log base 2 scale},
				enlarge y limits=true,
				ymajorgrids=true,
				]
				
				\addplot[
				color=blue,
				mark=square,
				style=thick,
				]
				coordinates {
					(2,{log2(15105)})(4,{log2(37682)})(6,{log2(77884)})(8,{log2(140440)})(10,{log2(221872)})
				};
				\addplot
				[
				color=red,
				mark=triangle,
				style=thick,
				]
				coordinates {
				(2,{log2(216969)})(4,{log2(894526)})(6,{log2(3379907)})(8,{log2(21194837)})(10,{log2(16300557250)})
				};  
				\addplot[
				color=cyan,
				mark=diamond,
				style=very thick,
				]
				coordinates {
					(2,{log2(15108)})(4,{log2(37686)})(6,{log2(77897)})(8,{log2(140395)})(10,{log2(221915)})
				};
				\addplot[
				color=black,
				mark=*,
				style=thick,
				]
				coordinates {
					(2,{log2(230782)})(4,{log2(896527)})(6,{log2(4241720)})(8,{log2(21350145)})(10,{log2(16300557250)})
				};
				\end{axis}
				
				\end{tikzpicture}
			\end{minipage}
			
		\end{tabular}
		\ref{t4}
		\caption{Varying length of trajectory on TDrive and FourSquare datasets for top-k=1000 and cell size (0.0020,0.0040).}
		\label{fig:tdrive_foursquare_length}
	\end{center}
\end{figure}

\textbf{Effect of varying top-k value:} Fig.~\ref{fig:tdrive_foursquareperf1} shows the performance of the algorithms on TDrive, and FourSquare datasets for different top-k values. The number of trajectories and sequence length remain the same as specified in Table \ref{Tab:datasets}. It can be observed that the execution time and the number of candidates increase on increasing the top-k value. The threshold value depends upon the size of the TopKList. More candidates in the TopKList will lower the threshold value, and result in more exploration of the search space. We observe that TopKMintra performs the best and MintraBL is the slowest algorithm for both datasets. MintraBL+i performs better than MintraBL+s for TDrive dataset. We don't report the results for MintraBL+s and MintraBL on FourSquare dataset as the algorithms did not terminate for more than 24 hours. We also observed the memory requirement of our algorithms measured using Java-VisualVM which comes bundled with the JDK. The results are shown in Figure \ref{fig:tdrive_foursquarememory}. The results show a positive correlation between the number of candidates generated and the memory consumed by different algorithms.

\textbf{Effect of Scalability:} To measure the scalability of the two algorithms, we generate a
dataset containing different number of trajectories. The generated datasets have sequences in the range $2$K to
$10$K. The sequence length remains the same as specified in Table \ref{Tab:datasets} and top-k is set to 1000. The datasets were constructed by selecting the first 2K, 4K, 6K, and 8K transactions from TDrive and FourSquare datasets. The results are shown in Fig.\ref{fig:tdrive_foursquare_scalability}. The graph clearly shows that TopKMintra performs the best compared to other algorithms. It is interesting to observe that for both the data sets the running time and number of candidates increase with the number of transactions and then decrease. This is because we have fixed the region for extracting patterns, and as we increase the number of anonymized trajectories it increases the trajectory density in the fixed region.  TopKMintra can find top-k patterns quickly in denser regions as more trajectories overlap and the top-k threshold rises quickly during the mining process. This clearly indicates that patterns can be mined efficiently in the dense regions and, therefore, with more and more anonymized data mining of the pattern become easy.   


\textbf{Effect of Varying trajectory length:} In Fig.\ref{fig:tdrive_foursquare_length}, we show the effect of length of anonymized trajectory for different trajectory length. The number of transactions remains the same as specified in Table \ref{Tab:datasets} and top-k is set to 1000. It is clear from the graph that as we increase the length, the number of recursive calls and execution time increases with the length of the trajectory. It is trivial to observe that the search space increases with the trajectory length as more a-concatenation, s-concatenation, and l-concatenation operations are performed as inferred by the increase in candidates generated. We observe that TopKMintra and MintraBL+i perform significantly better than other variants for longer trajectory length. MintraBL+s performs similarly to TopKMintra for shorter length trajectories. However, it's performance degrades as trajectory length increases in terms of absolute time and number of recursive calls.

\textbf{Effect of Varying $k$-anonymity:} Figure \ref{fig:tdrive_kanon} shows the effect of varying $k$-anonymity on the performance of TopKMintra, and other algorithms. The size of anonymized regions associated with a trajectory increases to satisfy the increase in $k$-anonymity. Hence, the number of overlapping cells on the ground with the anonymized regions will increase leading to rise in the number of term-concatenations operations.  

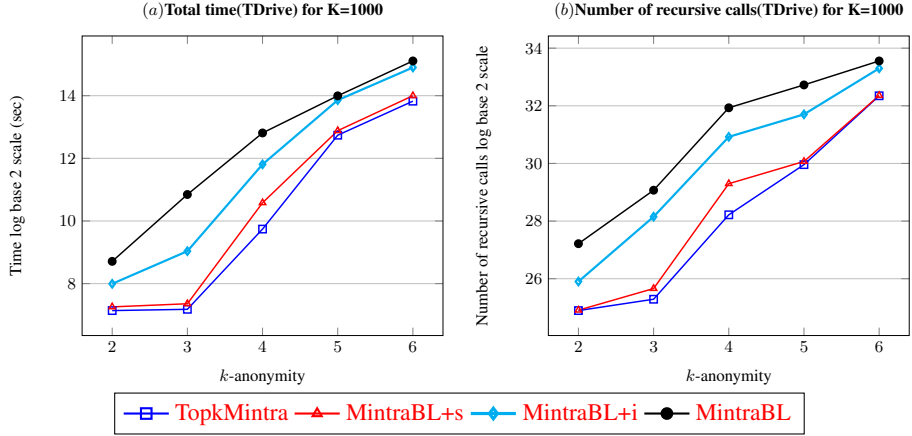
\begin{figure}[tbp]
	\begin{center}
		\begin{tabular}{cc}

			\begin{minipage}{.5\textwidth}
				\begin{tikzpicture}[scale=0.7]
				\begin{axis}[
				title={$(a)$\textbf{Total time(TDrive) for K=1000}},
				xlabel={$k$-anonymity},
				ylabel={Time log base 2 scale (sec)},
				enlarge y limits=true,
				ymajorgrids=true,
				legend to name=t5,
				legend style={at={($(0,0)+(1cm,1cm)$)},legend columns=4,fill=none,draw=black,anchor=center, align=center},
				]
				
				\addplot
				[
				color=blue,
				mark=square,
				style=thick,
				]
				coordinates {
					(2,{log2(141)})(3,{log2(145)})(4,{log2(856)})(5,{log2(6834)})(6,{log2(14512)})
				};
				\addlegendentry{TopkMintra}
				\addplot
				[
				color=red,
				mark=triangle,
				style=thick,
				]
				coordinates {
					(2,{log2(153)})(3,{log2(164)})(4,{log2(1532)})(5,{log2(7554)})(6,{log2(16377)})
				
				};    
				\addlegendentry{MintraBL+s}
				\addplot
				[
				color=cyan,
				mark=diamond,
				style=very thick,
				]
				coordinates {
						(2,{log2(255)})(3,{log2(525)})(4,{log2(3586)})(5,{log2(14845)})(6,{log2(30574)})
				};    
				\addlegendentry{MintraBL+i}
				\addplot
				[
				color=black,
				mark=*,
				style=thick,
				]
				coordinates {
					(2,{log2(419)})(3,{log2(1840)})(4,{log2(7184)})(5,{log2(16337)})(6,{log2(35357)})
				};    
				\addlegendentry{MintraBL}

				\end{axis}
				
				\end{tikzpicture}
			\end{minipage}
			
			&
			
			\begin{minipage}{.5\textwidth}
				\begin{tikzpicture}[scale=0.7]
				\begin{axis}[
				title={$(b)$\textbf{Number of recursive calls(TDrive) for K=1000}},
				xlabel={$k$-anonymity},
				ylabel={Number of recursive calls log base 2 scale},
				enlarge y limits=true,
				ymajorgrids=true,
				]
				
				\addplot
				[
				color=blue,
				mark=square,
				style=thick,
				]
				coordinates {
					(2,{log2(31197892)})(3,{log2(40966142)})(4,{log2(312192843)})(5,{log2(1043362114)})(6,{log2(5464754202)})
				};
				\addplot
				[
				color=red,
				mark=triangle,
				style=thick,
				]
				coordinates {
					(2,{log2(31500107)})(3,{log2(53002193)})(4,{log2(660020155)})(5,{log2(1126008298)})(6,{log2(5497030432)})
				
				};    
				\addplot
				[
				color=cyan,
				mark=diamond,
				style=very thick,
				]
				coordinates {
						(2,{log2(62699997)})(3,{log2(298160783)})(4,{log2(2035649808)})(5,{log2(3497030432)})(6,{log2(10569732122)})
				};    
				\addplot
				[
				color=black,
				mark=*,
				style=thick,
				]
				coordinates {
					(2,{log2(155869716)})(3,{log2(564056149)})(4,{log2(4095844015)})(5,{log2(7095832041)})(6,{log2(12633038681)})
				};  
				
				\end{axis}
				
				\end{tikzpicture}
			\end{minipage}
			
		\end{tabular}
		\ref{t5}
		\caption{Varying $k$-anonymity on TDrive dataset with trajectory length 5 for top-k=1000 and cell-size (0.0020,0.0040). }
		\label{fig:tdrive_kanon}
	\end{center}
\end{figure}

\textbf{Effect of Varying $l$-diversity:} A similar effect is observed on the performance of algorithms when $l$-diversity is increased as shown in Figure \ref{fig:tdrive_ldiv}. The size of anonymized region associated with each trajectory and the number of activities associated with an anonymized region increase to satisfy $l$-diversity. We observe that MintraBL+s performs better than MintraBL+i for higher values of $l$-diversity. We believe that the strategy to choose a better path to find top-k patterns has more impact on the performance of top-k algorithm compared to the pre-insertion strategy with increase in the search space.

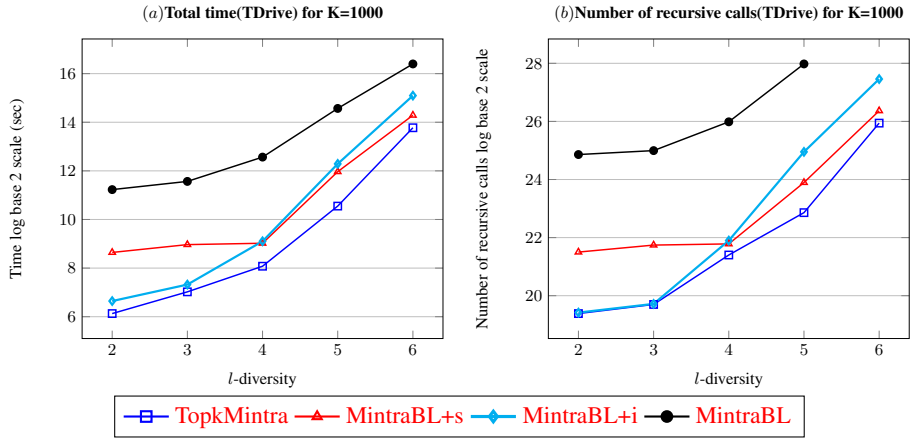
\begin{figure}[tbp]
	\begin{center}
		\begin{tabular}{cc}

			\begin{minipage}{.5\textwidth}
				\begin{tikzpicture}[scale=0.7]
				\begin{axis}[
				title={$(a)$\textbf{Total time(TDrive) for K=1000}},
				xlabel={$l$-diversity},
				ylabel={Time log base 2 scale (sec)},
				enlarge y limits=true,
				ymajorgrids=true,
				legend to name=t6,
				legend style={at={($(0,0)+(1cm,1cm)$)},legend columns=4,fill=none,draw=black,anchor=center, align=center},
				]
				
				\addplot
				[
				color=blue,
				mark=square,
				style=thick,
				]
				coordinates {
					(2,{log2(70)})(3,{log2(130)})(4,{log2(270)})(5,{log2(1500)})(6,{log2(14000)})
				};
				\addlegendentry{TopkMintra}
				\addplot
				[
				color=red,
				mark=triangle,
				style=thick,
				]
				coordinates {
					(2,{log2(400)})(3,{log2(500)})(4,{log2(520)})(5,{log2(4000)})(6,{log2(20000)})
				
				};    
				\addlegendentry{MintraBL+s}
				\addplot
				[
				color=cyan,
				mark=diamond,
				style=very thick,
				]
				coordinates {
					(2,{log2(100)})(3,{log2(160)})(4,{log2(550)})(5,{log2(5000)})(6,{log2(35000)})
				};    
				\addlegendentry{MintraBL+i}
				\addplot
				[
				color=black,
				mark=*,
				style=thick,
				]
				coordinates {
					(2,{log2(2400)})(3,{log2(3032)})(4,{log2(6065)})(5,{log2(24295)})(6,{log2(86400)})
				};    
				\addlegendentry{MintraBL}

				\end{axis}
				
				\end{tikzpicture}
			\end{minipage}
			
			&
			
			\begin{minipage}{.5\textwidth}
				\begin{tikzpicture}[scale=0.7]
				\begin{axis}[
				title={$(b)$\textbf{Number of recursive calls(TDrive) for K=1000}},
				xlabel={$l$-diversity},
				ylabel={Number of recursive calls log base 2 scale},
				enlarge y limits=true,
				ymajorgrids=true,
				]
				
				\addplot
				[
				color=blue,
				mark=square,
				style=thick,
				]
				coordinates {
					(2,{log2(684633)})(3,{log2(850721)})(4,{log2(2769029)})(5,{log2(7610053)})(6,{log2(64440212)})
				};
				\addplot
				[
				color=red,
				mark=triangle,
				style=thick,
				]
				coordinates {
					(2,{log2(2969029)})(3,{log2(3509029)})(4,{log2(3609029)})(5,{log2(15610053)})(6,{log2(86440212)})
				
				};    
				\addplot
				[
				color=cyan,
				mark=diamond,
				style=very thick,
				]
				coordinates {
						(2,{log2(700721)})(3,{log2(862567)})(4,{log2(3909029)})(5,{log2(32440212)})(6,{log2(184440212)})
				};    
				\addplot
				[
				color=black,
				mark=*,
				style=thick,
				]
				coordinates {
					(2,{log2(30440212)})(3,{log2(33440212)})(4,{log2(66440212)})(5,{log2(264440212)})
				};  
				
				\end{axis}
				
				\end{tikzpicture}
			\end{minipage}
			
		\end{tabular}
		\ref{t6}
		\caption{Varying $l$-diversity on TDrive dataset with trajectory length 5 for top-k=1000 and cell size (0.0020,0.0040). MintraBL didn't terminate for $l$-diversity equal to 6 for more than 24 hours.}
		\label{fig:tdrive_ldiv}
	\end{center}
\end{figure}

\textbf{Effect of Varying cell size:} The total execution time and number of candidates generated by different algorithms increase with the decrease in cell size as shown in Figure \ref{fig:tdrive_cellsize}. The number of overlapping cells with each anonymized region of the trajectory increase when the ground is divided into more fine-grained cells i.e. the cell size is decreased. In a nutshell, the size of the search space measured by the number of generated candidates govern the execution time taken by pattern mining algorithms.

\begin{figure}[tbp]
	\begin{center}
		\begin{tabular}{cc}

			\begin{minipage}{.5\textwidth}
				\begin{tikzpicture}[scale=0.7]
				\begin{axis}[
				title={$(a)$\textbf{Total time(TDrive) for top-k=1000}},
				xlabel={Cell size},
				ylabel={Time log base 2 scale (sec)},
				enlarge y limits=true,
				ymajorgrids=true,
				legend to name=t7,
				legend style={at={($(0,0)+(1cm,1cm)$)},legend columns=4,fill=none,draw=black,anchor=center, align=center},
				]
				
				\addplot
				[
				color=blue,
				mark=square,
				style=thick,
				]
				coordinates {
					(1,{log2(127)})(2,{log2(256)})(3,{log2(400)})(4,{log2(1869)})(5,{log2(2871)})
				};
				\addlegendentry{TopkMintra}
				\addplot
				[
				color=red,
				mark=triangle,
				style=thick,
				]
				coordinates {
					(1,{log2(337)})(2,{log2(507)})(3,{log2(800)})(4,{log2(2600)})(5,{log2(3578)})
				
				};    
				\addlegendentry{MintraBL+s}
				\addplot
				[
				color=cyan,
				mark=diamond,
				style=very thick,
				]
				coordinates {
						(1,{log2(569)})(2,{log2(780)})(3,{log2(1380)})(4,{log2(3800)})(5,{log2(6485)})
				};    
				\addlegendentry{MintraBL+i}
				\addplot
				[
				color=black,
				mark=*,
				style=thick,
				]
				coordinates {
					(1,{log2(2000)})(2,{log2(4500)})(3,{log2(9147)})(4,{log2(22257)})(5,{log2(32589)})
				};    
				\addlegendentry{MintraBL}

				\end{axis}
				
				\end{tikzpicture}
			\end{minipage}
			
			&
			
			\begin{minipage}{.5\textwidth}
				\begin{tikzpicture}[scale=0.7]
				\begin{axis}[
				title={$(b)$\textbf{Number of recursive calls(TDrive) for top-k=1000}},
				xlabel={Cell size},
				ylabel={Number of recursive calls log base 2 scale},
				enlarge y limits=true,
				ymajorgrids=true,
				]
				
				\addplot
				[
				color=blue,
				mark=square,
				style=thick,
				]
				coordinates {
					(1,{log2(7417789)})(2,{log2(12675242)})(3,{log2(71973191)})(4,{log2(584969234)})(5,{log2(1184909000)})
				};
				\addplot
				[
				color=red,
				mark=triangle,
				style=thick,
				]
				coordinates {
					(1,{log2(14421958)})(2,{log2(51365390)})(3,{log2(91365390)})(4,{log2(769938468)})(5,{log2(1442195645)})
				
				};    
				\addplot
				[
				color=cyan,
				mark=diamond,
				style=very thick,
				]
				coordinates {
						(1,{log2(35678247)})(2,{log2(142511240)})(3,{log2(222511240)})(4,{log2(1469938468)})(5,{log2(1856939850)})
				};    
				\addplot
				[
				color=black,
				mark=*,
				style=thick,
				]
				coordinates {
					(1,{log2(3604912535)})(2,{log2(6604912835)})(3,{log2(12204913532)})(4,{log2(18203912539)})(5,{log2(22203912539)})
				};  
				
				\end{axis}
				
				\end{tikzpicture}
			\end{minipage}
			
		\end{tabular}
		\ref{t7}
		\caption{Varying cell size on TDrive dataset with trajectory length 5 for top-k=1000. The cell size is decreased along the x-axis from left to right. Cell size for our experiments are: (0.0020,0.0040), (0.0018,0.0036), (0.0016, 0.0032), (0.0014, 0.0028), and (0.0012, 0.0024).}
		\label{fig:tdrive_cellsize}
	\end{center}
\end{figure}
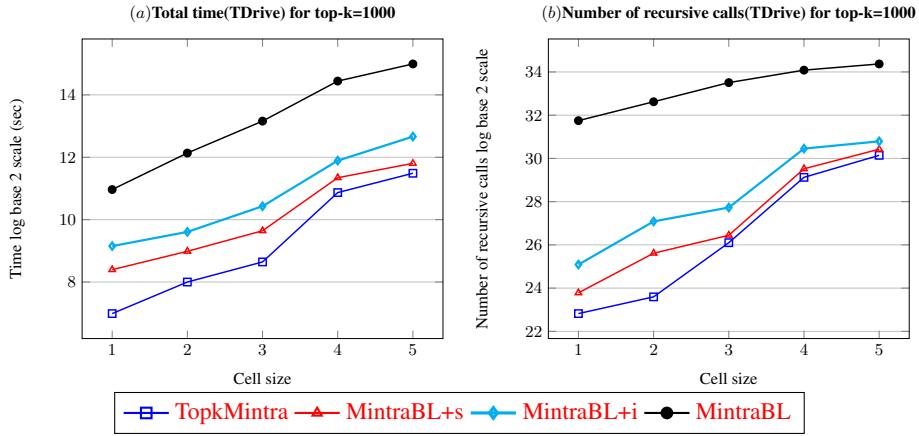

\section{Conclusion}
\label{sec:conclusion}
This work studies a problem of pattern-mining over an anonymized database of annotated trajectories. We introduce an approach TopKMintra that mine top-k patterns from the anonymized data which is encoded into a wLAS-sequence form. TopKMintra is a pattern-growth based sequential pattern-mining algorithm which explores two-dimensional wLAS-sequence data by restricting duplicate pattern generation. To improve the efficiency of TopKMintra we have adopted two efficiency strategies, namely threshold initialization (TI) and threshold updation (TU), and two pruning strategies, namely width pruning, and depth pruning. Since there exists no work which discusses pattern mining over anonymized data, we compare TopKMintra with an adapted version of USpan (namely $MintraBL$). We demonstrate through experiments that, though inherently it is difficult to extract correlated information from an anonymized database, TopKMintra finds the top-k most relevant trajectories efficiently as compared to the baseline algorithm in term of execution time, memory usage and size of the search tree. We have also shown that TopKMintra is better scalable to the data size. 
With increasingly more geo-tagged time-series data day-by-day, much of which is anonymized, we believe that our top-k framework can be adopted for practical usage. As a possible extension to this work, we suggest associating a weight with activities in the wLAS-sequence data. The weight associated with activities may break the information loss about its presence in the anonymized region, due to which we assumed activities to be uniformly presence over the region. Two types of weight can be associated-- a global weight capturing the popularity of various activities and a local weight associated with the (location, activity) pair capturing the rank of an outlet. Associating weight with the activity will increase the quality of the mined pattern. This work shows that meaningful patterns from the anonymized data can be mined efficiently. We hope to generate interest in the data mining research community for more and better ways to extract information from anonymized databases.

\section{Conflict of Interest}
The authors declare that they have no conflict of interest.

\bibliographystyle{unsrt}
\bibliography{pattern_mining_bib}

\begin{thebibliography}{10}

\bibitem{ardagna07}
C.~A. Ardagna, M.~Cremonini, E.~Damiani, S.~De~Capitani di~Vimercati, and
  P.~Samarati.
\newblock Location privacy protection through obfuscation-based techniques.
\newblock In {\em Proceedings of the 21st Annual IFIP WG 11.3 Working
  Conference on Data and Applications Security}, pages 47--60, Berlin,
  Heidelberg, 2007. Springer-Verlag.

\bibitem{bamba08}
Bhuvan Bamba, Ling Liu, Peter Pesti, and Ting Wang.
\newblock Supporting anonymous location queries in mobile environments with
  privacygrid.
\newblock In {\em Proceedings of the 17th International Conference on World
  Wide Web}, WWW '08, pages 237--246, New York, NY, USA, 2008. ACM.

\bibitem{beresford03}
A.~R. Beresford and F.~Stajano.
\newblock Location privacy in pervasive computing.
\newblock {\em IEEE Pervasive Computing}, 2:46--55, 01 2003.

\bibitem{chow07}
Chi-Yin Chow and Mohamed~F. Mokbel.
\newblock Enabling private continuous queries for revealed user locations.
\newblock In {\em Proceedings of the 10th International Conference on Advances
  in Spatial and Temporal Databases}, SSTD'07, pages 258--273, Berlin,
  Heidelberg, 2007. Springer-Verlag.

\bibitem{mokbel08}
Chi-Yin Chow, Mohamed~F. Mokbel, and Tian He.
\newblock Tinycasper: A privacy-preserving aggregate location monitoring system
  in wireless sensor networks.
\newblock In {\em Proceedings of the 2008 ACM SIGMOD International Conference
  on Management of Data}, SIGMOD '08, pages 1307--1310, New York, NY, USA,
  2008. ACM.

\bibitem{gruteser03}
Marco Gruteser and Dirk Grunwald.
\newblock Anonymous usage of location-based services through spatial and
  temporal cloaking.
\newblock In {\em Proceedings of the 1st International Conference on Mobile
  Systems, Applications and Services}, MobiSys '03, pages 31--42, New York, NY,
  USA, 2003. ACM.

\bibitem{Hoh05}
Baik Hoh and M.~Gruteser.
\newblock Protecting location privacy through path confusion.
\newblock In {\em First International Conference on Security and Privacy for
  Emerging Areas in Communications Networks (SECURECOMM'05)}, pages 194--205,
  Sep. 2005.

\bibitem{dummy05}
H.~Kido, Y.~Yanagisawa, and T.~Satoh.
\newblock An anonymous communication technique using dummies for location-based
  services.
\newblock In {\em International Conference on Pervasive Services}, pages
  88--97, July 2005.

\bibitem{Machanavajjhala07}
Ashwin Machanavajjhala, Daniel Kifer, Johannes Gehrke, and Muthuramakrishnan
  Venkitasubramaniam.
\newblock L-diversity: Privacy beyond k-anonymity.
\newblock {\em ACM Transactions on Knowledge Discovery from Data}, 1(1), March
  2007.

\bibitem{Samarati01}
P.~Samarati.
\newblock Protecting respondents identities in microdata release.
\newblock {\em IEEE Transactions on Knowledge and Data Engineering},
  13(6):1010--1027, Nov 2001.

\bibitem{xu08}
T.~Xu and Y.~Cai.
\newblock Exploring historical location data for anonymity preservation in
  location-based services.
\newblock In {\em IEEE 27th Conference on Computer Communications}, pages
  547--555, April 2008.

\bibitem{You07}
Tun-Hao You, Wen-Chih Peng, and Wang-Chien Lee.
\newblock Protecting moving trajectories with dummies.
\newblock In {\em Proceedings of the 2007 International Conference on Mobile
  Data Management}, MDM '07, pages 278--282. IEEE Computer Society, 2007.

\bibitem{Mintra16}
Anuj~S. Saxena, Vikram Goyal, and Debajyoti Bera.
\newblock Mintra: Mining anonymized trajectories with annotations.
\newblock In {\em Proceedings of the 20th International Database Engineering;
  Applications Symposium}, IDEAS '16, pages 105--114. ACM, 2016.

\bibitem{ZHENG12}
Vincent~W. Zheng, Yu~Zheng, Xing Xie, and Qiang Yang.
\newblock Towards mobile intelligence: Learning from gps history data for
  collaborative recommendation.
\newblock {\em Artificial Intelligence}, 184-185:17 -- 37, 2012.

\bibitem{BIERLAIRE08}
Michel Bierlaire and Emma Frejinger.
\newblock Route choice modeling with network-free data.
\newblock {\em Transportation Research Part C: Emerging Technologies},
  16(2):187 -- 198, 2008.

\bibitem{Rai07}
R.~K. Rai, Michael Balmer, Marcel Rieser, V.~S. Vaze, Stefan Schönfelder, and
  Kay~W. Axhausen.
\newblock Capturing human activity spaces: New geometries.
\newblock {\em Transportation Research Record}, 2021(1):70--80, 2007.

\bibitem{Zheng08}
Yu~Zheng, Quannan Li, Yukun Chen, Xing Xie, and Wei-Ying Ma.
\newblock Understanding mobility based on gps data.
\newblock In {\em Proceedings of the 10th International Conference on
  Ubiquitous Computing}, UbiComp '08, pages 312--321, New York, NY, USA, 2008.
  ACM.

\bibitem{Horvitz05}
Eric Horvitz, Johnson Apacible, Raman Sarin, and Lin Liao.
\newblock Prediction, expectation, and surprise: Methods, designs, and study of
  a deployed traffic forecasting service.
\newblock {\em CoRR}, abs/1207.1352, 2012.

\bibitem{HERRERA10}
Juan~C. Herrera, Daniel~B. Work, Ryan Herring, Xuegang~(Jeff) Ban, Quinn
  Jacobson, and Alexandre~M. Bayen.
\newblock Evaluation of traffic data obtained via gps-enabled mobile phones:
  The mobile century field experiment.
\newblock {\em Transportation Research Part C: Emerging Technologies},
  18(4):568 -- 583, 2010.

\bibitem{Goh12}
C.~Y. Goh, J.~Dauwels, N.~Mitrovic, M.~T. Asif, A.~Oran, and P.~Jaillet.
\newblock Online map-matching based on hidden markov model for real-time
  traffic sensing applications.
\newblock In {\em 15th International IEEE Conference on Intelligent
  Transportation Systems}, pages 776--781, Sep. 2012.

\bibitem{MONTOYA03}
Lorena Montoya.
\newblock Geo-data acquisition through mobile gis and digital video: an urban
  disaster management perspective.
\newblock {\em Environmental Modelling \& Software}, 18(10):869 -- 876, 2003.
\newblock Integrating Environmental Modelling and GI-Technology.

\bibitem{MANSOURIAN06}
A.~Mansourian, A.~Rajabifard, M.J.~Valadan Zoej, and I.~Williamson.
\newblock Using sdi and web-based system to facilitate disaster management.
\newblock {\em Computers \& Geosciences}, 32(3):303 -- 315, 2006.

\bibitem{Dai15}
J.~Dai, B.~Yang, C.~Guo, and Z.~Ding.
\newblock Personalized route recommendation using big trajectory data.
\newblock In {\em IEEE 31st International Conference on Data Engineering},
  pages 543--554, April 2015.

\bibitem{Kurashima10}
Takeshi Kurashima, Tomoharu Iwata, Go~Irie, and Ko~Fujimura.
\newblock Travel route recommendation using geotags in photo sharing sites.
\newblock In {\em Proceedings of the 19th ACM International Conference on
  Information and Knowledge Management}, CIKM '10, pages 579--588, New York,
  NY, USA, 2010. ACM.

\bibitem{nergiz2008towards}
Mehmet~Ercan Nergiz, Maurizio Atzori, and Yucel Saygin.
\newblock Towards trajectory anonymization: A generalization-based approach.
\newblock In {\em Proceedings of the SIGSPATIAL ACM GIS 2008 International
  Workshop on Security and Privacy in GIS and LBS}, SPRINGL '08, pages 52--61,
  New York, NY, USA, 2008. ACM.

\bibitem{OUR_ICISS}
Anuj~S. Saxena, Mayank Pundir, Vikram Goyal, and Debajyoti Bera.
\newblock Preserving location privacy for continuous queries on known route.
\newblock In {\em Proceedings of the 7th International Conference on
  Information Systems Security}, ICISS'11, pages 265--279, Berlin, Heidelberg,
  2011. Springer-Verlag.

\bibitem{gedik08}
B.~Gedik and L.~Liu.
\newblock Protecting location privacy with personalized k-anonymity:
  Architecture and algorithms.
\newblock {\em IEEE Transactions on Mobile Computing}, 7(1):1--18, Jan 2008.

\bibitem{Saxena13}
Anuj~Shanker Saxena, Vikram Goyal, and Debajyoti Bera.
\newblock Efficient enforcement of privacy for moving object trajectories.
\newblock In {\em International Conference on Information Systems Security},
  pages 360--374. Springer, 2013.

\bibitem{han2000freespan}
Jiawei Han, Jian Pei, Behzad Mortazavi-Asl, Qiming Chen, Umeshwar Dayal, and
  Mei-Chun Hsu.
\newblock Freespan: Frequent pattern-projected sequential pattern mining.
\newblock In {\em Proceedings of the Sixth ACM SIGKDD International Conference
  on Knowledge Discovery and Data Mining}, KDD '00, pages 355--359, New York,
  NY, USA, 2000. ACM.

\bibitem{han2004mining}
Jiawei Han, Jian Pei, Yiwen Yin, and Runying Mao.
\newblock Mining frequent patterns without candidate generation: A
  frequent-pattern tree approach.
\newblock {\em Data mining and knowledge discovery}, 8(1):53--87, January 2004.

\bibitem{pei2001prefixspan}
Jian Pei, Jiawei Han, Behzad Mortazavi-Asl, Helen Pinto, Qiming Chen, Umeshwar
  Dayal, and Meichun Hsu.
\newblock Prefixspan: Mining sequential patterns by prefix-projected growth.
\newblock In {\em Proceedings of the 17th International Conference on Data
  Engineering}, pages 215--224, Washington, DC, USA, 2001. IEEE Computer
  Society.

\bibitem{array_eswa}
Bac Le, Ut~Huynh, and Duy-Tai Dinh.
\newblock A pure array structure and parallel strategy for high-utility
  sequential pattern mining.
\newblock {\em Expert Systems with Applications}, 104:107 -- 120, 2018.

\bibitem{mmu}
Jerry Chun-Wei Lin, Jiexiong Zhang, and Philippe Fournier-Viger.
\newblock High-utility sequential pattern mining with multiple minimum utility
  thresholds.
\newblock In {\em Asia-Pacific Web (APWeb) and Web-Age Information Management
  (WAIM) Joint Conference on Web and Big Data}, pages 215--229. Springer, 2017.

\bibitem{tseng2010up}
Vincent~S. Tseng, Cheng-Wei Wu, Bai-En Shie, and Philip~S. Yu.
\newblock Up-growth: An efficient algorithm for high utility itemset mining.
\newblock In {\em Proceedings of the 16th ACM SIGKDD International Conference
  on Knowledge Discovery and Data Mining}, KDD '10, pages 253--262, New York,
  NY, USA, 2010. ACM.

\bibitem{inc_huspm}
Jun-Zhe Wang and Jiun-Long Huang.
\newblock On incremental high utility sequential pattern mining.
\newblock {\em ACM Transactions on Intelligent Systems and Technology},
  9(5):55:1--55:26, June 2018.

\bibitem{kais}
Jun-Zhe Wang, Jiun-Long Huang, and Yi-Cheng Chen.
\newblock On efficiently mining high utility sequential patterns.
\newblock {\em Knowledge and Information Systems}, 49(2):597--627, Nov 2016.

\bibitem{yin2012uspan}
Junfu Yin, Zhigang Zheng, and Longbing Cao.
\newblock Uspan: An efficient algorithm for mining high utility sequential
  patterns.
\newblock In {\em Proceedings of the 18th ACM SIGKDD International Conference
  on Knowledge Discovery and Data Mining}, KDD '12, pages 660--668, New York,
  NY, USA, 2012. ACM.

\bibitem{icdm}
J.~Yin, Z.~Zheng, L.~Cao, Y.~Song, and W.~Wei.
\newblock Efficiently mining top-k high utility sequential patterns.
\newblock In {\em 2013 IEEE 13th International Conference on Data Mining},
  pages 1259--1264, Dec 2013.

\bibitem{arya15}
Krishan~K Arya, Vikram Goyal, Shamkant~B Navathe, and Sushil Prasad.
\newblock Mining frequent spatial-textual sequence patterns.
\newblock In {\em International Conference on Database Systems for Advanced
  Applications}, pages 123--138. Springer, 2015.

\bibitem{cao2005mining}
Huiping Cao, N.~Mamoulis, and D.~W. Cheung.
\newblock Mining frequent spatio-temporal sequential patterns.
\newblock In {\em Fifth IEEE International Conference on Data Mining
  (ICDM'05)}, pages 8--pp, Nov 2005.

\bibitem{giannotti2007trajectory}
Fosca Giannotti, Mirco Nanni, Fabio Pinelli, and Dino Pedreschi.
\newblock Trajectory pattern mining.
\newblock In {\em Proceedings of the 13th ACM SIGKDD International Conference
  on Knowledge Discovery and Data Mining}, KDD '07, pages 330--339, New York,
  NY, USA, 2007. ACM.

\bibitem{tsoukatos2001efficient}
Ilias Tsoukatos and Dimitrios Gunopulos.
\newblock Efficient mining of spatiotemporal patterns.
\newblock In {\em Proceedings of the 7th International Symposium on Advances in
  Spatial and Temporal Databases}, SSTD '01, pages 425--442, London, UK, UK,
  2001. Springer-Verlag.

\bibitem{Agrawal1994}
Rakesh Agrawal and Ramakrishnan Srikant.
\newblock Fast algorithms for mining association rules in large databases.
\newblock In {\em Proceedings of the 20th International Conference on Very
  Large Data Bases}, VLDB '94, pages 487--499, 1994.

\bibitem{tks}
Philippe Fournier-Viger, Antonio Gomariz, Ted Gueniche, Esp{\'e}rance
  Mwamikazi, and Rincy Thomas.
\newblock Tks: efficient mining of top-k sequential patterns.
\newblock In {\em International Conference on Advanced Data Mining and
  Applications}, pages 109--120. Springer, 2013.

\bibitem{badge92}
Roy Want, Andy Hopper, Veronica Falc\~{a}o, and Jonathan Gibbons.
\newblock The active badge location system.
\newblock {\em ACM Transactions on Information Systems}, 10(1):91--102, January
  1992.

\bibitem{ji}
Jason~I. Hong and James~A. Landay.
\newblock An architecture for privacy-sensitive ubiquitous computing.
\newblock In {\em Proceedings of the 2Nd International Conference on Mobile
  Systems, Applications, and Services}, MobiSys '04, pages 177--189, New York,
  NY, USA, 2004. ACM.

\bibitem{MLY08}
M.~L. Yiu, C.~S. Jensen, X.~Huang, and H.~Lu.
\newblock Spacetwist: Managing the trade-offs among location privacy, query
  performance, and query accuracy in mobile services.
\newblock In {\em IEEE 24th International Conference on Data Engineering},
  pages 366--375, April 2008.

\bibitem{chow06}
Chi-Yin Chow, Mohamed~F. Mokbel, and Xuan Liu.
\newblock A peer-to-peer spatial cloaking algorithm for anonymous
  location-based service.
\newblock In {\em Proceedings of the 14th Annual ACM International Symposium on
  Advances in Geographic Information Systems}, GIS '06, pages 171--178, New
  York, NY, USA, 2006. ACM.

\bibitem{casper06}
Mohamed~F. Mokbel, Chi-Yin Chow, and Walid~G. Aref.
\newblock The new casper: Query processing for location services without
  compromising privacy.
\newblock In {\em Proceedings of the 32Nd International Conference on Very
  Large Data Bases}, VLDB '06, pages 763--774. VLDB Endowment, 2006.

\bibitem{hu09}
H.~{Hu} and J.~{Xu}.
\newblock Non-exposure location anonymity.
\newblock In {\em 2009 IEEE 25th International Conference on Data Engineering},
  pages 1120--1131, March 2009.

\bibitem{DewriRRW10}
R.~Dewri, I.~Ray, I.~Ray, and D.~Whitley.
\newblock Query m-invariance: Preventing query disclosures in continuous
  location-based services.
\newblock In {\em 2010 Eleventh International Conference on Mobile Data
  Management}, pages 95--104, May 2010.

\bibitem{Chow2011}
Chi-Yin Chow, Mohamed~F. Mokbel, Jie Bao, and Xuan Liu.
\newblock Query-aware location anonymization for road networks.
\newblock {\em GeoInformatica}, 15(3):571--607, Jul 2011.

\bibitem{GHINITA10}
Gabriel Ghinita, Keliang Zhao, Dimitris Papadias, and Panos Kalnis.
\newblock A reciprocal framework for spatial k-anonymity.
\newblock {\em Information Systems}, 35(3):299 -- 314, 2010.

\bibitem{lee11}
Byoungyoung Lee, Jinoh Oh, Hwanjo Yu, and Jong Kim.
\newblock Protecting location privacy using location semantics.
\newblock In {\em Proceedings of the 17th ACM SIGKDD International Conference
  on Knowledge Discovery and Data Mining}, KDD '11, pages 1289--1297, 2011.

\bibitem{Xue09}
Mingqiang Xue, Panos Kalnis, and Hung~Keng Pung.
\newblock Location diversity: Enhanced privacy protection in location based
  services.
\newblock In {\em Proceedings of the 4th International Symposium on Location
  and Context Awareness}, LoCA '09, pages 70--87, 2009.

\bibitem{agrawal1995mining}
R.~Agrawal and R.~Srikant.
\newblock Mining sequential patterns.
\newblock In {\em Proceedings of the Eleventh International Conference on Data
  Engineering}, pages 3--14, March 1995.

\bibitem{han2000mining}
Jiawei Han, Jian Pei, and Yiwen Yin.
\newblock Mining frequent patterns without candidate generation.
\newblock In {\em Proceedings of the 2000 ACM SIGMOD International Conference
  on Management of Data}, SIGMOD '00, pages 1--12, New York, NY, USA, 2000.
  ACM.

\bibitem{zaki2001spade}
Mohammed~J. Zaki.
\newblock Spade: An efficient algorithm for mining frequent sequences.
\newblock {\em Machine Learning}, 42(1-2):31--60, January 2001.

\bibitem{ahmed2009efficient}
C.~F. Ahmed, S.~K. Tanbeer, B.~Jeong, and Y.~Lee.
\newblock Efficient tree structures for high utility pattern mining in
  incremental databases.
\newblock {\em IEEE Transactions on Knowledge and Data Engineering},
  21(12):1708--1721, Dec 2009.

\bibitem{ahmed}
Chowdhury~Farhan Ahmed, Syed~Khairuzzaman Tanbeer, and Byeong-Soo Jeong.
\newblock A novel approach for mining high-utility sequential patterns in
  sequence databases.
\newblock {\em ETRI Journal}, 32(5):676--686, 2010.

\bibitem{crom}
O.~K. Alkan and P.~Karagoz.
\newblock Crom and huspext: Improving efficiency of high utility sequential
  pattern extraction.
\newblock {\em IEEE Transactions on Knowledge and Data Engineering},
  27(10):2645--2657, Oct 2015.

\bibitem{seq_rules}
Souleymane Zida, Philippe Fournier-Viger, Cheng-Wei Wu, Jerry Chun-Wei Lin, and
  Vincent~S. Tseng.
\newblock Efficient mining of high-utility sequential rules.
\newblock In Petra Perner, editor, {\em Machine Learning and Data Mining in
  Pattern Recognition}, pages 157--171, 2015.

\bibitem{koperski1995discovery}
Krzysztof Koperski and Jiawei Han.
\newblock Discovery of spatial association rules in geographic information
  databases.
\newblock In {\em Proceedings of the 4th International Symposium on Advances in
  Spatial Databases}, SSD '95, pages 47--66, Berlin, Heidelberg, 1995.
  Springer-Verlag.

\bibitem{chung2004evolutionary}
Fu-Lai Chung, Tak-Chung Fu, V.~Ng, and R.~W.~P. Luk.
\newblock An evolutionary approach to pattern-based time series segmentation.
\newblock {\em IEEE Transactions on Evolutionary Computation}, 8(5):471--489,
  Oct 2004.

\bibitem{Gidofalvi07}
G.~Gidofalvi, X.~Huang, and T.~B. Pedersen.
\newblock Privacy-preserving data mining on moving object trajectories.
\newblock In {\em International Conference on Mobile Data Management}, pages
  60--68, May 2007.

\bibitem{Abul08}
O.~Abul, F.~Bonchi, and M.~Nanni.
\newblock Never walk alone: Uncertainty for anonymity in moving objects
  databases.
\newblock In {\em IEEE 24th International Conference on Data Engineering},
  pages 376--385, April 2008.

\bibitem{kalnis07}
P.~{Kalnis}, G.~{Ghinita}, K.~{Mouratidis}, and D.~{Papadias}.
\newblock Preventing location-based identity inference in anonymous spatial
  queries.
\newblock {\em IEEE Transactions on Knowledge and Data Engineering},
  19(12):1719--1733, Dec 2007.

\bibitem{tdrive_2}
Jing Yuan, Yu~Zheng, Chengyang Zhang, Wenlei Xie, Xing Xie, Guangzhong Sun, and
  Yan Huang.
\newblock T-drive: Driving directions based on taxi trajectories.
\newblock In {\em Proceedings of the 18th SIGSPATIAL International Conference
  on Advances in Geographic Information Systems}, pages 99--108, 2010.

\bibitem{tdrive_1}
Jing Yuan, Yu~Zheng, Xing Xie, and Guangzhong Sun.
\newblock Driving with knowledge from the physical world.
\newblock In {\em Proceedings of the 17th ACM SIGKDD International Conference
  on Knowledge Discovery and Data Mining}, KDD '11, pages 316--324. ACM, 2011.

\bibitem{ankita17}
Ankita Likhyani, Srikanta Bedathur, and Deepak P.
\newblock Locate: Influence quantification for location promotion in
  location-based social networks.
\newblock In {\em Proceedings of the Twenty-Sixth International Joint
  Conference on Artificial Intelligence, {IJCAI-17}}, pages 2259--2265, 2017.

\end{thebibliography}
\end{document}